\tikzset{decoration={snake,amplitude=.5mm,segment length=3mm,post length=0.8mm,pre length=0mm},
	wrap/.style={
		line cap=round,
		#1,
		line width=21pt,
		opacity=0.3,
	},}
\definecolor{green}{rgb}{0,0.6,0}
\newcommand{\problemtitle}[1]{\gdef\@problemtitle{#1}}
\newcommand{\probleminput}[1]{\gdef\@probleminput{#1}}
\newcommand{\problemquestion}[1]{\gdef\@problemquestion{#1}}
	\par\addvspace{.5\baselineskip}
	\par\addvspace{.5\baselineskip}
\title{How to Play Optimally for Regular Objectives?}
\author{Patricia Bouyer}{Université Paris-Saclay, CNRS, ENS Paris-Saclay,\texorpdfstring{\\}{ }Laboratoire Méthodes Formelles, 91190, Gif-sur-Yvette, France}{}{https://orcid.org/0000-0002-2823-0911}{}
\author{Nathana\"el Fijalkow}{CNRS, LaBRI and Universit\'e de Bordeaux, France\texorpdfstring{\\}{ }University of Warsaw, Poland}{}{https://orcid.org/0000-0002-6576-4680}{}
\author{Mickael Randour}{F.R.S.-FNRS \& UMONS -- Université de Mons, Mons, Belgium}{}{https://orcid.org/0000-0001-8777-2385}{}
\author{Pierre Vandenhove}{F.R.S.-FNRS \& UMONS -- Université de Mons, Mons, Belgium \and Université Paris-Saclay, CNRS, ENS Paris-Saclay,\texorpdfstring{\\}{ }Laboratoire Méthodes Formelles, 91190, Gif-sur-Yvette, France}{}{https://orcid.org/0000-0001-5834-1068}{}
\authorrunning{P. Bouyer, N. Fijalkow, M. Randour, and P. Vandenhove}
\keywords{two-player games on graphs, strategy complexity, regular languages, finite-memory strategies, \NP-completeness}
\newcommand{\qedEx}{\lipicsEnd}
\newcommand{\IN}{\mathbb{N}}
\renewcommand{\epsilon}{\varepsilon}
\newcommand{\NP}{\textsf{NP}\xspace}
\newcommand{\emptyPth}{\ensuremath{\lambda}}
\newcommand{\hist}{\ensuremath{\gamma}}
\newcommand{\play}{\ensuremath{\pi}}
\newcommand{\clr}{\ensuremath{c}}
\newcommand{\colors}{\ensuremath{C}}
\newcommand{\col}{\ensuremath{\mathsf{col}}}
\newcommand{\colHatFin}{\ensuremath{\mathsf{col^*}}}
\newcommand{\colHatInf}{\ensuremath{\mathsf{col^\omega}}}
\newcommand{\s}{\ensuremath{v}}
\newcommand{\states}{\ensuremath{V}}
\newcommand{\edge}{\ensuremath{e}}
\newcommand{\edgeOut}{\ensuremath{\mathsf{out}}}
\newcommand{\edgeIn}{\ensuremath{\mathsf{in}}}
\newcommand{\edges}{\ensuremath{E}}
\newcommand{\arena}{\ensuremath{\mathcal{A}}}
\newcommand{\arenaFull}{\ensuremath{(\states, \states_1, \states_2, \edges)}}
\newcommand*{\inverse}[1]{#1^{-1}}
\newcommand{\game}{\ensuremath{\mathcal{G}}}
\newcommand{\gameFull}{\ensuremath{(\arena, \wc)}}
\newcommand{\strat}{\ensuremath{\sigma}}
\newcommand*{\player}[1]{\ensuremath{\mathcal{P}_{#1}}}
\newcommand{\Pone}{\ensuremath{\player{1}}}
\newcommand{\Ptwo}{\ensuremath{\player{2}}}
\newcommand{\Hists}{\ensuremath{\mathsf{Hists}}}
\newcommand{\Plays}{\ensuremath{\mathsf{Plays}}}
\newcommand{\memState}{\ensuremath{m}}
\newcommand{\memStates}{\ensuremath{M}}
\newcommand{\memUpd}{\ensuremath{\alpha_{\mathsf{upd}}}}
\newcommand{\memNxt}{\ensuremath{\alpha_{\mathsf{nxt}}}}
\newcommand{\memUpdHat}{\ensuremath{\memUpd^*}}
\newcommand{\memInit}{\ensuremath{\memState_{\mathsf{init}}}}
\newcommand{\memSkel}{\ensuremath{\mathcal{M}}}
\newcommand{\memSkelFull}{\ensuremath{(\memStates, \memInit, \memUpd)}}
\newcommand{\memSkelTriv}{\ensuremath{\memSkel_{\mathsf{triv}}}}
\newcommand*{\memPathsOn}[2]{\ensuremath{\Pi_{#1, #2}}}
\newcommand*{\reach}[1]{\ensuremath{\mathsf{Reach}(#1)}}
\newcommand*{\safe}[1]{\ensuremath{\mathsf{Safe}(#1)}}
\newcommand{\alphabet}{\ensuremath{\colors}}
\newcommand{\dfa}{\ensuremath{\mathcal{D}}}
\newcommand{\atmtnStates}{\ensuremath{Q}}
\newcommand{\atmtnState}{\ensuremath{q}}
\newcommand{\atmtnInit}{\ensuremath{\atmtnState_{\mathsf{init}}}}
\newcommand{\atmtnUpd}{\ensuremath{\delta}}
\newcommand{\atmtnUpdWord}{\ensuremath{\atmtnUpd^*}}
\newcommand*{\atmtnLang}[1]{\ensuremath{\mathcal{L}(#1)}}
\newcommand{\finalStates}{\ensuremath{F}}
\newcommand{\finalState}{\ensuremath{\atmtnState_{\mathsf{fin}}}}
\newcommand{\dfaFull}{\ensuremath{(\atmtnStates, \alphabet, \atmtnInit, \atmtnUpd, \finalStates)}}
\newcommand{\word}{\ensuremath{w}}
\newcommand{\graph}{\ensuremath{G}}
\newcommand{\vertices}{\ensuremath{V}}
\newcommand{\vertex}{\ensuremath{v}}
\newcommand{\uertex}{\ensuremath{u}}
\newcommand{\vORe}{\ensuremath{z}}
\newcommand{\letter}{\ensuremath{a}}
\newcommand{\graphFull}{\ensuremath{(\vertices, \edges)}}
\newcommand*{\graphAut}[1]{\ensuremath{\mathsf{Automaton}(#1)}}
\newcommand{\cycleGr}{\ensuremath{C}}
\newcommand{\inn}{\ensuremath{\mathsf{in}}}
\newcommand{\out}{\ensuremath{\mathsf{out}}}
\newcommand{\alp}{\ensuremath{\Sigma}}
\newcommand{\parti}{\ensuremath{\chain}}
\newcommand{\disjUnion}{\ensuremath{\uplus}}
\newcommand{\wc}{\ensuremath{W}} 
\newcommand*{\comp}[1]{\overline{#1}} 
\newcommand*{\card}[1]{\ensuremath{\lvert#1\rvert}}
\newcommand{\emptyWord}{\ensuremath{\varepsilon}}
\newcommand{\prefEq}{\ensuremath{\sim}}
\newcommand*{\eqClass}[1]{\ensuremath{[#1]}}
\newcommand{\prefOrd}{\ensuremath{\preceq}}
\newcommand{\invPrefOrd}{\ensuremath{\succeq}}
\newcommand{\strictPrefOrd}{\ensuremath{\prec}}
\newcommand{\strictInvPrefOrd}{\ensuremath{\succ}}
\newcommand*{\coReachable}[1]{\ensuremath{\chain_{#1}}}
\newcommand{\chain}{\ensuremath{\Gamma}}
\newcommand{\treeStrat}{\ensuremath{\arena^{\strat, \s}}}
\newcommand{\treeStratRestr}{\ensuremath{\treeStrat_{|{\atmtnLang{\dfa}}}}}
\newcommand{\treeStratRestrArg}[2]{\ensuremath{\arena^{#1}_{|#2}}}
\newcommand{\ord}{\ensuremath{\theta}}
\newcommand{\rank}{\ensuremath{\mathsf{rank}}}
\tikzstyle{rond}=[draw,circle,minimum height=6mm]
\tikzstyle{diamant}=[draw,diamond,minimum height=8mm,minimum width=8mm,aspect=1]
\tikzstyle{petitdiamant}=[draw,diamond,minimum height=6mm,minimum width=6mm,aspect=1]
\tikzstyle{carre}=[draw,minimum width=6mm,minimum height=6mm]
\begin{document}

	\maketitle

	\begin{abstract}
		This paper studies two-player zero-sum games played on graphs and makes contributions toward the following question: given an objective, how much memory is required to play optimally for that objective? We study regular objectives, where the goal of one of the two players is that eventually the sequence of colors along the play belongs to some regular language of finite words.
		We obtain different characterizations of the chromatic memory requirements for such objectives for both players, from which we derive complexity-theoretic statements: deciding whether there exist small memory structures sufficient to play optimally is \NP-complete for both players.
		Some of our characterization results apply to a more general class of objectives: topologically closed and topologically open sets.
	\end{abstract}

\section{Introduction} \label{sec:intro}
Games on graphs is a fundamental model in theoretical computer science for modeling systems involving competing agents.
Its applications include model-checking, program verification and synthesis, control theory, and reactive synthesis: in all cases, the system specification is turned into a winning objective for a player and the goal is to construct a winning strategy.
Some central results in the field state that for some objectives, there exist memoryless optimal strategies, meaning not requiring any memory.
For instance, the celebrated memoryless determinacy result for (infinite) parity games is a key ingredient in the modern proof of decidability of monadic second-order logic over infinite trees by Gurevich and Harrington~\cite{GH82}.

\subparagraph*{Memory requirements.}
However for many objectives, some memory is required; a central question is therefore, stated informally:
\begin{quote}
Given an objective, how much memory is required to play optimally for this objective?
\end{quote}
The first answers to this question, at the dawn of the study of games, were memory requirements for concrete objectives, such as Rabin objectives~\cite{Rab69}.
The work of Dziembowski, Jurdzi{\'n}ski, and Walukiewicz~\cite{DJW97} gave a computable characterization of memory requirements for the whole class of Muller objectives.
This triggered the following long-term research goal: characterizing the memory requirements for $\omega$-regular objectives.

\subparagraph*{Regular objectives.}
Many results have been obtained toward this research goal; we refer to the related works section in Section~\ref{sec:overview} for further details.
The most pressing open question in that direction is regular objectives, meaning the special case of $\omega$-regular objectives concerned with finite duration: in this setting, the objective is induced by a regular language over finite words and the goal of one of the players is that \textit{eventually} the sequence of colors along the play belongs to this language.
We call these \emph{regular reachability objectives}.
The opponent's objective is then to ensure that the sequence of colors \textit{never} belongs to the language, describing \emph{regular safety objectives}.

A first observation is that for such a regular (reachability or safety) objective, a deterministic finite automaton recognizing the regular language provides an upper bound on the memory requirements of both players.
Indeed, playing with the extra information from the automaton reduces the game to a standard reachability or safety game, for which no further memory is required to make optimal decisions.
Yet, as we will see, structures smaller than the minimal automaton recognizing the language may suffice for the players.

\subparagraph*{Chromatic memory.}
One of the many contributions of Kopczy{\'n}ski~\cite{KopThesis} in the study of memory for games on graphs is the notion of chromatic memory.
In this model, the memory states are updated only using the sequence of colors seen along a play, and in particular do not depend on the graph itself (as opposed to chaotic memory, which may use information from the graph in its updates).
Kopczy{\'n}ski conjectured~\cite{KopThesis} that for $\omega$-regular objectives, chromatic and chaotic memory requirements coincide; unfortunately, this does not hold, as recently proved by Casares~\cite{Cas22} (i.e., there are objectives for which the number of memory states required to play optimally in all arenas differs depending on the memory model).
In our study, we will see another counterexample using regular objectives.

\subparagraph*{Contributions.}
We study the chromatic memory requirements of both regular reachability and regular safety objectives.
For both cases, we give a combinatorial characterization of the memory structures sufficient to play optimally in all arenas (of any cardinality).
As a by-product of the characterization we obtain complexity-theoretic statements: given as input a deterministic finite automaton representing the objective,
\begin{itemize}
	\item deciding whether a memory structure suffices to play optimally in all arenas can be done in polynomial time;
	\item deciding the existence of a sufficient memory structure with a given number of states is \NP-complete.
\end{itemize}
From our characterizations it also follows that for both regular reachability and safety objectives, chromatic and chaotic memory requirements do not coincide.

We also discuss when relevant the extension of our results to the more general class of topologically open and topologically closed objectives (called respectively \emph{general reachability objectives} and \emph{general safety objectives} for consistency in what follows), which include the regular reachability and regular safety objectives.

\subparagraph*{Implementation.}
In order to test ideas and conjectures, we have implemented algorithms that automatically build a memory structure with a minimal number of states, both for regular reachability and regular safety objectives.
These algorithms are based on the theoretical analysis from this paper.
Our implementation\footnote{Our implementation is available at \url{https://github.com/pvdhove/regularMemoryRequirements}.} uses SAT solvers provided by the Python package PySAT~\cite{pysat18}.

\subparagraph*{Structure of the paper.}
All required definitions are provided in Section~\ref{sec:preliminaries}.
Section~\ref{sec:overview} includes a technical overview of the results and proofs as well as an in-depth discussion of related works.
The characterizations for regular safety and reachability objectives are proved in Section~\ref{sec:safety} and Section~\ref{sec:reach}.
We show complexity-theoretic corollaries in Section~\ref{sec:complexity}.

This article extends a conference version~\cite{BFRV23} with the complete proofs and additional examples and remarks.

\section{Preliminaries}
\label{sec:preliminaries}
Let $\colors$ be a non-empty alphabet of \emph{colors}.

\subparagraph*{Arenas.}
We study zero-sum turn-based games on graphs with two players, called $\Pone$ and $\Ptwo$.
Players play on \emph{arenas}, which are tuples $\arena = \arenaFull$ where $\states$ is a non-empty set of \emph{vertices} such that $\states = \states_1 \disjUnion \states_2$ (disjoint union) and $\edges \subseteq \states \times \colors \times \states$ is a \emph{set of colored edges}.
If $\edge = (\s_1, \clr, \s_2)\in \edges$, we write $\edgeIn(\edge) = \s_1$, $\col(\edge) = \clr$, and $\edgeOut(\edge) = \s_2$.
Vertices in $\states_1$ are controlled by $\Pone$ and vertices in $\states_2$ are controlled by $\Ptwo$.
An arena is \emph{finite} if it has finitely many vertices and edges, and is \emph{finitely branching} if for all $\s\in\states$, there are finitely many edges $\edge\in\edges$ such that $\edgeIn(\edge) = \s$.
Unless otherwise specified, we consider arenas of any cardinality.
An arena $\arena = \arenaFull$ is a \emph{one-player arena of $\Pone$} (resp.\ \emph{of $\Ptwo$}) if $\states_2 = \emptyset$ (resp.\ $\states_1 = \emptyset$).

A \emph{history} on arena $\arena = \arenaFull$ is a finite sequence $\hist = \edge_1\ldots\edge_n\in\edges^*$ such that for $i$, $1 \le i \le n - 1$, we have $\edgeOut(\edge_i) = \edgeIn(\edge_{i+1})$.
We write $\edgeOut(\hist)$ for $\edgeOut(\edge_n)$.
For convenience, we assume that for all $\s\in\states$, there is a distinct empty history $\emptyPth_\s$ such that $\edgeOut(\emptyPth_\s) = \s$.
For $i\in\{1, 2\}$, we write $\Hists_i(\arena)$ for the set of histories $\hist$ on $\arena$ such that $\edgeOut(\hist) \in \states_i$.
A \emph{play} on arena $\arena$ is an infinite sequence $\play = \edge_1\edge_2\ldots\in\edges^\omega$ such that for $i \ge 1$, $\edgeOut(\edge_i) = \edgeIn(\edge_{i+1})$; play $\play$ is \emph{from $\s$} if $\edgeIn(\edge_1) = \s$.
If $\play = \edge_1\edge_2\ldots\in\edges^\omega$ is a play (resp.\ $\hist = \edge_1\ldots\edge_n \in \edges^*$ is a history), we write $\colHatInf(\play)$ (resp.\ $\colHatFin(\hist)$) for the infinite sequence $\col(\edge_1)\col(\edge_2)\ldots\in\colors^\omega$ (resp.\ the finite sequence $\col(\edge_1)\ldots\col(\edge_n)\in\colors^*$).

\subparagraph*{Objectives.}
\emph{Objectives} are subsets $\wc \subseteq \colors^\omega$.
Given an objective $\wc$, we write $\comp{\wc} = \colors^\omega \setminus \wc$ for its complement.
We focus on two types of objectives, both derived from a set $A \subseteq \colors^*$:
\begin{itemize}
	\item the \emph{general reachability objective derived from $A$}, denoted $\reach{A}$, is the objective $\bigcup_{\word\in A} \word\colors^\omega$ of infinite words that have (at least) one finite prefix in $A$.
	\item the \emph{general safety objective derived from $A$}, denoted $\safe{A}$, is the objective $\comp{\bigcup_{\word\in A} \word\colors^\omega}$ of infinite words that have no finite prefix in $A$.
	We have $\safe{A} = \comp{\reach{A}}$.
\end{itemize}

General reachability and safety objectives are respectively the \emph{topologically open} and \emph{topologically closed sets}, at the first level of the Borel hierarchy.
When $A$ is a regular language, we call $\reach{A}$ a \emph{regular reachability objective} and $\safe{A}$ a \emph{regular safety objective}.
We call an objective \emph{regular} if it is a regular reachability or a regular safety objective.
Our characterizations apply to regular reachability and safety objectives, but we sometimes discuss when we may generalize our results to the general case.
For computational complexity questions (Section~\ref{sec:complexity}), we restrict our focus to regular reachability and safety objectives so that an objective can be finitely represented as an automaton.
The objectives that we consider are therefore very simple both in terms of their algebraic representation (using automata representing languages of finite words) and in terms of their topology (they are at the first level of the Borel hierarchy).

A \emph{game} is a tuple $\game = \gameFull$ where $\arena$ is an arena and $\wc$ is an objective.

\subparagraph*{Automata.}
A \emph{deterministic automaton} is a tuple $\dfa = \dfaFull$ where $\atmtnStates$ is a possibly infinite set of \emph{states}, $\colors$ is a non-empty \emph{alphabet} (usually the set of colors), $\atmtnInit\in\atmtnStates$ is an \emph{initial state}, $\atmtnUpd\colon \atmtnStates\times\colors \to \atmtnStates$ is a (complete, deterministic) \emph{update function}, and $\finalStates\subseteq\atmtnStates$ is a set of \emph{final states}.
All automata in this work are deterministic, so we sometimes omit the word \emph{deterministic}.
Automaton $\dfa$ is \emph{finite} if $\atmtnStates$ is finite.
We write $\atmtnUpdWord\colon \memStates \times \colors^* \to \memStates$ for the natural extension of $\atmtnUpd$ to sequences of colors.
The \emph{language recognized by $\dfa$}, denoted $\atmtnLang{\dfa}$, is the set of finite words $\word\in\colors^*$ such that $\atmtnUpdWord(\atmtnInit, \word)\in\finalStates$.
For $\atmtnState_1, \atmtnState_2\in\atmtnStates$, we write $\memPathsOn{\atmtnState_1}{\atmtnState_2}^{\dfa}$ for the language of words $\word\in\colors^*$ such that $\atmtnUpdWord(\atmtnState_1, \word) = \atmtnState_2$.
We drop the superscript $\dfa$ if the automaton considered is clear in the context.
We denote the empty word by $\emptyWord$.

\subparagraph*{Continuations.}
For an objective $\wc\subseteq\colors^\omega$ and $\word\in\colors^*$, we define the \emph{winning continuations of $\word$} as the set $\inverse{\word}\wc = \{\word'\in\colors^\omega \mid \word\word'\in\wc\}$ (this set is sometimes called a \emph{left quotient} of $\wc$ in the literature).
Given an objective $\wc\subseteq\colors^\omega$, its \emph{prefix preorder} ${\prefOrd_\wc} \subseteq \colors^* \times \colors^*$ is defined as $\word_1 \prefOrd_\wc \word_2$ if $\inverse{\word_1}\wc \subseteq \inverse{\word_2}\wc$.
Its \emph{prefix equivalence} ${\prefEq_\wc} \subseteq \colors^* \times \colors^*$ is defined as $\word_1 \prefEq_\wc \word_2$ if $\inverse{\word_1}\wc = \inverse{\word_2}\wc$.
We denote ${\strictPrefOrd_\wc} = {\prefOrd_\wc} \setminus {\prefEq_\wc}$.
We drop the subscript $\wc$ when there is no ambiguity on the objective.
The prefix preorder is a relation that is preserved by reading colors.

\begin{lemma} \label{lem:increasing}
	Let $\wc\subseteq\colors^\omega$ be an objective.
	If $\word_1 \prefOrd \word_2$, then for all $\word\in\colors^*$, $\word_1\word \prefOrd \word_2\word$.
\end{lemma}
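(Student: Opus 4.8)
The plan is to unfold both definitions and chase the inclusion of winning continuations. Recall that $\word_1 \prefOrd \word_2$ means $\inverse{\word_1}\wc \subseteq \inverse{\word_2}\wc$, and that we want to show $\inverse{\word_1\word}\wc \subseteq \inverse{\word_2\word}\wc$ for an arbitrary $\word \in \colors^*$. First I would take an arbitrary infinite word $\word' \in \inverse{\word_1\word}\wc$, which by definition means $\word_1\word\word' \in \wc$. Rewriting $\word_1\word\word' = \word_1(\word\word')$, this says exactly that $\word\word' \in \inverse{\word_1}\wc$.

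Next I would apply the hypothesis $\inverse{\word_1}\wc \subseteq \inverse{\word_2}\wc$ to conclude $\word\word' \in \inverse{\word_2}\wc$, i.e.\ $\word_2\word\word' \in \wc$. Regrouping as $(\word_2\word)\word' \in \wc$ gives $\word' \in \inverse{\word_2\word}\wc$. Since $\word'$ was arbitrary, this establishes $\inverse{\word_1\word}\wc \subseteq \inverse{\word_2\word}\wc$, which is precisely $\word_1\word \prefOrd \word_2\word$.

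Honestly, there is no real obstacle here: the statement is a one-line consequence of the associativity of concatenation together with the definition of the left quotient, and the only thing to be careful about is bookkeeping the grouping of the words $\word_1$, $\word$, $\word'$ (resp.\ $\word_2$, $\word$, $\word'$) so that the membership conditions line up with the correct left quotients. The lemma is stated separately mainly because this monotonicity property (the prefix preorder is preserved by reading colors) is used repeatedly in the later sections, so it is worth isolating once.
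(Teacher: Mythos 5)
Your proof is correct and is exactly the routine definition-unfolding argument the paper has in mind (the paper in fact omits the proof entirely, treating the lemma as immediate from the definition of left quotients and associativity of concatenation). Nothing further is needed.
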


Starting from a general reachability or safety objective $\wc \subseteq \colors^\omega$ derived from a set $A\in\colors^*$, we can associate with $\wc$ its minimal automaton $\dfa_\wc$ that ``classifies'' the equivalence classes of $\prefEq$.
Formally, $\dfa_\wc = \dfaFull$ where $\atmtnStates = \{\eqClass{\word}_\prefEq \mid \word\in\colors^*\}$ is the set of equivalence classes of $\prefEq$, $\atmtnInit = \eqClass{\emptyWord}_\prefEq$, $\atmtnUpd(\eqClass{\word}_\prefEq, \clr) = \eqClass{\word\clr}_\prefEq$, and $\finalStates = \{\finalState\}$ where $\finalState = \eqClass{\word}_\prefEq$ for some $\word\in A$ (the choice of $\word$ does not matter).
The transition function $\atmtnUpd$ is well-defined: $\word_1\prefEq\word_2$ implies $\word_1\clr \prefEq \word_2\clr$ for all $\clr\in\colors$.
Notice that the final state of such an automaton is always absorbing, i.e., for all $\clr\in\colors$, $\atmtnUpd(\finalState, \clr) = \finalState$.
This matches the intuition that once a word of $A$ is seen and the reachability (resp.\ safety) game is won (resp.\ lost), it stays that way for the rest of the game.

We have that a general reachability (resp.\ safety) objective $\wc$ is equal to $\reach{\atmtnLang{\dfa_\wc}}$ (resp.\ to $\safe{\atmtnLang{\dfa_\wc}}$) --- in examples, we will sometimes start from an automaton to generate an objective.
Using the well-known Myhill-Nerode theorem~\cite{Ner58}, we obtain that a general reachability or safety objective $\wc$ is regular if and only if $\prefEq$ has finitely many equivalence classes if and only if $\dfa_\wc$ is finite.

When considering a minimal automaton $\dfa_\wc = \dfaFull$, for $\atmtnState\in\atmtnStates$, we abusively write $\inverse{\atmtnState}\wc$ for the set $\inverse{\word}\wc$, where $\word$ is any finite word such that $\atmtnUpdWord(\atmtnInit, \word) = \atmtnState$ (the choice of $\word$ does not matter).
We extend $\prefOrd$ to automaton states ($\atmtnState_1 \prefOrd \atmtnState_2$ if $\inverse{\atmtnState_1}\wc \subseteq \inverse{\atmtnState_2}\wc$).

\subparagraph*{Preorders.}
Let $\prefOrd$ be a preorder on some set $B$.
We say that two elements $b_1, b_2\in B$ are \emph{comparable for $\prefOrd$} if $b_1 \prefOrd b_2$ or $b_2 \prefOrd b_1$.
A set $\chain \subseteq B$ is a \emph{chain for $\prefOrd$} (resp.\ \emph{antichain for $\prefOrd$}) if for all $b_1, b_2\in\chain$, $b_1$ and $b_2$ are (resp.\ are not) comparable for $\prefOrd$.
A preorder $\prefOrd$ is \emph{well-founded} if every chain for $\prefOrd$ contains a minimal element for $\prefOrd$.

\subparagraph*{Memory structures.}
A \emph{(chromatic) memory structure} is a tuple $\memSkel = \memSkelFull$ where $\memStates$ is a possibly infinite set of \emph{states}, $\memInit\in\memStates$ is an \emph{initial state}, and $\memUpd\colon \memStates \times \colors \to \memStates$ is a (deterministic, complete) \emph{update function}.
It is syntactically almost the same as a deterministic automaton, except that we do not specify final states.
We recover notations $\memUpdHat$ and $\memPathsOn{\memState_1}{\memState_2}$ (for $\memState_1, \memState_2 \in \memStates$) from automata.
We let $\memSkelTriv = (\{\memInit\}, \memInit, (\memInit, \clr) \mapsto \memInit)$ denote the only memory structure with a single state.
The \emph{size} of a memory structure is its number of states.

\subparagraph*{Strategies.}
Let $\arena = \arenaFull$ be an arena and $i\in\{1, 2\}$.
A \emph{strategy of $\player{i}$ on $\arena$} is a function $\strat_i\colon \Hists_i(\arena) \to \edges$ such that for all $\hist\in\Hists_i(\arena)$, $\edgeOut(\hist) = \edgeIn(\strat_i(\hist))$.
Given a strategy $\strat_i$ of $\player{i}$, we say that a play $\play = \edge_1\edge_2\ldots$ is \emph{consistent with $\strat_i$} if for all finite prefixes $\hist = \edge_1\ldots\edge_j$ of $\play$ such that $\edgeOut(\hist) \in \states_i$, $\strat_i(\hist) = \edge_{j+1}$.
For $\s\in\states$, we denote by $\Plays(\arena, \s, \strat_i)$ the set of plays on $\arena$ from $\s$ that are consistent with $\strat_i$.

For $\memSkel = \memSkelFull$ a memory structure, a strategy $\strat_i$ of $\player{i}$ on arena $\arena$ is \emph{based on (memory) $\memSkel$} if there exists a function $\memNxt\colon \states_i \times \memStates \to \edges$ such that for all $\s\in\states_i$, $\strat_i(\emptyPth_\s) = \memNxt(\s, \memInit)$, and for all non-empty histories $\hist\in\Hists_i(\arena)$, $\strat_i(\hist) = \memNxt(\edgeOut(\hist), \memUpdHat(\memInit, \colHatFin(\hist)))$.
A strategy is \emph{memoryless} if it is based on $\memSkelTriv$.
For conciseness, we sometimes abusively assume that a strategy of $\player{i}$ based on $\memSkel$ is a function $\states_i \times \memStates \to \edges$.

\begin{remark} \label{rem:generalMemory}
	This \emph{chromatic} memory model only observes the sequence of \emph{colors} seen, and not the precise edges that are taken during a play (i.e., the current memory state is determined by the word in $\colors^*$ seen, not by the history in $\edges^*$).
	A memory structure observing the edges is sometimes called a \emph{chaotic} memory~\cite{KopThesis} and, as was recently shown, may allow to play optimally with fewer memory states for some objectives~\cite{Cas22}.
	However, this comes at the cost of needing to specialize the transition function of the memory structure for every arena --- it does not provide an \emph{arena-independent} memory structure~\cite{BLORV22}.
	The chaotic memory requirements of general safety objectives are characterized in~\cite{CFH14} while, as far as we know, the chaotic memory requirements of general and regular reachability objectives are unknown.
	\qedEx
\end{remark}

\subparagraph*{Optimality.}
Let $\game = (\arena = \arenaFull, \wc)$ be a game, and $\s\in\states$.
We say that a strategy $\strat_1$ of $\Pone$ on $\arena$ is \emph{winning from $\s$ for $\wc$} if for all $\play\in \Plays(\arena, \s, \strat_1)$, $\colHatInf(\play) \in \wc$.

A strategy of $\Pone$ is \emph{optimal for $\Pone$ in $(\arena, \wc)$} if it is winning from all the vertices of $\arena$ from which $\Pone$ has a winning strategy.
We often write \emph{optimal for $\Pone$ in $\arena$} if the objective $\wc$ is clear from the context.

\begin{remark} \label{rmk:uniformity}
	We stress that this notion of optimality requires a \emph{single} strategy to be winning from \emph{all} the winning vertices (a property sometimes called \emph{uniformity}).
	Asking for uniformity may require strategies that are more complex to implement than just requiring winning strategies from individual vertices.
	Still, uniformity is a common requirement (see, e.g.,~\cite{GZ05,Ohl23}) that comes at no extra cost in many well-studied situations~\cite{DJW97,CN06}.
	We discuss uniformity again in Remark~\ref{rmk:uniformity2}.

	Note also that there is no requirement on the behavior of an optimal strategy from vertices from which no strategy is winning, as we assume that the opponent plays rationally.
	In particular, even if winning becomes possible due to a mistake of the opponent after starting from a non-winning vertex, an optimal strategy needs not win.
	\qedEx
\end{remark}

Let $\memSkel$ be a memory structure and $\wc \subseteq \colors^\omega$ be an objective.
We say that \emph{$\memSkel$ suffices (to play optimally) for $\wc$ (resp.\ in finite, finitely branching, one-player arenas)} if for all (resp.\ finite, finitely branching, one-player) arenas $\arena$, $\Pone$ has an optimal strategy based on $\memSkel$ in game $(\arena, \wc)$.

\section{Technical overview}
\label{sec:overview}
In this section, we start with a more in-depth discussion of the related literature.
We then present our main contributions (characterization of the memory requirements of safety objectives, of reachability objectives, and the computational complexity of the related decision problems) while describing and illustrating the main concepts used in our results.
Complete proofs for the three contributions are deferred respectively to Sections~\ref{sec:safety},~\ref{sec:reach}, and~\ref{sec:complexity}.

\subparagraph*{Related works.}
To classify the existing literature on memory for games, we identify two axes.
The first is whether they concern chaotic memory or chromatic memory.
The second is how the class of objectives is defined: either in automata-theoretic terms, typically as a subclass of $\omega$-regular languages,
or in topological terms, referring to the natural topology over the set of infinite words.

The result of Dziembowski, Jurdzi{\'n}ski, and Walukiewicz~\cite{DJW97} applies to the whole class of Muller objectives, which specify the set of colors which appears infinitely many times.
It shows that Zielonka trees~\cite{Zie98} can be used to compute chaotic memory requirements in polynomial time.
Recently, Casares~\cite{Cas22} has shown that this characterization does not extend to chromatic memory: deciding whether there is a memory structure of size $k$ becomes \NP-complete
and equivalent to minimizing \emph{transition-based Rabin automata}.
In this direction, Casares, Colcombet and Lehtinen~\cite{CCL22} showed that computing chaotic memory requirements for Muller objectives is equivalent to minimizing good-for-games automata.
A result by Bouyer, Randour, and Vandenhove~\cite{BRV23} provides a link between the chromatic memory requirements of all $\omega$-regular objectives (not only Muller conditions) and their representation as \emph{transition-based parity automata}, but with less tight bounds on the minimal memory structures.

Article~\cite{BLT22} establishes the existence of finite-memory optimal strategies from topological properties of objectives.
Although general reachability and safety objectives fit into their framework, there are major differences with our work: their framework is different (they study \emph{concurrent} games that are not played \emph{on graphs}), and their aim is to establish the existence of finite-memory optimal strategies for many objectives, but not to understand precisely the memory requirements of some class of objectives.

Regular objectives are also mentioned in~\cite{LPR18}, where the existence of finite-memory optimal strategies is shown for Boolean combinations of objectives involving regular objectives.

In another line of works, Gimbert and Zielonka~\cite{GZ05} gave a characterization of all payoff functions (extending objectives to a quantitative setting) for which both players have memoryless optimal strategies, implying an important lifting result: the sufficiency of memoryless strategies in finite two-player arenas is implied by the existence of memoryless optimal strategies in both players' finite one-player arenas.
Bouyer et al.~\cite{BLORV22} extended this to chromatic finite memory.

The work most related to the present paper is by Colcombet, Fijalkow, and Horn~\cite{CFH14,CFH22}, which gives a characterization of chaotic memory requirements for general safety objectives.
Their constructions strongly rely on the model of chaotic memory; indeed, as a corollary of our results, we will see that already for regular safety objectives, chromatic and chaotic memory requirements do not coincide.
Our first step is to obtain a characterization of chromatic memory requirements for (general and regular) safety objectives.

\subparagraph*{Monotony and safety objectives.}
Let us fix an objective $\wc\subseteq\colors^\omega$.
In order to play optimally for $\wc$, a memory structure $\memSkel$ needs to be able to distinguish between histories that are not comparable for $\prefOrd_\wc$: indeed, if two finite words $\word_1, \word_2 \in \colors^*$ are not comparable, we can construct an arena in which the opponent chooses between playing $\word_1$ and playing $\word_2$, and then the correct choice has to be made between a continuation only winning after~$\word_1$, and a continuation only winning after~$\word_2$.
This motivates the following definition, which we call \emph{$\memSkel$-strong-monotony}.

\begin{definition}[$\memSkel$-strong-monotony] \label{def:strongMon}
	Let $\wc \subseteq \colors^\omega$ be an objective and $\memSkel = \memSkelFull$ be a memory structure.
	We say that $\wc$ is \emph{$\memSkel$-strongly-monotone} if for all $\word_1, \word_2\in\colors^*$, $\memUpdHat(\memInit, \word_1) = \memUpdHat(\memInit, \word_2)$ implies that $\word_1$ and $\word_2$ are comparable for $\prefOrd_\wc$.
\end{definition}

Notice also that $\wc$ is $\memSkel$-strongly-monotone if and only if $\comp{\wc}$ is $\memSkel$-strongly-monotone (as being comparable for $\prefOrd_\wc$ is equivalent to being comparable for ${\prefOrd_{\comp{\wc}}} = {\invPrefOrd_\wc}$).
Although stated differently, a property called \emph{strong monotony} was introduced in~\cite{BFMM11} and coincides with our definition of $\memSkelTriv$-strong-monotony.
We can therefore see our definition as a reformulation and a generalization to handle arbitrary memory structures, rather than only the ``memoryless memory structure'' $\memSkelTriv$.

The discussion above implies that for a memory $\memSkel$, $\memSkel$-strong-monotony is necessary for $\memSkel$ to be sufficient to play optimally.
Depending on the type of objective (regular or general), we specify a class of arenas in which $\memSkel$-strong-monotony can already be shown to be necessary.
Intuitively, regularity allows to distinguish distinct objectives with ultimately periodic words, which can be encoded into a finite arena.

\begin{lemma}[restate = necessarySafety, name = Necessity of $\memSkel$-strong-monotony] \label{lem:necessaryConditionSafety}
	Let $\wc$ be an objective and $\memSkel$ a memory structure.
	\begin{enumerate}
		\item If $\wc$ is regular and $\memSkel$ suffices to play optimally for $\wc$ in all finite one-player arenas, then $\wc$ is $\memSkel$-strongly-monotone.
		\item In the general case, if $\memSkel$ suffices to play optimally for $\wc$ in all finitely branching one-player arenas, then $\wc$ is $\memSkel$-strongly-monotone.
	\end{enumerate}
\end{lemma}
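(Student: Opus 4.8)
The plan is to prove both items by contraposition, via a single construction. The key idea is that, although the informal argument above lets $\Ptwo$ pick between $\word_1$ and $\word_2$, one can obtain the same effect in a \emph{one-player} arena of $\Pone$: provide two separate forced entry points and exploit the uniformity built into optimality (a single strategy must win from \emph{all} winning vertices).

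So assume $\wc$ is not $\memSkel$-strongly-monotone: there are $\word_1, \word_2\in\colors^*$ with $\memUpdHat(\memInit, \word_1) = \memUpdHat(\memInit, \word_2) =: \memState$ that are incomparable for $\prefOrd_\wc$. Unfolding incomparability yields $\word_1'\in\inverse{\word_1}\wc\setminus\inverse{\word_2}\wc$ and $\word_2'\in\inverse{\word_2}\wc\setminus\inverse{\word_1}\wc$, that is, $\word_1\word_1'\in\wc$, $\word_2\word_1'\notin\wc$, $\word_2\word_2'\in\wc$, $\word_1\word_2'\notin\wc$. For item~1, $\wc$ is $\omega$-regular, hence so are the two nonempty sets above (left quotients and set differences of $\omega$-regular sets), so they contain ultimately periodic words; I may therefore assume $\word_1', \word_2'$ ultimately periodic in that case. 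Now build a one-player arena $\arena$ of $\Pone$ with two sources $\s_1, \s_2$: from $\s_i$, a chain of fresh vertices each with a unique outgoing edge spells the colors of $\word_i$ and leads to a common vertex $\hat\s$ (take $\s_i = \hat\s$ if $\word_i = \emptyWord$; $\word_1, \word_2$ cannot both be empty since incomparable words are distinct). From $\hat\s$ there are exactly two outgoing edges, entering two gadgets whose unique infinite run spells $\word_1'$ and $\word_2'$ respectively; when $\word_i'$ is ultimately periodic this gadget is a finite lasso, so $\arena$ is finite under the hypotheses of item~1, and finitely branching in all cases. Routing $\s_i$ to $\hat\s$ and then into the $\word_i'$-gadget produces the play with colors $\word_i\word_i'\in\wc$, so $\s_1$ and $\s_2$ are both winning for $\Pone$ in $(\arena, \wc)$.

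Finally, let $\strat_1$ be any strategy of $\Pone$ based on $\memSkel$. From $\s_i$ the play is forced until $\hat\s$, where the colors read so far form $\word_i$, so the current memory state is $\memUpdHat(\memInit, \word_i) = \memState$, independently of $i$. Hence $\strat_1$ picks the same outgoing edge of $\hat\s$ in both plays, entering the $\word_j'$-gadget for some fixed $j\in\{1,2\}$; if $j=2$ the play from $\s_1$ has colors $\word_1\word_2'\notin\wc$, and if $j=1$ the play from $\s_2$ has colors $\word_2\word_1'\notin\wc$. Either way $\strat_1$ is not winning from one of the two winning vertices $\s_1, \s_2$, so it is not optimal; therefore no $\memSkel$-based strategy is optimal in $(\arena, \wc)$, contradicting the sufficiency of $\memSkel$ in finite (item~1), resp.\ finitely branching (item~2), one-player arenas. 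I expect the step needing the most care to be checking that $\arena$ genuinely lies in the prescribed class of arenas: finiteness in item~1 is exactly why the regularity of $\wc$ is used, namely to get ultimately periodic continuation witnesses and hence finite gadgets; the degenerate case $\word_i = \emptyWord$, where $\memState = \memInit$ and $\s_i = \hat\s$ is still winning via the $\word_i'$-gadget, causes no real trouble.
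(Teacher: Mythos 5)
Your proposal is correct and follows essentially the same route as the paper: the contrapositive, the extraction of the four witnesses $\word_1\word_1'\in\wc$, $\word_2\word_1'\notin\wc$, $\word_2\word_2'\in\wc$, $\word_1\word_2'\notin\wc$, the ultimately periodic continuations in the regular case, and the one-player arena with two entry paths spelling $\word_1,\word_2$ into a common choice vertex, where uniformity of optimality forces the $\memSkel$-based strategy to fail from one of the two winning sources. The only differences are cosmetic (explicit handling of the $\word_i=\emptyWord$ case and of finiteness of the lasso gadgets, which the paper leaves implicit in its figure).
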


In the case of general reachability or safety objectives, it is useful to reformulate the notion of $\memSkel$-strongly-monotone objectives using chains.
Given a general reachability or safety objective $\wc$, its minimal automaton $\dfa_\wc = \dfaFull$, and a memory structure $\memSkel = \memSkelFull$, we can associate with each state $\memState\in\memStates$ the set $\coReachable{\memState}^\wc\subseteq \atmtnStates$ of states of $\dfa_\wc$ that can be reached ``simultaneously''.
Formally, for $\memState\in\memStates$,
\[
\coReachable{\memState}^\wc = \{\atmtnUpdWord(\atmtnInit, \word) \in\atmtnStates\mid \word\in\colors^*, \memUpdHat(\memInit, \word) = \memState\}.
\]
We drop the superscript $\wc$ if there is no ambiguity.
The following property follows from the definitions.

\begin{lemma} \label{lem:chains}
	Let $\wc$ be a general reachability or safety objective and $\memSkel = \memSkelFull$ be a memory structure.
	Objective $\wc$ is $\memSkel$-strongly-monotone if and only if for all $\memState\in\memStates$, the set $\coReachable{\memState}$ is a chain for~$\prefOrd_\wc$.
\end{lemma}
Our initial definition of $\memSkel$-strong-monotony required that any two finite words reaching the same state of $\memSkel$ must be comparable; in this reformulation, we focus instead on the minimal automaton of $\wc$ and require that states of the automaton that can be reached along with the same state of $\memSkel$ are comparable.

Our first characterization states that for general safety objectives, $\memSkel$-strong-monotony also implies that $\memSkel$ suffices to play optimally.
We state two variants of the results: in the first one, we assume that the preorder $\prefOrd$ induced by the objective is well-founded (which includes the regular case), and the result holds for all arenas; in the second one, we make no such assumption, but the result holds only for finitely branching arenas.
We will discuss why we do not have the result with none of these hypotheses in Remark~\ref{rmk:noHypothesis}.

\begin{theorem}[restate = thmSafety, name = Characterization for safety] \label{thm:safety}
	Let $\wc$ be a general safety objective, and $\memSkel$ be a memory structure.
	\begin{enumerate}
		\item If $\prefOrd_\wc$ is well-founded (in particular, if $\wc$ is regular), then $\memSkel$ suffices to play optimally for $\wc$ if and only if $\wc$ is $\memSkel$-strongly-monotone.
		\item In the general case, $\memSkel$ suffices to play optimally for $\wc$ in all finitely branching arenas if and only if $\wc$ is $\memSkel$-strongly-monotone.
	\end{enumerate}
\end{theorem}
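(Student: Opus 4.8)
The plan is to treat the two implications separately, the \emph{only if} direction being immediate and the \emph{if} direction carrying all the content. For \emph{only if}: if $\memSkel$ suffices to play optimally for $\wc$ in all arenas (case~1) or in all finitely branching arenas (case~2), then in particular it suffices in all finitely branching one-player arenas of $\Pone$, so Lemma~\ref{lem:necessaryConditionSafety}(2) gives that $\wc$ is $\memSkel$-strongly-monotone (no well-foundedness needed). For \emph{if}, assume $\wc = \safe{A}$ with $A = \atmtnLang{\dfa_\wc}$ is $\memSkel$-strongly-monotone; by Lemma~\ref{lem:chains} each $\coReachable{\memState}$ is a chain for $\prefOrd_\wc$. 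Fix an arena $\arena$ (finitely branching in case~2; we assume as usual every vertex has a successor). I would first pass to the product arena $\arena \otimes \dfa_\wc$, with vertices $(\s,\atmtnState)$, edges $(\s,\atmtnState) \to (\s',\atmtnUpd(\atmtnState,\clr))$ for each edge $(\s,\clr,\s')$ of $\arena$, ownership inherited from $\arena$, and bad set $B$ the pairs whose second component is $\finalState$. Since reachability and safety games are determined with memoryless strategies even over arenas of arbitrary cardinality (transfinite attractor computation), letting $W$ be the complement of the set of product vertices from which $\Ptwo$ can force a visit to $B$: from every $(\s,\atmtnState) \in W$ with $\s \in \states_1$ there is an edge staying in $W$, from every such vertex with $\s \in \states_2$ all edges stay in $W$, and $\Pone$ wins $(\arena,\wc)$ from $\s$ if and only if $(\s,\atmtnInit) \in W$.

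The key structural observation is that $W$ is \emph{upward closed in the automaton component} for $\prefOrd_\wc$: if $(\s,\atmtnState_1) \in W$ and $\atmtnState_1 \prefOrd_\wc \atmtnState_2$, then $(\s,\atmtnState_2) \in W$. Indeed, a winning (safety) strategy from $(\s,\atmtnState_1)$ projects, by determinism of $\dfa_\wc$, to an arena strategy from $\s$ all of whose plays have coloration in $\inverse{\atmtnState_1}\wc$; since $\wc$ is a safety objective and $\finalState$ is absorbing, $\inverse{\atmtnState}\wc$ is exactly the set of infinite words that never drive $\atmtnState$ to $\finalState$, and $\inverse{\atmtnState_1}\wc \subseteq \inverse{\atmtnState_2}\wc$, so the same arena strategy lifts to a winning strategy from $(\s,\atmtnState_2)$.

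Next I would define the $\memSkel$-based strategy through a map $\memNxt\colon \states_1 \times \memStates \to \edges$. For $(\s,\memState)$ with $\s \in \states_1$, let $S(\s,\memState) = \{\atmtnState \in \coReachable{\memState} \mid (\s,\atmtnState) \in W\}$, a subchain of $\coReachable{\memState}$; if it is empty, let $\memNxt(\s,\memState)$ be an arbitrary edge out of $\s$. Otherwise the claim is that some edge $\edge$ out of $\s$ is \emph{good for all of $S(\s,\memState)$}, meaning $(\edgeOut(\edge),\atmtnUpd(\atmtnState,\col(\edge))) \in W$ for every $\atmtnState \in S(\s,\memState)$, and we set $\memNxt(\s,\memState)$ to such an edge. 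This is the one place where the two hypotheses are used, and only to guarantee a minimum of the relevant chain. In case~1, $\prefOrd_\wc$ is well-founded, so the chain $S(\s,\memState)$ has a $\prefOrd_\wc$-minimum $\atmtnState^*$; pick $\edge$ witnessing that $(\s,\atmtnState^*) \in W$ stays in $W$; then for every $\atmtnState \in S(\s,\memState)$, $\atmtnState^* \prefOrd_\wc \atmtnState$ gives $\atmtnUpd(\atmtnState^*,\col(\edge)) \prefOrd_\wc \atmtnUpd(\atmtnState,\col(\edge))$ by Lemma~\ref{lem:increasing} (lifted to automaton states), and upward closure of $W$ yields $(\edgeOut(\edge),\atmtnUpd(\atmtnState,\col(\edge))) \in W$. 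In case~2, $\arena$ is finitely branching: if no edge were good for all of $S(\s,\memState)$, then for each of the finitely many edges $\edge^{(1)},\dots,\edge^{(k)}$ out of $\s$ there would be $\atmtnState_j \in S(\s,\memState)$ with $(\edgeOut(\edge^{(j)}),\atmtnUpd(\atmtnState_j,\col(\edge^{(j)}))) \notin W$; the finite chain $\{\atmtnState_1,\dots,\atmtnState_k\}$ has a minimum $\atmtnState^*$ (no well-foundedness needed), and Lemma~\ref{lem:increasing} together with the contrapositive of upward closure would give $(\edgeOut(\edge^{(j)}),\atmtnUpd(\atmtnState^*,\col(\edge^{(j)}))) \notin W$ for every $j$, contradicting $(\s,\atmtnState^*) \in W$ with $\s \in \states_1$.

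Finally I would verify that the strategy $\strat$ based on $\memSkel$ via $\memNxt$ is winning from every $\s$ with $(\s,\atmtnInit) \in W$. Along any consistent play $\play = \edge_1\edge_2\dots$ from such an $\s$, write $\atmtnState_i$ and $\memState_i$ for the states reached in $\dfa_\wc$ and $\memSkel$ after reading $\col(\edge_1)\cdots\col(\edge_i)$, so that $\atmtnState_i \in \coReachable{\memState_i}$, and $\s_i$ for the current vertex. One shows $(\s_i,\atmtnState_i) \in W$ for all $i$ by induction: at a $\Pone$-vertex the next edge is $\memNxt(\s_i,\memState_i)$, which is good for $\atmtnState_i \in S(\s_i,\memState_i)$ (in particular this set is nonempty); at a $\Ptwo$-vertex all outgoing edges stay in $W$. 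Since no vertex with automaton component $\finalState$ belongs to $W$, no finite prefix of $\colHatInf(\play)$ is in $A$, i.e.\ $\colHatInf(\play) \in \safe{A} = \wc$. Hence $\strat$ is optimal for $\Pone$ and based on $\memSkel$, so $\memSkel$ suffices. I expect the main obstacle to be exactly this isolation of where the two hypotheses enter — both merely ensure that a certain subchain of $\coReachable{\memState}$ has a minimum — while keeping the determinacy and attractor arguments valid over arenas of unbounded cardinality; the conceptual heart is the upward-closure of the product winning region under $\prefOrd_\wc$, which is what links $\memSkel$-strong-monotony to optimal play.
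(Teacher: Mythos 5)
Your proof is correct and follows essentially the same strategy as the paper's: for each pair $(\s,\memState)$ you isolate the chain of automaton states of $\coReachable{\memState}$ that are still winnable from $\s$ (your $S(\s,\memState)$ is the paper's $\atmtnStates_{\s,\memState}$), pick an edge that is simultaneously safe for all of them by going to the chain's minimum, and propagate membership in the winning region by induction along the play; your product-arena/attractor phrasing and the upward-closure observation are a restatement of the paper's direct use of strategies winning for $\inverse{\atmtnState}\wc$ and of the fact that such a strategy also wins for $\inverse{\atmtnState'}\wc$ when $\atmtnState \prefOrd \atmtnState'$. The only local divergence is the finitely branching case where the chain has no minimum: the paper takes an element of the non-empty intersection of the nested, finite, non-empty sets of first moves of winning strategies, whereas you derive a contradiction from the minimum of a finite subchain of witnesses --- both arguments are valid and rest on the same combination of finite branching and the chain structure.
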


A corollary of this characterization, by comparing to the characterization for chaotic memory in~\cite{CFH14}, is that chromatic and chaotic memory requirements differ already for regular safety objectives.
We provide an instructive example below.
Note that this provides a new simple kind of counterexample to Kopczy{\'n}ski's conjecture~\cite{KopThesis}, which Casares~\cite{Cas22} had already falsified with a Muller objective.

\begin{example} \label{ex:abcd}
	Let $\colors = \{a, b, c, d\}$.
	We consider the regular language recognized by the finite automaton $\dfa$ depicted in Figure~\ref{fig:abcd} (left).
	It accepts the finite words that first see both $a$ and $b$ (in any order, possibly interspersed with $c$'s and $d$'s), and then see both $c$ and $d$ (in any order, possibly interspersed with $a$'s and $b$'s).
	This language can be described by the regular expression $\colors^*(a\colors^*b\mid b\colors^*a)\colors^*(c\colors^*d\mid d\colors^*c)\colors^*$.
	We write $\wc$ for the induced regular safety objective: $\wc = \safe{\atmtnLang{\dfa}}$.

	The main claim is that the chaotic memory requirements for $\wc$ are two states, which is easily obtained from the existing characterization~\cite{CFH14} (this is the size of a maximal antichain for $\prefOrd$), while the chromatic requirements for $\wc$ are three states.
	We depict a memory structure $\memSkel$ with three states which makes $\wc$ $\memSkel$-strongly-monotone in Figure~\ref{fig:abcd} (right).
	To check that $\wc$ is indeed $\memSkel$-strongly-monotone, we have to check that there is no pair of words $\word_1, \word_2\in\colors^*$ such that $\word_1$ and $\word_2$ reach the same state of $\memSkel$, but reach non-comparable states in $\dfa$.
	The only two pairs of non-comparable states in $\dfa$ are $\atmtnState_a$ and $\atmtnState_b$, and $\atmtnState_c$ and $\atmtnState_d$ (besides these, states are ordered for $\prefOrd$ from right to left).
	We can check that for this choice of $\memSkel$, $\coReachable{\memState_1} = \{\atmtnInit, \atmtnState_a\}$, $\coReachable{\memState_2} = \{\atmtnState_b, \atmtnState_{ab}, \atmtnState_d, \atmtnState_{cd}\}$, $\coReachable{\memState_3} = \{\atmtnState_b, \atmtnState_{ab}, \atmtnState_c, \atmtnState_{cd}\}$.
	As these are all chains for $\prefOrd$, we have that $\wc$ is $\memSkel$-strongly-monotone.

	It is not possible to find a chromatic memory structure $\memSkel$ with \emph{two} states which makes $\wc$ $\memSkel$-strongly-monotone (this can be checked by trying to assign transitions to two states while distinguishing non-comparable states, and observing that all cases fail).
	\qedEx
\end{example}
\begin{figure}[tbh]
	\centering
	\begin{tikzpicture}[every node/.style={font=\small,inner sep=1pt}]
		\draw (0,0) node[diamant] (q0) {$\atmtnInit$};
		\draw ($(q0.west)-(0.45,0)$) edge[-latex'] (q0);
		\draw ($(q0)+(1.7,0.8)$) node[diamant] (qa) {$\atmtnState_a$};
		\draw ($(q0)+(1.7,-0.8)$) node[diamant] (qb) {$\atmtnState_b$};
		\draw ($(q0)+(3.4,0)$) node[diamant] (qab) {$\atmtnState_{ab}$};
		\draw (q0) edge[-latex'] node[above=4pt] {$a$} (qa);
		\draw (q0) edge[-latex'] node[below=4pt] {$b$} (qb);
		\draw (qa) edge[-latex'] node[above=4pt] {$b$} (qab);
		\draw (qb) edge[-latex'] node[below=4pt] {$a$} (qab);
		\draw (q0) edge[-latex',out=60,in=120,distance=0.8cm] node[above=4pt] {$c, d$} (q0);
		\draw (qa) edge[-latex',out=60,in=120,distance=0.8cm] node[above=4pt] {$a, c, d$} (qa);
		\draw (qb) edge[-latex',out=-120,in=-60,distance=0.8cm] node[below=4pt] {$b, c, d$} (qb);

		\draw ($(qab)+(1.7,0.8)$) node[diamant] (qc) {$\atmtnState_c$};
		\draw ($(qab)+(1.7,-0.8)$) node[diamant] (qd) {$\atmtnState_d$};
		\draw ($(qab)+(3.4,0)$) node[diamant,double] (qcd) {$\atmtnState_{cd}$};
		\draw (qab) edge[-latex'] node[above=4pt] {$c$} (qc);
		\draw (qab) edge[-latex'] node[below=4pt] {$d$} (qd);
		\draw (qc) edge[-latex'] node[above=4pt] {$d$} (qcd);
		\draw (qd) edge[-latex'] node[below=4pt] {$c$} (qcd);
		\draw (qab) edge[-latex',out=60,in=120,distance=0.8cm] node[above=4pt] {$a, b$} (qab);
		\draw (qc) edge[-latex',out=60,in=120,distance=0.8cm] node[above=4pt] {$a, b, c$} (qc);
		\draw (qd) edge[-latex',out=-120,in=-60,distance=0.8cm] node[below=4pt] {$a, b, d$} (qd);

		\draw (qcd) edge[-latex',out=60,in=120,distance=0.8cm] node[above=4pt] {$a, b, c, d$} (qcd);
		\draw[wrap=green] (q0.center) to[out=0,in=180] node[above=8pt,opacity=1] {$\coReachable{\memState_1}$} (qa.center) ;
		\draw[wrap=blue] (qb.center) to[out=-30,in=180] (qab.center) to[out=0,in=180] node[below=8pt,opacity=1] {$\coReachable{\memState_2}$} (qd.center) to[out=0,in=180] (qcd.center);
		\draw[wrap=red] (qb.center) to[out=30,in=180] (qab.center) to[out=0,in=180] node[above=8pt,opacity=1] {$\coReachable{\memState_3}$} (qc.center) to[out=0,in=180] (qcd.center);

		\draw ($(q0)+(9,0)$) node[green,diamant] (m1) {$\memState_1$};
		\draw ($(m1.west)-(0.45,0)$) edge[-latex'] (m1);
		\draw ($(m1)+(1.7,0.)$) node[blue,diamant] (m2) {$\memState_2$};
		\draw ($(m1)+(3.4,0)$) node[red,diamant] (m3) {$\memState_3$};
		\draw (m1) edge[-latex'] node[above=4pt] {$b$} (m2);
		\draw (m1) edge[-latex',out=60,in=120,distance=0.8cm] node[above=4pt] {$a, c, d$} (m1);
		\draw (m2) edge[-latex',out=60,in=120,distance=0.8cm] node[above=4pt] {$a, b, d$} (m2);
		\draw (m2) edge[-latex',out=30,in=150] node[above=4pt] {$c$} (m3);
		\draw (m3) edge[-latex',out=60,in=120,distance=0.8cm] node[above=4pt] {$a, b, c$} (m3);
		\draw (m3) edge[-latex',out=210,in=-30] node[below=4pt] {$d$} (m2);
	\end{tikzpicture}
	\caption{Example~\ref{ex:abcd}: automaton $\dfa$ (left) and a minimal memory structure $\memSkel$ (right) such that $\reach{\atmtnLang{\dfa}}$ and $\safe{\atmtnLang{\dfa}}$ are $\memSkel$-strongly-monotone.
	In figures, diamonds are used to depict automaton states and memory states, and accepting states are depicted with a double border.}
	\label{fig:abcd}
\end{figure}

To conclude this section, we discuss why, with neither the well-foundedness hypothesis nor the finitely branching hypothesis from Theorem~\ref{thm:safety}, we cannot expect such a characterization.%
\begin{remark} \label{rmk:noHypothesis}
	If the prefix preorder of an objective $\wc$ is not well-founded, then there is an infinite decreasing sequence of finite words $\word_1 \strictInvPrefOrd \word_2 \strictInvPrefOrd \ldots$ in $\colors^*$.
	This means that for all $i \ge 1$, there is $\word_i'\in\colors^\omega$ such that $\word_i\word_i'\in\wc$, but for $j > i$, $\word_j\word_i'\notin\wc$.
	We can then build the infinitely branching arena depicted in Figure~\ref{fig:noHypothesis} in which $\Ptwo$ first chooses a word $\word_j$, and $\Pone$ can win by playing a word $\word_i'$ with $i \ge j$.
	This requires infinite memory, even if $\wc$ is $\memSkelTriv$-strongly-monotone.
	\qedEx
\end{remark}
\begin{figure}[tbh]
	\centering
	\begin{tikzpicture}[every node/.style={font=\small,inner sep=1pt}]
		\draw (0,0) node[carre] (s1) {};
		\draw ($(s1)+(2,0)$) node[rond] (s2) {};
		\draw ($(s2)+(2,1)$) node[] (s3) {$\ldots$};
		\draw ($(s2)+(1.,0.3)$) node[] () {$\vdots$};
		\draw ($(s2)+(2,-0.2)$) node[] (s5) {$\ldots$};
		\draw ($(s2)+(1.,-0.9)$) node[] () {$\vdots$};
		\draw (s1) edge[-latex',out=45,in=135,decorate] node[above=3pt] {$\word_1$} (s2);
		\draw ($(s1)!0.5!(s2)$) node[] () {\raisebox{18pt}{$\vdots$}};
		\draw (s1) edge[-latex',out=-15,in=-165,decorate] node[below=3pt] {$\word_n$} (s2);
		\draw ($(0,-1.5)!0.5!(s2)$) node[] () {$\vdots$};
		\draw (s2) edge[-latex',decorate,out=30,in=180] node[above left=2pt] {$\word_1'$} (s3);
		\draw (s2) edge[-latex',decorate,out=-15,in=180] node[below=2pt] {$\word_n'$} (s5);
	\end{tikzpicture}
	\caption{Infinite branching arena in which $\Pone$ needs memory beyond the $\memSkel$-strong-monotony property in Remark~\ref{rmk:noHypothesis}.
		In figures, circles (resp.\ squares) represent arena vertices controlled by $\Pone$ (resp.\ $\Ptwo$), i.e., in $\states_1$ (resp.\ $\states_2$).
		Squiggly arrows indicate a sequence of edges.}
\label{fig:noHypothesis}
\end{figure}

\subparagraph*{Capturing progress and reachability objectives.}
To play optimally for general and regular reachability objectives with a memory $\memSkel$, $\memSkel$-strong-monotony is necessary (Lemma~\ref{lem:necessaryConditionSafety}) but not enough: the following example shows that the memory structure must keep track of \emph{progress}.

\begin{example} \label{ex:ab}
	Let $\colors = \{a, b\}$.
	We consider the regular language $b^*a^+bC^*$ of words that have to see at least one $a$, followed by at least one $b$.
	This language is recognized by the finite automaton $\dfa$ in Figure~\ref{fig:ab} (left).
	We write $\wc$ for the induced regular reachability objective: $W = \reach{\atmtnLang{\dfa}}$.

	In the arena in Figure~\ref{fig:ab} (center), $\Pone$ may win by starting a play with $ab$, but not without memory.
	The intuition is that playing $a$ first makes some progress (it reaches an automaton state with more winning continuations), but is not sufficient to win, even if repeated.
	Therefore, in our memory structures, if a word makes some progress but without guaranteeing the win when repeated, we want the memory state to change upon reading that word.
	The memory structure in Figure~\ref{fig:ab} (right) is sufficient for $\wc$; in particular, seeing the first $a$, which makes progress from $\atmtnInit$ to $\atmtnState_a$, changes the memory state.
	\qedEx
\end{example}
\begin{figure}[tbh]
	\centering
	\begin{tikzpicture}[every node/.style={font=\small,inner sep=1pt}]
		\draw (0,0) node[rond] (v) {};
		\draw (v) edge[-latex',out=150,in=210,distance=0.8cm] node[left=4pt] {$a$} (v);
		\draw (v) edge[-latex',out=-30,in=30,distance=0.8cm] node[right=4pt] {$b$} (v);

		\draw ($(v)-(3,0)$) node[diamant,double] (q3) {$\atmtnState_{ab}$};
		\draw ($(q3)-(1.5,0)$) node[diamant] (q2) {$\atmtnState_a$};
		\draw ($(q2)-(1.5,0)$) node[diamant] (q1) {$\atmtnInit$};
		\draw ($(q1.west)-(0.45,0)$) edge[-latex'] (q1);
		\draw (q1) edge[-latex'] node[above=4pt] {$a$} (q2);
		\draw (q2) edge[-latex'] node[above=4pt] {$b$} (q3);
		\draw (q1) edge[-latex',out=60,in=120,distance=0.8cm] node[above=4pt] {$b$} (q1);
		\draw (q2) edge[-latex',out=60,in=120,distance=0.8cm] node[above=4pt] {$a$} (q2);
		\draw (q3) edge[-latex',out=60,in=120,distance=0.8cm] node[above=4pt] {$a, b$} (q3);

		\draw ($(v)+(3.1,0)$) node[diamant] (m1) {$\memState_1$};
		\draw ($(m1.west)-(0.45,0)$) edge[-latex'] (m1);
		\draw ($(m1)+(1.5,0)$) node[diamant] (m2) {$\memState_2$};
		\draw (m1) edge[-latex'] node[above=4pt] {$a$} (m2);
		\draw (m1) edge[-latex',out=60,in=120,distance=0.8cm] node[above=4pt] {$b$} (m1);
		\draw (m2) edge[-latex',out=60,in=120,distance=0.8cm] node[above=4pt] {$a, b$} (m2);
	\end{tikzpicture}
	\caption{Example~\ref{ex:ab}: automaton $\dfa$ (left), an arena requiring memory for $\reach{\atmtnLang{\dfa}}$ (center), and a minimal sufficient memory structure (right).}
	\label{fig:ab}
\end{figure}

We formalize this intuition in the following definition, which is a generalization of the \emph{progress-consistency} property~\cite{BCRV22}.
Notation $\memPathsOn{\memState_1}{\memState_2}$, representing the finite words read from memory state $\memState_1$ to memory state $\memState_2$, was defined in Section~\ref{sec:preliminaries}.

\begin{definition}[$\memSkel$-progress-consistency]
	Let $\wc$ be an objective and $\memSkel = \memSkelFull$ be a memory structure.
	We say that $\wc$ is \emph{$\memSkel$-progress-consistent} if for all $\memState\in\memStates$, for all $\word_1\in\memPathsOn{\memInit}{\memState}$, for all $\word_2\in\memPathsOn{\memState}{\memState}$, if $\word_1 \strictPrefOrd \word_1\word_2$,
	\[\word_1 \strictPrefOrd \word_1\word_2 \implies \word_1(\word_2)^\omega\in\wc.\]
\end{definition}

Intuitively, this says that if it is possible to come back to the same memory state while reading a ``word that makes progress'' (i.e., that improves our situation by putting us in a position with more winning continuations), then repeating this word infinitely often from that point onward must be winning.
The notion of $\memSkelTriv$-progress-consistency corresponds to the previous definition of \emph{progress-consistency}~\cite{BCRV22}.

The discussion above shows that $\memSkel$-progress-consistency is necessary for a memory structure $\memSkel$ to be sufficient to play optimally.
As for $\memSkel$-strong-monotony, we distinguish the regular case from the general case.

\begin{lemma}[restate = necessaryReach, name = Necessity of $\memSkel$-progress-consistency] \label{lem:necessaryReach}
	Let $\wc$ be an objective and $\memSkel$ a memory structure.
	\begin{enumerate}
		\item If $\wc$ is regular and $\memSkel$ suffices to play optimally for $\wc$ in all finite one-player arenas, then $\wc$ is $\memSkel$-progress-consistent.
		\item In the general case, if $\memSkel$ suffices to play optimally for $\wc$ in all finitely branching one-player arenas, then $\wc$ is $\memSkel$-progress-consistent.
	\end{enumerate}
\end{lemma}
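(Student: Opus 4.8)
The plan is to prove the contrapositive, simultaneously for both items, following the same pattern as the necessity proof for $\memSkel$-strong-monotony (Lemma~\ref{lem:necessaryConditionSafety}): if $\wc$ is \emph{not} $\memSkel$-progress-consistent, I would exhibit a one-player arena of $\Pone$ --- finite when $\wc$ is regular, finitely branching in general --- in which $\Pone$ has a winning strategy from some vertex but no winning strategy based on $\memSkel$.

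First I would unpack the negated hypothesis: there exist $\memState \in \memStates$, $\word_1 \in \memPathsOn{\memInit}{\memState}$, and $\word_2 \in \memPathsOn{\memState}{\memState}$ with $\word_1 \strictPrefOrd \word_1\word_2$ but $\word_1(\word_2)^\omega \notin \wc$; note $\word_2 \neq \emptyWord$, as otherwise $\word_1 \strictPrefOrd \word_1$ would be required. From $\word_1 \strictPrefOrd \word_1\word_2$, that is $\inverse{\word_1}\wc \subsetneq \inverse{(\word_1\word_2)}\wc$, I would fix some $\word' \in \inverse{(\word_1\word_2)}\wc \setminus \inverse{\word_1}\wc$, so that $\word_1\word_2\word' \in \wc$ while $\word_1\word' \notin \wc$. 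When $\wc$ is regular, the left quotients $\inverse{\word_1}\wc$ and $\inverse{(\word_1\word_2)}\wc$ are again regular reachability or safety objectives, hence $\omega$-regular, so their nonempty difference contains an ultimately periodic $\word'$; this will let the arena be finite in the regular case, matching the hypothesis of item~1, while in the general case any $\word'$ is fine and only finite branching is required, matching item~2.

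Then I would build a ``lasso with a branch'': a path reading $\word_1$ from an initial vertex to a vertex $\s$, a cycle on $\s$ reading $\word_2$, and from $\s$ a branch reading $\word'$ (an infinite path in general, a lasso when $\word'$ is ultimately periodic). All vertices are controlled by $\Pone$, and $\s$ is the only vertex with two outgoing edges, so the arena is finitely branching and, when $\wc$ is regular, finite. On the one hand, $\Pone$ wins from the initial vertex: read $\word_1$, take the $\word_2$-cycle once, then follow the branch, producing the play $\word_1\word_2\word' \in \wc$. On the other hand, no strategy based on $\memSkel$ wins from the initial vertex: after reading $\word_1$ the memory is in state $\memState$ (since $\word_1 \in \memPathsOn{\memInit}{\memState}$), and each traversal of the $\word_2$-cycle returns the memory to $\memState$ (since $\word_2 \in \memPathsOn{\memState}{\memState}$), so $\s$ is always visited with memory state $\memState$; hence a $\memSkel$-based strategy makes the same choice at every visit of $\s$ --- either always re-entering the cycle, yielding the play $\word_1(\word_2)^\omega \notin \wc$, or taking the branch at the first visit, yielding $\word_1\word' \notin \wc$ --- and in both cases it is not winning. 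Thus $\memSkel$ does not suffice to play optimally on this arena, contradicting the hypothesis of the corresponding item.

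The computations are light; the only point requiring care is keeping the arena finite in the regular case, which relies on left quotients of regular reachability and safety objectives being $\omega$-regular, so that a nonempty difference of two of them admits an ultimately periodic member. The rest is bookkeeping of memory states along the lasso; in particular, one iteration of $\word_2$ already suffices for $\Pone$'s winning play, so Lemma~\ref{lem:increasing} is not even needed here.
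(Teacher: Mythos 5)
Your proposal is correct and follows essentially the same route as the paper's proof: contrapositive, extraction of a witness $\word'$ with $\word_1\word'\notin\wc$ and $\word_1\word_2\word'\in\wc$ (ultimately periodic in the regular case), and the one-player lasso arena where any $\memSkel$-based strategy must make the same choice at $\s$ and thus produces only $\word_1\word'$ or $\word_1(\word_2)^\omega$. The extra observations ($\word_2\neq\emptyWord$, Lemma~\ref{lem:increasing} not being needed) are accurate but inessential.
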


The following example should help the reader form the right intuition about $\memSkel$-progress-consistency.

\begin{example} \label{ex:ababa}
	Let $\colors = \{a, b\}$.
	We consider the regular language of words containing $ababa$ as a (non-necessarily contiguous) subword, recognized by the finite automaton $\dfa$ in Figure~\ref{fig:ababa} (left).
	We consider the memory structure $\memSkel$ remembering whether $a$ or $b$ was last seen, depicted in Figure~\ref{fig:ababa} (right).
	The regular reachability objective $\wc = \reach{\atmtnLang{\dfa}}$ is $\memSkel$-progress-consistent.
	Indeed, let us first consider $\memState = \memState_b$ in the definition of $\memSkel$-progress-consistency.
	A finite word $\word_1$ reaching $\memState_b$ in $\memSkel$ necessarily reaches $\atmtnInit$, $\atmtnState_{ab}$, or $\atmtnState_{abab}$ in $\atmtnStates$ (excluding the final state from the reasoning, as no progress is possible from it).
	After $\word_1$, words $\word_2$ that both $(i)$ make progress ($\word_1 \strictPrefOrd \word_1\word_2$) and $(ii)$ are a cycle on $\memState_b$ necessarily see both $a$ and $b$.
	Therefore, $\word_1(\word_2)^\omega$ is always a winning word.
	The same reasoning holds for $\memState = \memState_a$.
	Notice that the memory states from the memory structure do not carry enough information to ascertain when a word of the language has been seen (i.e., when the game is won).

	The upcoming Theorem~\ref{thm:reach} implies that $\memSkel$ suffices to play optimally for $\Pone$.
	\qedEx
\end{example}
\begin{figure}[tbh]
	\centering
	\begin{tikzpicture}[every node/.style={font=\small,inner sep=1pt}]
		\draw (0,0) node[diamant] (q1) {$\atmtnInit$};
		\draw ($(q1.west)-(0.45,0)$) edge[-latex'] (q1);
		\draw ($(q1)+(1.5,0)$) node[diamant] (q2) {$\atmtnState_a$};
		\draw ($(q2)+(1.5,0)$) node[diamant] (q3) {$\atmtnState_{ab}$};
		\draw ($(q3)+(1.5,0)$) node[diamant] (q4) {$\atmtnState_{aba}$};
		\draw ($(q4)+(1.5,0)$) node[diamant] (q5) {$\atmtnState_{abab}$};
		\draw ($(q5)+(1.5,0)$) node[double,diamant] (q6) {$\finalState$};
		\draw (q1) edge[-latex'] node[above=4pt] {$a$} (q2);
		\draw (q2) edge[-latex'] node[above=4pt] {$b$} (q3);
		\draw (q3) edge[-latex'] node[above=4pt] {$a$} (q4);
		\draw (q4) edge[-latex'] node[above=4pt] {$b$} (q5);
		\draw (q5) edge[-latex'] node[above=4pt] {$a$} (q6);
		\draw (q1) edge[-latex',out=60,in=120,distance=0.8cm] node[above=4pt] {$b$} (q1);
		\draw (q2) edge[-latex',out=60,in=120,distance=0.8cm] node[above=4pt] {$a$} (q2);
		\draw (q3) edge[-latex',out=60,in=120,distance=0.8cm] node[above=4pt] {$b$} (q3);
		\draw (q4) edge[-latex',out=60,in=120,distance=0.8cm] node[above=4pt] {$a$} (q4);
		\draw (q5) edge[-latex',out=60,in=120,distance=0.8cm] node[above=4pt] {$b$} (q5);
		\draw (q6) edge[-latex',out=60,in=120,distance=0.8cm] node[above=4pt] {$a, b$} (q6);

		\draw ($(q1)+(10,0)$) node[diamant] (m1) {$\memState_b$};
		\draw ($(m1.north)+(0,0.45)$) edge[-latex'] (m1);
		\draw ($(m1)+(1.5,0)$) node[diamant] (m2) {$\memState_a$};
		\draw (m1) edge[-latex',out=30,in=150] node[above=4pt] {$a$} (m2);
		\draw (m2) edge[-latex',out=-150,in=-30] node[below=4pt] {$b$} (m1);
		\draw (m1) edge[-latex',out=150,in=210,distance=0.8cm] node[left=4pt] {$b$} (m1);
		\draw (m2) edge[-latex',out=-30,in=30,distance=0.8cm] node[right=4pt] {$a$} (m2);
	\end{tikzpicture}
	\caption{Example~\ref{ex:ababa}: automaton $\dfa$ (left) and memory structure $\memSkel$ (right).}
	\label{fig:ababa}
\end{figure}

This need to capture \emph{progress} was not necessary to understand the memory requirements of safety objectives, which may be explained by the following reasoning.
\begin{remark}
	Unlike general reachability objectives, all general safety objectives are $\memSkelTriv$-progress-consistent.
	Here is a proof of this statement.
	Let $\wc \subseteq \colors^\omega$ be a general safety objective.
	Let $\word_1, \word_2\in\memPathsOn{\memInit}{\memInit}^{\memSkelTriv} = \colors^*$ be such that $\word_1\strictPrefOrd \word_1\word_2$.
	This implies that $\word_1\word_2$, and therefore $\word_1$, have a non-empty set of winning continuations.
	Assume by contradiction that $\word_1(\word_2)^\omega\notin\wc$.
	As $\wc$ is a general safety objective, there is a smallest $n\ge 1$ such that $\word_1(\word_2)^{n}$ has no winning continuation.
	Hence, $\word_1(\word_2)^{n-1}$ still has some winning continuations, so $\word_1(\word_2)^{n} \strictPrefOrd \word_1(\word_2)^{n-1}$.
	This is a contradiction, as $\word_1\strictPrefOrd \word_1\word_2$ implies that $\word_1(\word_2)^{n-1}\prefOrd \word_1\word_2(\word_2)^{n-1} = \word_1(\word_2)^{n}$ by Lemma~\ref{lem:increasing}.
	This property is, at least intuitively, a reason hinting that the memory requirements of safety objectives are lower and easier to understand than those for their complement reachability objective.
	\qedEx
\end{remark}

We have now discussed two necessary properties for a memory $\memSkel$ to be sufficient to play optimally for an objective.
For regular reachability objectives, it appears that the conjunction of these two properties is also sufficient.

\begin{theorem}[restate = thmReach, name = Characterization for reachability]
\label{thm:reach}
	Let $\wc$ be a regular reachability objective and $\memSkel$ be a finite memory structure.
	Memory $\memSkel$ suffices to play optimally for $\wc$ if and only if $\wc$ is $\memSkel$-strongly-monotone and $\memSkel$-progress-consistent.
\end{theorem}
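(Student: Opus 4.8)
The forward (necessity) implication is already in hand: it is precisely the conjunction of Lemma~\ref{lem:necessaryConditionSafety}(1) and Lemma~\ref{lem:necessaryReach}(1), applied to finite one-player arenas.

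For the converse, assume that $\wc = \reach{\atmtnLang{\dfa_\wc}}$ is $\memSkel$-strongly-monotone and $\memSkel$-progress-consistent, and fix an arena $\arena$; the plan is to build a single strategy of $\Pone$ based on $\memSkel$ winning from the whole winning region, which I call $W_1$. I would work inside the product of $\arena$ with $\dfa_\wc$, i.e.\ the arena on vertices $(\s, \atmtnState) \in \states \times \atmtnStates$ with an edge $(\s,\atmtnState) \to (\s', \atmtnUpd(\atmtnState,\clr))$ for each $(\s,\clr,\s') \in \edges$, the controller of $(\s,\atmtnState)$ being that of $\s$. Since $\finalState$ is absorbing, reaching $\states \times \{\finalState\}$ in this product is a plain reachability objective, so $\Pone$ has a memoryless optimal strategy $\tau$ there; I would take $\tau$ to be rank-decreasing with respect to the distance-to-target function $\rank\colon \states\times\atmtnStates \to \IN\cup\{\infty\}$. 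Two structural ingredients drive the construction. First, using that $\prefOrd$ is preserved by reading colors (Lemma~\ref{lem:increasing}), one checks by induction on the rank that both the winning region and $\rank$ are monotone in the automaton component: $\atmtnState_1 \prefOrd \atmtnState_2$ implies $\rank(\s,\atmtnState_2) \le \rank(\s,\atmtnState_1)$. Second, by Lemma~\ref{lem:chains}, $\memSkel$-strong-monotony means every $\coReachable{\memState}$ is a $\prefOrd$-chain, hence --- as $\dfa_\wc$ is finite --- has a $\prefOrd$-minimum. The strategy $\strat$ I would define plays, at a vertex $\s \in \states_1$ reached with memory $\memState$, the move $\tau(\s, \mu)$ where $\mu$ is the $\prefOrd$-least automaton state of $\coReachable{\memState}$ from which $\Pone$ wins at $\s$ (and anything outside of $W_1$); intuitively, $\Pone$ plays as if it were in the worst automaton position still compatible with the colors recorded by $\memSkel$.

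Correctness would be checked along a play $\play$ consistent with $\strat$ from some $\s_0\in W_1$, tracking the true configuration $(\s_n, \atmtnState_n, \memState_n)$ with $\atmtnState_n = \atmtnUpdWord(\atmtnInit, \colHatFin(\hist_n))$. Step one is the invariant $\rank(\s_n,\atmtnState_n) < \infty$: at a vertex of $\Pone$, the move chosen for $\mu_n \prefOrd \atmtnState_n$ is rank-decreasing for $\mu_n$, and monotonicity of $\rank$ in the automaton component together with Lemma~\ref{lem:increasing} transfers this to keep $\atmtnState_{n+1}$ in the winning region; at a vertex of $\Ptwo$ the opponent cannot leave the winning region. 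Step two is termination: one must show $\atmtnState_n = \finalState$ for some $n$ (treating finite $\arena$ first). Assuming not, since $\atmtnStates$ is finite ($\wc$ regular) and $\memStates$ is finite, some full configuration recurs, so $\play = \word_1(\word_2)^\omega$ with $\word_1 \in \memPathsOn{\memInit}{\memState^*}$, $\word_2 \in \memPathsOn{\memState^*}{\memState^*}$ and $\atmtnUpdWord(\atmtnState^*, \word_2) = \atmtnState^*$. The decisive point is to prove $\word_1 \strictPrefOrd \word_1 \word_2$, for then $\memSkel$-progress-consistency yields $\word_1(\word_2)^\omega \in \wc$, contradicting that $\play$ never reaches $\finalState$. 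Strict progress must be extracted from the fact that $\strat$ aims at the \emph{worst} winning automaton state: if the automaton component stayed $\prefEq$-constant around the loop, the rank-decreasing discipline of $\tau$ at the $\Pone$-vertices of the loop would, via monotonicity of $\rank$, force $\rank(\s_n,\atmtnState_n)$ to strictly decrease each time around the cycle --- impossible.

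I expect step two to be the main obstacle. A naive recurrence of the full configuration only yields $\word_1 \prefEq \word_1 \word_2$ (the automaton returns to $\atmtnState^*$), which is too weak; the real work is to combine the choice of $\mu_n$, the monotonicity of $\rank$ under $\prefOrd$, and the non-existence of an infinite rank-decreasing cycle to upgrade this to strict progress, and it is exactly here that $\memSkel$-strong-monotony and $\memSkel$-progress-consistency are used in tandem. For arenas of arbitrary cardinality the termination step must be carried out with a well-founded, ordinal-valued measure in place of a natural-number rank (the bare rank of the true automaton state may fail to decrease, so one has to combine it with position information along the chain $\coReachable{\memState_n}$); this is delicate but enabled by regularity of $\wc$ making $\prefOrd$ well-founded, in the spirit of Theorem~\ref{thm:safety}. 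Alternatively, for finite arenas sufficiency can be obtained by factoring through the one-player case --- handling $\Ptwo$'s side, namely the regular safety objective $\comp{\wc}$, via Theorem~\ref{thm:safety} together with the fact that $\memSkel$-strong-monotony is preserved under complementation --- and then invoking the one-player-to-two-player lift for chromatic finite memory of~\cite{BLORV22}.
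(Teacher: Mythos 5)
Your overall architecture matches the paper's: necessity via Lemmas~\ref{lem:necessaryConditionSafety} and~\ref{lem:necessaryReach}; sufficiency by having $\Pone$ play, at $(\s,\memState)$, a fastest winning strategy for the $\prefOrd$-least winnable state of the chain $\coReachable{\memState}$, and then deriving a contradiction at a recurring configuration using ranks and $\memSkel$-progress-consistency. However, the step you yourself flag as ``the real work'' is precisely the part that is missing, and two things go wrong around it. First, the inference ``some full configuration recurs, so $\play=\word_1(\word_2)^\omega$'' is false in two-player (and even finitely branching one-player) arenas: $\Ptwo$ may make different choices on successive visits to the same configuration, so the play need not be ultimately periodic. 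The paper instead extracts an infinite sequence of indices $(i_j)$ at which the triple (memory state, true automaton state, target state) recurs and reasons about the \emph{generally distinct} segments $\word_j$ between consecutive recurrences. Second, and more importantly, recurrence of the configuration gives $\atmtnUpdWord(\atmtnState^*,\word_2)=\atmtnState^*$, i.e.\ $\word_1\prefEq\word_1\word_2$, which is never the strict progress that $\memSkel$-progress-consistency needs; your sketch for upgrading this (``rank would strictly decrease each time around the cycle'') only covers the horn of the dichotomy in which the target state follows the automaton transitions throughout the loop, and says nothing about the horn in which the target \emph{strictly improves} at some intermediate index --- which is exactly where progress-consistency must be invoked.

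The missing idea is a rotation of the loop word. In the paper's proof, if the target sequence $\atmtnState_i'$ strictly improves at some index $k$ strictly inside a recurring segment $\word_j=\word_j^{(1)}\word_j^{(2)}$ (split at $k$), then the rotated word $\word_j^{(2)}\word_j^{(1)}$ is a cycle on the \emph{intermediate} memory state $\memState_k$ and satisfies $\atmtnState_{k+1}'\strictPrefOrd\atmtnUpdWord(\atmtnState_{k+1}',\word_j^{(2)}\word_j^{(1)})$; $\memSkel$-progress-consistency applied at $\memState_k$ then forces $(\word_j)^\omega$ to be winning from $\atmtnState_{i_j}'$, hence from $\atmtnState_{i_j}$ by the under-approximation $\atmtnState_i'\prefOrd\atmtnState_i$, contradicting that $\word_j$ cycles on the non-final state $\atmtnState^*$. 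If no such strict improvement occurs past some point, the quickest-win ordinal ranks attached to the \emph{targets} (not to the true automaton states, which your sketch conflates) decrease strictly forever, which is impossible. Without the rotation, strict progress cannot be extracted, so the proof does not close. Your fallback via the one-to-two-player lift of~\cite{BLORV22} also falls short: it would still require establishing $\Pone$'s one-player case for the reachability objective (the same progress-consistency argument, albeit in a setting where plays really are ultimately periodic), and it only yields the result for finite arenas, whereas the theorem is stated for arenas of arbitrary cardinality.
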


\begin{remark}
Unlike safety objectives, our characterization is only shown to hold for \emph{regular} reachability objectives.
We discuss in Section~\ref{sec:reach}, Example~\ref{ex:genReachTough} why our proof technique does not apply to general reachability objectives (even with $\prefOrd$ well-founded and finite branching of the arenas).
\qedEx
\end{remark}

For objectives beyond reachability and safety, $\memSkel$-strong-monotony and $\memSkel$-progress-consistency may not imply the sufficiency of $\memSkel$ to play optimally.
For instance, with $\colors = \{a, b\}$, let us consider the objective
\[
	\wc = \{\word\in\colors^\omega\mid \text{$a$ and $b$ are both seen infinitely often}\},
\]
which is $\omega$-regular (it can be recognized by a \emph{deterministic B\"uchi automaton} with two states), but is not a general reachability nor safety objective.
Objective $\wc$ is $\memSkelTriv$-strongly-monotone and $\memSkelTriv$-progress-consistent, but $\memSkelTriv$ does not suffice to play optimally.

\subparagraph*{Lift for regular objectives.}
As a by-product of our results, we observe that for regular objectives, our characterizations deal with arbitrary arenas of any cardinality, but the properties used in the characterizations are already necessary in \emph{finite one-player} arenas.
This means that strategy-wise, to accomplish a regular objective, all the complexity already appears in finite graphs with no opponent.
For the specific class of regular objectives that we study, this strengthens so-called \emph{one-to-two-player lifts} from the literature~\cite{GZ05,BLORV22}.

\begin{theorem}[Finite-to-infinite, one-to-two-player lift]
	Let $\wc$ be a regular (reachability or safety) objective and $\memSkel$ be a finite memory structure.
	Memory $\memSkel$ suffices to play optimally for $\wc$ (in all arenas) if and only if $\memSkel$ suffices to play optimally for $\wc$ in finite one-player arenas.
\end{theorem}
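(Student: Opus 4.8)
The plan is to obtain this statement as a direct corollary of the characterizations already established, combined with the two necessity lemmas. The forward implication is immediate: finite one-player arenas form a subclass of all arenas, so if $\memSkel$ suffices to play optimally for $\wc$ in all arenas, it in particular suffices in all finite one-player arenas. All the content is in the converse.

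For the converse, assume $\memSkel$ suffices to play optimally for $\wc$ in all finite one-player arenas, and split on the type of regular objective. If $\wc$ is a regular safety objective, then by Lemma~\ref{lem:necessaryConditionSafety}(1) the objective $\wc$ is $\memSkel$-strongly-monotone. Since $\wc$ is regular, $\prefEq_\wc$ has finitely many equivalence classes (Myhill--Nerode), so any chain for $\prefOrd_\wc$ meets only finitely many classes and hence contains a minimal element; that is, $\prefOrd_\wc$ is well-founded. Theorem~\ref{thm:safety}(1) then yields that $\memSkel$ suffices to play optimally for $\wc$ in all arenas.

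If instead $\wc$ is a regular reachability objective, then Lemma~\ref{lem:necessaryConditionSafety}(1) gives that $\wc$ is $\memSkel$-strongly-monotone and Lemma~\ref{lem:necessaryReach}(1) gives that $\wc$ is $\memSkel$-progress-consistent. Since $\wc$ is a regular reachability objective and $\memSkel$ is finite, the hypotheses of Theorem~\ref{thm:reach} are met, and that theorem directly yields that $\memSkel$ suffices to play optimally for $\wc$ in all arenas. This covers both cases.

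Since the substantive work is packaged inside the cited results, there is no genuine obstacle here; the only points requiring a moment's care are (i) invoking the necessity lemmas with exactly the class of arenas assumed in the hypothesis (finite one-player arenas), and (ii) observing that regularity supplies the well-foundedness of $\prefOrd_\wc$ needed by Theorem~\ref{thm:safety}(1). Everything else is bookkeeping. Conceptually, the statement records that for regular objectives all the combinatorial obstructions that force extra memory ($\memSkel$-strong-monotony, plus $\memSkel$-progress-consistency in the reachability case) already manifest in finite graphs with no opponent.
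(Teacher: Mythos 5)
Your proposal is correct and follows essentially the same route as the paper's own proof: the left-to-right direction by restriction of quantifiers, and the converse by combining the necessity lemmas (Lemmas~\ref{lem:necessaryConditionSafety} and~\ref{lem:necessaryReach} in the regular case) with the sufficiency characterizations (Theorems~\ref{thm:safety} and~\ref{thm:reach}). The only addition is your explicit justification that regularity yields well-foundedness of $\prefOrd_\wc$, which the paper treats as immediate.
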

\begin{proof}
	The implication from left-to-right holds as this is the same property quantified over fewer arenas.
	We argue the other implication for each case.

	For regular safety objectives $\wc$, we showed that if $\memSkel$ suffices in finite one-player arenas, then $\wc$ is $\memSkel$-strongly-monotone (by Lemma~\ref{lem:necessaryConditionSafety} as $\wc$ is regular), which implies that $\memSkel$ suffices in all arenas (by Theorem~\ref{thm:safety} as $\wc$ is a safety condition with a well-founded preorder).

	For regular reachability objectives $\wc$, we showed that if $\memSkel$ suffices in finite one-player arenas, then $\wc$ is $\memSkel$-strongly-monotone and $\memSkel$-progress-consistent (by Lemmas~\ref{lem:necessaryConditionSafety} and~\ref{lem:necessaryReach} as $\wc$ is regular), which implies that $\memSkel$ suffices in all arenas (by Theorem~\ref{thm:reach} as $\wc$ is a regular reachability objective).
\end{proof}

\subparagraph*{On the complexity of finding small memory structures.}
We finally discuss the computational complexity of finding small memory structures for regular objectives.
We formalize the question as two decision problems: given a regular reachability or safety objective, how much memory is required to play optimally for this objective?

\begin{decisionProblem}
	\problemtitle{\textsc{Memory-Safe}}
	\probleminput{A finite automaton $\dfa$ inducing the regular safety objective $\wc = \safe{\atmtnLang{\dfa}}$ and an integer $k\in\IN$.}
	\problemquestion{Does there exist a memory structure $\memSkel$ of size at most $k$ which suffices to play optimally for $\wc$?}
\end{decisionProblem}

\begin{decisionProblem}
	\problemtitle{\textsc{Memory-Reach}}
	\probleminput{A finite automaton $\dfa$ inducing the regular reachability objective $\wc = \reach{\atmtnLang{\dfa}}$ and an integer $k\in\IN$.}
	\problemquestion{Does there exist a memory structure $\memSkel$ of size at most $k$ which suffices to play optimally for $\wc$?}
\end{decisionProblem}

It follows from our characterizations (Theorems~\ref{thm:safety} and~\ref{thm:reach}) that \textsc{Memory-Safe} is equivalent to asking whether there is a memory structure $\memSkel$ of size at most $k$ such that $\safe{\atmtnLang{\dfa}}$ is $\memSkel$-strongly-monotone, and \textsc{Memory-Reach} whether there is a memory structure $\memSkel$ of size at most $k$ such that $\reach{\atmtnLang{\dfa}}$ is $\memSkel$-strongly-monotone and $\memSkel$-progress-consistent.

\begin{remark}
	The way $k$ is encoded (in binary or in unary) has no impact on the complexity.
	Indeed, the input consists of the number $k$ together with a (deterministic) automaton describing the objective.
	Since the automaton is an upper bound on the memory requirements (for both \textsc{Memory-Safe} and \textsc{Memory-Reach}), the problem is non-trivial only when $k$ is smaller than the size of the automaton.
	Therefore, the size of the input is dominated by the size of the automaton in the non-trivial cases.
	\qedEx
\end{remark}

\begin{theorem}[restate = thmComplexity, name = Complexity of \textsc{Memory-Safe} and \textsc{Memory-Reach}]
	\label{thm:NPcomplete}
	Both \textsc{Memory-Safe} and \textsc{Memory-Reach} are \NP-complete.
\end{theorem}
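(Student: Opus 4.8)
We prove that \textsc{Memory-Safe} and \textsc{Memory-Reach} are both in \NP\ and \NP-hard. Throughout we use the combinatorial reformulations recalled above: by Theorems~\ref{thm:safety} and~\ref{thm:reach}, \textsc{Memory-Safe} is a yes-instance iff some memory structure of size $\le k$ makes $\wc = \safe{\atmtnLang{\dfa}}$ $\memSkel$-strongly-monotone, and \textsc{Memory-Reach} is a yes-instance iff some memory structure of size $\le k$ makes $\wc = \reach{\atmtnLang{\dfa}}$ both $\memSkel$-strongly-monotone and $\memSkel$-progress-consistent.

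\textbf{Membership in \NP.} We may assume that $k$ is at most the number of states of $\dfa$: indeed, $\dfa$ itself, with its final states forgotten, is a legal memory structure, each co-reachable set $\coReachable{\memState}$ is then a singleton hence a $\prefOrd_\wc$-chain, and every cycle on a memory state is a $\prefEq_\wc$-loop so the premise of $\memSkel$-progress-consistency is vacuous; thus $\dfa$ always suffices, and any relevant witness of size $\le k$ has polynomial size and can be guessed. It then remains to check in polynomial time that a guessed $\memSkel$ works. Compute the comparability preorder $\prefOrd_\wc$ on the states of $\dfa$ by a reachability analysis in the product $\dfa\times\dfa$. Build the product $\dfa\times\memSkel$, compute each set $\coReachable{\memState}$ of automaton states co-reachable with $\memState$, and test that it is a chain (Lemma~\ref{lem:chains}). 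For reachability, additionally test $\memSkel$-progress-consistency, using that an iterated $\memSkel$-cycle induces on the automaton component a $\prefOrd_\wc$-increasing sequence that stabilizes within a number of steps bounded by the size of $\dfa$ (by Lemma~\ref{lem:increasing}); hence both the strict-progress premise $\word_1 \strictPrefOrd_\wc \word_1\word_2$ and the conclusion $\word_1(\word_2)^\omega\in\wc$ reduce to inspecting the cycles of the product graph.

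\textbf{\NP-hardness.} This is the heart of the statement. Note first that the chain-covering part alone is easy: the comparability graph of $\dfa$ is a comparability graph, hence perfect, and a minimum cover of its states by chains has size equal to the largest antichain, which is computable in polynomial time. Hence the hardness must be produced entirely by the requirement that $\memSkel$ be a \emph{deterministic structure over $\colors$} --- equivalently, that the partition of $\colors^*$ by current memory state be a right congruence, which is close in spirit to automaton minimization under separation constraints. The plan is therefore to reduce from an appropriate \NP-complete problem by engineering a regular objective $\wc$, given by a finite automaton of size polynomial in the instance, together with a bound $k$ polynomial in the instance, such that: $(i)$ from any solution of the combinatorial instance one builds a memory structure of size $\le k$ that is $\memSkel$-strongly-monotone (for reachability, also $\memSkel$-progress-consistent), routing the relevant words through dedicated colors so that the required deterministic transitions never conflict; and $(ii)$ conversely, every $\memSkel$-strongly-monotone memory structure of size $\le k$ yields a solution, because gadgets force prescribed pairs of words to reach $\prefOrd_\wc$-incomparable automaton states, hence to be kept in different memory states, reproducing exactly the constraints of the instance. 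For \textsc{Memory-Reach}, the gadget is furthermore designed so that $\memSkel$-progress-consistency is either automatically satisfied by every candidate memory structure of size $\le k$ or coincides with the constraints already imposed by $\memSkel$-strong-monotony, so that essentially the same construction witnesses hardness for both problems.

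\textbf{Expected main obstacle.} The delicate point is to make both directions tight despite the rigidity of deterministic memory: the objective must be structured so that, on one hand, \emph{every} valid assignment of the combinatorial instance can be realized as a legal transition function of $\memSkel$ --- not merely "generic" ones --- and, on the other hand, no memory structure can beat the optimal assignment by exploiting incidental comparabilities, by identifying words that the construction intends to separate, or (for reachability) by violating progress-consistency in a way that is nonetheless harmless. Establishing these "no-shortcut" properties of the gadget, i.e.\ that $\memSkel$-strongly-monotone (and $\memSkel$-progress-consistent) memory structures of size $\le k$ correspond bijectively to solutions of the instance, is where I expect the bulk of the technical work to lie.
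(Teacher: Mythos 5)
Your \NP-membership argument is essentially the paper's: bound $k$ by the number of states of $\dfa$ (since $\dfa$ itself always suffices, for the reasons you give), guess a polynomial-size $\memSkel$, check $\memSkel$-strong-monotony by precomputing $\prefOrd_\wc$ via language-containment queries and testing that each $\coReachable{\memState}$ is a chain, and check $\memSkel$-progress-consistency using the stabilization of the $\prefOrd$-increasing sequence obtained by iterating a cycle (the paper makes this precise in Lemma~\ref{lem:PCreformulation}, reducing it to a polynomial number of emptiness tests for intersections $\memPathsOn{\memState}{\memState}\cap\memPathsOn{\atmtnState_1}{\atmtnState_2}\cap\memPathsOn{\atmtnState_2}{\atmtnState_2}$ of DFA languages; your phrase ``inspecting the cycles of the product graph'' needs this kind of refinement, since the product graph has exponentially many cycles). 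Your observation that hardness cannot come from chain covering alone --- minimum chain cover of a preorder is polynomial by Dilworth --- and must come from requiring the decomposition to be realized by a deterministic right congruence over $\colors$ is also correct and matches the paper's discussion contrasting chromatic with chaotic memory.

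However, the \NP-hardness part is a genuine gap: you describe a \emph{plan} for a reduction (pick some \NP-complete problem, engineer gadgets, prove completeness and soundness) without naming the source problem, exhibiting the gadget, or proving either direction; you yourself flag the ``no-shortcut'' properties as work still to be done, and that is precisely where the entire content of the hardness proof lies. The paper reduces from \textsc{HamiltonianCycle}: given a directed graph $\graph$ with $n$ vertices and $m$ edges, it builds an automaton over $\{\inn,\out\}$ plus one fresh letter per state, consisting of the disjoint union of the ``incidence automata'' of the cycle graph $\cycleGr_n$ and of $\graph$ (each edge-state maps by $\inn$ and $\out$ to its endpoints), together with states $\atmtnInit,\bot,\top$; the fresh letters force exactly the comparabilities $\vertex^\cycleGr\strictPrefOrd\vertex$ and $\edge^\cycleGr\strictPrefOrd\edge$, so that a monotone decomposition of size $n+m+1$ (the maximum antichain size) must injectively pair cycle vertices with graph vertices and cycle edges with graph edges in a way compatible with $\atmtnUpd(\cdot,\inn)$ and $\atmtnUpd(\cdot,\out)$, which is exactly a Hamiltonian cycle. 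The construction moreover makes any progress immediately winning, so the objective is $\memSkelTriv$-progress-consistent and the same reduction covers \textsc{Memory-Reach} via Theorem~\ref{thm:reach}. Without such a concrete construction and its two-directional correctness proof, your argument does not establish \NP-hardness.
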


For \NP-hardness, we construct a reduction from the Hamiltonian cycle problem which works for both \textsc{Memory-Safe} and \textsc{Memory-Reach}.

Our main insight is to reformulate the notion of $\memSkel$-strong-monotony (\NP-membership of \textsc{Memory-Safe} follows from this reformulation).
Let $\wc = \safe{\atmtnLang{\dfa}}$ be a regular objective and $\memSkel = \memSkelFull$ be a memory structure.
In Example~\ref{ex:abcd}, we have seen how to go from a memory structure $\memSkel$ such that $\wc$ is $\memSkel$-strongly-monotone to a covering of the states of $\dfa$ by chains of states.
We formulate exactly the requirements for such coverings in order to have a point of view equivalent to $\memSkel$-strong-monotony.
For $\chain\subseteq\atmtnStates$ a set of automaton states and $\clr\in\colors$ a color, we define $\atmtnUpd(\chain, \clr) = \{\atmtnUpd(\atmtnState, \clr) \mid \atmtnState\in\chain\}$.

\begin{definition}[Monotone decomposition] \label{def:monDec}
	Let $\dfa = \dfaFull$ be an automaton.
	We say that the sets $\coReachable{1}, \ldots, \coReachable{k} \subseteq \atmtnStates$ form a \emph{monotone decomposition of $\dfa$} if
	\begin{enumerate}[(a)]
		\item $\atmtnStates = \bigcup_{i=1}^k \coReachable{i}$, \label{cond:1}
		\item for all $\clr\in\colors$, for all $i\in\{1, \ldots, k\}$, there is $j\in\{1, \ldots, k\}$ such that $\atmtnUpd(\coReachable{i}, \clr) \subseteq \coReachable{j}$, and \label{cond:2}
		\item for all $i\in\{1, \ldots, k\}$, $\coReachable{i}$ is a chain for $\prefOrd$.\label{cond:3}
	\end{enumerate}
\end{definition}

Note that the sets $\coReachable{i}$ do not have to be disjoint (as was illustrated in Example~\ref{ex:abcd}).
If we only consider requirements~\eqref{cond:1} and~\eqref{cond:2} of this definition, we recover the definition of an \emph{admissible decomposition}, which can be used to quotient an automaton~\cite{GY65}.
Here, we add the additional requirement~\eqref{cond:3} that each set of states is a chain for $\prefOrd$.
Note that there always exists an admissible decomposition with just one set (by taking $\chain_1 = \atmtnStates$), but finding a small \emph{monotone} decomposition may not be so easy.
This point of view in terms of monotone decompositions turns out to be equivalent to our initial point of view in terms of $\memSkel$-strong-monotony in the following sense.

\begin{lemma}[restate = twoViewsOnMonotony, name = ] \label{lem:twoViewsOnMonotony}
	Let $\dfa$ be an automaton and $\wc$ be equal to $\safe{\atmtnLang{\dfa}}$ or $\reach{\atmtnLang{\dfa}}$.
	Automaton $\dfa$ admits a monotone decomposition with $k$ sets if and only if $\wc$ is $\memSkel$-strongly-monotone for some memory structure $\memSkel$ of size $k$.
\end{lemma}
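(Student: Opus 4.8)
The plan is to prove the two implications by making explicit the evident dictionary between memory structures witnessing $\memSkel$-strong-monotony and monotone decompositions of $\dfa$, with Lemma~\ref{lem:chains} doing most of the work: that lemma already rephrases ``$\wc$ is $\memSkel$-strongly-monotone'' as ``for every memory state $\memState$, the co-reachable set $\coReachable{\memState}^\wc$ is a chain for $\prefOrd_\wc$'', which is precisely condition~(c) of Definition~\ref{def:monDec} evaluated on the family $(\coReachable{\memState}^\wc)_{\memState\in\memStates}$. So the whole argument reduces to checking that, up to indexing, the families of co-reachable sets produced by memory structures are exactly the monotone decompositions of $\dfa$, and that this correspondence preserves the number of sets / memory states on the nose.

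For the direction from a memory structure to a decomposition, I would take $\memSkel=\memSkelFull$ of size $k$ with $\wc$ $\memSkel$-strongly-monotone and simply consider the $k$ sets $\coReachable{\memState}^\wc$, one per $\memState\in\memStates$. Condition~(c) of Definition~\ref{def:monDec} is immediate from Lemma~\ref{lem:chains}; condition~(a) holds because every state of $\dfa$ is reachable by some word (discarding unreachable states changes neither $\wc$ nor the problem); and condition~(b) is the one-line check that $j:=\memUpd(\memState,\clr)$ works, since a word reaching $(\memState,\atmtnState)$ in $(\memSkel,\dfa)$ extends by $\clr$ to a word reaching $(j,\atmtnUpd(\atmtnState,\clr))$, whence $\atmtnUpd(\coReachable{\memState}^\wc,\clr)\subseteq\coReachable{j}^\wc$. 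For the converse, from a monotone decomposition $\coReachable{1},\dots,\coReachable{k}$ I would build $\memSkel$ on state set $\{1,\dots,k\}$, pick as initial state any $i$ with $\atmtnInit\in\coReachable{i}$ (condition~(a)), and set $\memUpd(i,\clr)$ to be any $j$ with $\atmtnUpd(\coReachable{i},\clr)\subseteq\coReachable{j}$ (condition~(b)); a straightforward induction on $|\word|$ then gives $\coReachable{\memState}^\wc\subseteq\coReachable{\memState}$ for every memory state $\memState$, so each $\coReachable{\memState}^\wc$ is a chain (a subset of the chain $\coReachable{\memState}$, by condition~(c)), and Lemma~\ref{lem:chains} yields $\memSkel$-strong-monotony. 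Both translations keep the cardinality equal to $k$, as required.

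I would handle the two objectives $\wc=\safe{\atmtnLang{\dfa}}$ and $\wc=\reach{\atmtnLang{\dfa}}$ uniformly: as already observed in the text, $\memSkel$-strong-monotony of $\wc$ and of $\comp{\wc}$ coincide, and $\prefOrd_{\comp{\wc}}={\invPrefOrd_\wc}$, so ``being a chain for $\prefOrd_\wc$'' is the same notion for both; nothing in the argument above actually distinguishes the two cases. The step I expect to need the most care is the interface between the input automaton $\dfa$ and the canonical automaton $\dfa_\wc$ underlying Lemma~\ref{lem:chains} and the extension of $\prefOrd$ to automaton states: one must ensure that comparability of words for $\prefOrd_\wc$ matches comparability of the $\dfa$-states they reach, which is transparent when $\dfa=\dfa_\wc$ (final state absorbing, states indexing prefix-equivalence classes) but needs a normalization step for an arbitrary $\dfa$. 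The remaining fussy point is purely bookkeeping — unreachable states of $\dfa$ and unreachable memory states yield empty co-reachable sets, which are harmless for all three conditions of Definition~\ref{def:monDec} but must be tracked so the count stays exactly $k$. Everything else is routine unfolding of the definitions.
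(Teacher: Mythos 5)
Your proposal is correct and follows essentially the same route as the paper's proof: both directions are established via the correspondence between memory states and the co-reachable sets $\coReachable{\memState}^\wc$, with the same choice of transitions ($j=\memUpd(\memState,\clr)$ in one direction, any $j$ with $\atmtnUpd(\coReachable{i},\clr)\subseteq\coReachable{j}$ in the other) and the same induction showing that the $\dfa$-state reached by a word always lies in the decomposition set associated with the current memory state. The only cosmetic difference is that you route condition~(c) through Lemma~\ref{lem:chains} where the paper re-derives it directly, and you flag explicitly the bookkeeping about unreachable states that the paper handles with a parenthetical.
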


It is instructive to reformulate the characterization of \emph{chaotic} memory requirements from~\cite{CFH14}: the original phrasing was that the number of memory states necessary and sufficient to play optimally for the safety objective $\wc$ is the size of the largest antichain of $\prefOrd_\wc$.
Using our terminology and Dilworth's theorem, it is equivalent to the smallest number of chains required to cover all states; that is, decompositions satisfying~\eqref{cond:1} and~\eqref{cond:3} in Definition~\ref{def:monDec}, but not necessarily~\eqref{cond:2}.
Hence, it is smaller in general.

We have not discussed membership in \NP of \textsc{Memory-Reach}, which is slightly more involved and is explained in Section~\ref{sec:complexity}, Lemma~\ref{lem:PCreformulation}.
We can reduce $\memSkel$-progress-consistency to checking a polynomial number of emptiness queries of intersections of regular languages recognized by deterministic finite automata.

\section{Characterization of the chromatic memory requirements of safety objectives}
\label{sec:safety}
In this section, we prove the results about general safety objectives and $\memSkel$-strong-monotony mentioned in Section~\ref{sec:overview}, namely, Lemma~\ref{lem:necessaryConditionSafety} for the necessity of the condition and Theorem~\ref{thm:safety} for its sufficiency.

\necessarySafety*
\begin{proof}
	Let $\memSkel = \memSkelFull$.
	We prove both items simultaneously, simply adding an observation in the regular case.
	We assume by contrapositive that $\wc$ is not $\memSkel$-strongly-monotone, i.e., there exist $\word_1, \word_2\in\colors^*$ such that $\memUpdHat(\memInit, \word_1) = \memUpdHat(\memInit, \word_2)$, but $\word_1$ and $\word_2$ are not comparable for $\prefOrd_\wc$.
	This means that there exist $\word_1', \word_2'\in\colors^\omega$ such that $\word_1' \in \inverse{\word_1}\wc \setminus \inverse{\word_2}\wc$ and $\word_2' \in \inverse{\word_2}\wc \setminus \inverse{\word_1}\wc$, i.e., such that $\word_1\word_1'\in\wc$, $\word_2\word_1'\notin\wc$, $\word_2\word_2'\in\wc$, and $\word_1\word_2'\notin\wc$.
	In case $\wc$ is regular, then $\inverse{\word_1}\wc$ and $\inverse{\word_2}\wc$ are $\omega$-regular, so we may assume additionally that there exist $x_1, x_2\in\colors^*$ and $y_1, y_2\in\colors^+$ such that $\word_1' = x_1(y_1)^\omega$ and $\word_2' = x_2(y_2)^\omega$ are ultimately periodic words~\cite{McN66}.

	We build a one-player arena $\arena$ in which $\memSkel$ does not suffice to play optimally for $\Pone$: arena $\arena$ is finitely branching in general, and can even be made finite when $\wc$ is regular.
	In $\arena$, there is a single vertex $\s$ in which a choice between two edges has to be made.
	This vertex $\s$ can be reached after seeing either $\word_1$ or $\word_2$, and the choice has to be made between continuing with the word $\word_1'$ or with the word $\word_2'$.
	We depict this arena in Figure~\ref{fig:arenaMonotone}.

	An optimal strategy of $\Pone$ wins after seeing $\word_1$ by continuing with $\word_1'$, and after seeing $\word_2$ by continuing with $\word_2'$.
	However, a strategy based on $\memSkel$ will make the same choice after seeing both $\word_1$ and $\word_2$ since $\memUpdHat(\memInit, \word_1) = \memUpdHat(\memInit, \word_2)$, and can therefore not be optimal.
\end{proof}
\begin{figure}[tbh]
	\centering
	\begin{minipage}{0.5\textwidth}
		\centering
		\begin{tikzpicture}[every node/.style={font=\small,inner sep=1pt}]
			\draw (0,0) node[rond] (s) {$\s$};
			\draw ($(s)+(-2,0.8)$) node[rond] (s1) {};
			\draw ($(s)+(-2,-0.8)$) node[rond] (s2) {};
			\draw ($(s)+(2,0.8)$) node[rond,draw=none] (s3) {$\cdots$};
			\draw ($(s)+(2,-0.8)$) node[rond,draw=none] (s4) {$\cdots$};
			\draw (s1) edge[-latex',decorate] node[above=4pt,xshift=1pt] {$\word_1$} (s);
			\draw (s2) edge[-latex',decorate] node[below=4pt,xshift=1pt] {$\word_2$} (s);
			\draw (s) edge[-latex',decorate,out=30,in=190] node[above=4pt,xshift=-2pt] {$\word_1'$} (s3);
			\draw (s) edge[-latex',decorate,out=-30,in=170] node[below=4pt,xshift=-2pt] {$\word_2'$} (s4);
		\end{tikzpicture}
	\end{minipage}%
	\begin{minipage}{0.5\textwidth}
		\centering
		\begin{tikzpicture}[every node/.style={font=\small,inner sep=1pt}]
			\draw (0,0) node[rond] (s) {$\s$};
			\draw ($(s)+(-2,0.8)$) node[rond] (s1) {};
			\draw ($(s)+(-2,-0.8)$) node[rond] (s2) {};
			\draw ($(s)+(2,0.8)$) node[rond] (s3) {};
			\draw ($(s)+(2,-0.8)$) node[rond] (s4) {};
			\draw (s1) edge[-latex',decorate] node[above=4pt,xshift=1pt] {$\word_1$} (s);
			\draw (s2) edge[-latex',decorate] node[below=4pt,xshift=1pt] {$\word_2$} (s);
			\draw (s) edge[-latex',decorate] node[above=4pt,xshift=-1pt] {$x_1$} (s3);
			\draw (s) edge[-latex',decorate] node[below=4pt,xshift=-1pt] {$x_2$} (s4);
			\draw (s3) edge[-latex',decorate,out=-30,in=30,distance=0.8cm] node[right=4pt] {$y_1$} (s3);
			\draw (s4) edge[-latex',decorate,out=-30,in=30,distance=0.8cm] node[right=4pt] {$y_2$} (s4);
		\end{tikzpicture}
	\end{minipage}%
	\caption{Arena $\arena$ in which $\Pone$ cannot play optimally with a strategy based on $\memSkel$, built in the proof of Lemma~\ref{lem:necessaryConditionSafety}.
		The arena on the left is used in the general case, and the one on the right is used in the regular case.}
	\label{fig:arenaMonotone}
\end{figure}%

\begin{remark}[Cost of uniformity] \label{rmk:uniformity2}
	This last result is the only one relying on the ``uniformity'' assumption, i.e., the need for a single strategy to be winning from all the vertices of the winning region (see Remark~\ref{rmk:uniformity}).
	This assumption is crucial to obtain this lemma with a hypothesis about \emph{one-player} arenas.
	We briefly explain in the rest of this remark some observations about the cost of requiring \emph{uniformity} of winning strategies throughout the paper.
	We refer to~\cite[Section~4.7]{Van23} for more complete arguments.

	Without using one-player arenas, we could relax the uniformity assumption in the last proof: we could replace the part to the left of $\s$ by a vertex belonging to $\Ptwo$ with a choice between $\word_1$ and $\word_2$, both still leading to $\s$.
	The proof would then carry out similarly, except if $\word_1$ or $\word_2$ is the empty word.

	This alternative proof can be used to show that, under the existence of a non-empty word $\word_\emptyWord\in\colors^+$ with the same winning continuations as the empty word (i.e., $\word_\emptyWord \prefEq \emptyWord$), asking for uniformity in two-player arenas does not require larger memory structures.
	For our regular objectives, the existence of such a word $\word_\emptyWord$ corresponds to the existence of a cycle on the initial state of the automaton defining the objective.
	Without this (reasonable) assumption, uniformity of strategies may require larger memory requirements.
	For instance, with $\colors = \{a, b\}$, the regular reachability objective $(a + bb)\colors^\omega$
	\begin{itemize}
		\item admits, from every vertex of the winning region of $\Pone$, a memoryless winning strategy;
		\item requires in general two memory states for the \emph{optimal} strategies of $\Pone$ (which must win uniformly, as defined in Section~\ref{sec:preliminaries}).
		\qedEx
	\end{itemize}
\end{remark}

\thmSafety*
\begin{proof}
	Let $\dfa_\wc = \dfaFull$ be the (possibly infinite) minimal automaton of $\wc$, and let $\memSkel = \memSkelFull$ be a memory structure.

	The necessity of $\memSkel$-strong-monotony (in both cases) was proved in Lemma~\ref{lem:necessaryConditionSafety}.
	We now prove the sufficiency of $\memSkel$-strong-monotony.
	We assume that $\wc$ is $\memSkel$-strongly-monotone.
	We write w.l.o.g.\ $\finalStates = \{\finalState\}$.
	Let $\arena = \arenaFull$ be an arena.
	As per the hypotheses, we require that $\prefOrd$ is well-founded or that $\arena$ is finitely branching.

	For $\s\in\states$, $\memState\in\memStates$, we define
	\[
	\atmtnStates_{\s, \memState} =
	\{\atmtnState \in \coReachable{\memState} \mid
	\text{$\Pone$ has a winning strategy for objective $\inverse{\atmtnState}\wc$ from $\s$}\}.
	\]
	Notice that $\finalState \notin \atmtnStates_{\s, \memState}$ for all $\s$ and $\memState$, as $\Pone$ cannot win for objective $\inverse{\finalState}\wc = \emptyset$.
	We recall that notation $\chain_\memState$ was defined in Section~\ref{sec:overview}.
	The $\memSkel$-strong-monotony hypothesis tells us that each $\chain_\memState$ is a chain (Lemma~\ref{lem:chains}), so each $\atmtnStates_{\s, \memState}$ is too.

	We define a strategy $\strat\colon \states_1\times\memStates \to \edges$ of $\Pone$ based on memory $\memSkel$.
	Let $\s\in\states_1$, $\memState\in\memStates$.
	We distinguish three cases.
	\begin{itemize}
		\item If $\atmtnStates_{\s, \memState}$ is empty, then it means that the game has reached a situation where it cannot be won anymore, so $\strat(\s, \memState)$ is chosen arbitrarily.
		\item Otherwise, if $\atmtnStates_{\s, \memState}$ has a minimum $\atmtnState_{\s, \memState}$ for $\prefOrd$, then there is a strategy $\strat_{\s, \memState}$ winning for $\inverse{\atmtnState_{\s, \memState}}\wc$ from $\s$.
		We define $\strat(\s, \memState) = \strat_{\s, \memState}(\emptyPth_\s)$ (we recall that $\emptyPth_\s$ is the empty history starting in $\s$).
		Note that when $\atmtnStates_{\s, \memState}$ is non-empty, it always has a minimum if $\prefOrd$ is well-founded.
		\item If $\atmtnStates_{\s, \memState}$ is not empty and has no minimum, we fall in this case under the hypothesis that $\arena$ is finitely branching.
		For $\atmtnState \in \atmtnStates_{\s, \memState}$, let
		\[
			\edges_{\s, \atmtnState} = \{\strat'(\emptyPth_\s)\in\edges\mid
								\text{$\strat'$ is winning for $\inverse{\atmtnState}\wc$ from $\s$}\}
		\]
		be the set of outgoing edges of $\s$ that are taken immediately by at least one strategy winning for $\inverse{\atmtnState}\wc$ from $\s$.
		We make three observations on sets $\edges_{\s, \atmtnState}$.
		\begin{itemize}
			\item For $\atmtnState \in \atmtnStates_{\s, \memState}$, set $\edges_{\s, \atmtnState}$ is non-empty as $\Pone$ has a winning strategy for $\inverse{\atmtnState}\wc$ from $\s$.
			\item For $\atmtnState \in \atmtnStates_{\s, \memState}$, set $\edges_{\s, \atmtnState}$ is finite as $\s$ has finitely many outgoing edges.
			\item For $\atmtnState, \atmtnState'\in \atmtnStates_{\s, \memState}$, if $\atmtnState\prefOrd\atmtnState'$, then $\edges_{\s, \atmtnState} \subseteq \edges_{\s, \atmtnState'}$ as every strategy winning for $\inverse{\atmtnState}\wc$ is winning for $\inverse{\atmtnState'}\wc$.
		\end{itemize}
		As sets $\edges_{\s, \atmtnState}$ are non-empty, finite, and non-decreasing, this means that their intersection $\bigcap_{\atmtnState \in \atmtnStates_{\s, \memState}} \edges_{\s, \atmtnState}$ is non-empty.
		Let $\edge \in \bigcap_{\atmtnState \in \atmtnStates_{\s, \memState}} \edges_{\s, \atmtnState}$; we define $\strat(\s, \memState) = \edge$.
	\end{itemize}

	We have now defined $\strat$; we show that it is optimal.
	Let $\s_0\in\states$ be such that $\Pone$ has a winning strategy for objective $\wc$ from $\s_0$.
	Let $\play = \edge_1\edge_2\ldots\in \edges^\omega$ be a play consistent with $\strat$ from $\s_0$, and $\word = \colHatInf(\play)$.
	We write $\word = \clr_1\clr_2\ldots$ and we show that $\word\in\wc$.
	As $\wc$ is a general safety objective, this amounts to showing that for every finite prefix $\word_i = \clr_1\ldots\clr_i$ of $\word$, $\atmtnUpdWord(\atmtnInit, \word_i) \neq \finalState$.
	For $i \ge 0$, let $\atmtnState_i = \atmtnUpdWord(\atmtnInit, \word_i)$, $\edge_i = (\s_{i-1}, \clr_{i}, \s_{i})$, and $\memState_i = \memUpdHat(\memInit, \word_i)$.
	We show by induction on $i$ that for all $i\ge 0$, $\atmtnState_i\in\atmtnStates_{\s_i, \memState_i}$.
	This suffices to prove the claim, as $\finalState \notin \atmtnStates_{\s_i, \memState_i}$ for $i\ge 0$.

	For $i = 0$, we have $\word_i = \emptyWord$, so $\memState_0 = \memUpdHat(\memInit, \word_i) = \memInit$ and $\atmtnState_0 = \atmtnUpdWord(\atmtnInit, \word_i) = \atmtnInit$.
	By definition, we have $\atmtnInit \in \coReachable{\memInit}$.
	As $\Pone$ has a winning strategy for $\wc = \inverse{\atmtnInit}\wc$ from $\s_0$ by hypothesis, we have that $\atmtnState_0 \in \atmtnStates_{\s_0, \memState_0}$.

	We now assume that $\atmtnState_i\in\atmtnStates_{\s_i, \memState_i}$ for some $i\ge 0$.
	As $\atmtnState_{i} \in \coReachable{\memState_i}$, we have that $\atmtnState_{i+1} = \atmtnUpd(\atmtnState_{i}, \clr_{i+1}) \in \coReachable{\memUpd(\memState_i, \clr_{i+1})} = \coReachable{\memState_{i+1}}$.
	To show that $\atmtnState_{i+1}\in\atmtnStates_{\s_{i+1}, \memState_{i+1}}$, it is left to show that there is a winning strategy for $\inverse{\atmtnState_{i+1}}\wc$ from $\s_{i+1}$.
	We know that $\atmtnStates_{\s_i, \memState_i}$ is not empty, and we distinguish three cases.
	\begin{itemize}
	\item If $\s_i\in\states_2$, then since $\Pone$ has a strategy winning for $\inverse{\atmtnState_i}\wc$ from $\s_i$, $\Pone$ must be able to win no matter the choice of $\Ptwo$ in $\s_i$.
	Hence, $\Pone$ has a winning strategy from $\inverse{\atmtnUpd(\atmtnState_i, \clr_{i+1})}\wc = \inverse{\atmtnState_{i+1}}\wc$ from $\s_{i+1}$.
	\item If $\s_i\in\states_1$ and $\atmtnStates_{\s_i, \memState_i}$ has a minimum $\atmtnState_{\s_i, \memState_i}$, then $\edge_{i+1}$ is consistent with a strategy $\strat_{\s_i, \memState_i}$ winning for $\inverse{\atmtnState_{\s_i, \memState_i}}\wc$ from $\s$.
	This strategy also wins for $\inverse{\atmtnState_i}\wc$, as $\atmtnState_{\s_i, \memState_i} \prefOrd \atmtnState_i$.
	Thus, there must also be a strategy winning for $\inverse{\atmtnUpd(\atmtnState_{i}, \clr_{i+1})}\wc = \inverse{\atmtnState_{i+1}}\wc$ from $\s_{i+1}$.
	\item If $\s_i\in\states_1$ and $\atmtnStates_{\s_i, \memState_i}$ has no minimum, then as $\atmtnState_i\in\atmtnStates_{\s_i, \memState_i}$, there is in particular a winning strategy for $\inverse{\atmtnState_i}\wc$ from $\s_{i}$ that takes edge $\strat(\s_i, \memState_i) = (\s_i, \clr_{i+1}, \s_{i+1})$.
	Thus, $\Pone$ has a strategy winning for $\inverse{\atmtnUpd(\atmtnState_i, \clr_{i+1})}\wc = \inverse{\atmtnState_{i+1}}\wc$ from $\s_{i+1}$.
	\qedhere
	\end{itemize}
\end{proof}

In particular, we find that $\memSkelTriv$ suffices (i.e., memoryless strategies suffice) for general safety objectives if and only if $\prefOrd$ is a \emph{total} preorder, which was already a corollary of~\cite{CFH14}.

\section{Characterization of the chromatic memory requirements of regular reachability objectives}
\label{sec:reach}
In this section, we prove Theorem~\ref{thm:reach} discussed in Section~\ref{sec:overview}, which characterizes the memory requirements of regular reachability objectives.
We start by proving the necessity of $\memSkel$-progress-consistency, which was formulated in Lemma~\ref{lem:necessaryReach}.

\necessaryReach*
\begin{proof}
	Let $\memSkel = \memSkelFull$.
	We prove both items simultaneously.
	We assume by contrapositive that $\wc$ is not $\memSkel$-progress-consistent, i.e., there exist $\memState\in\memStates$, $\word_1 \in \memPathsOn{\memInit}{\memState}$, and $\word_2 \in \memPathsOn{\memState}{\memState}$ such that $\word_1 \strictPrefOrd \word_1\word_2$ but $\word_1(\word_2)^\omega\notin\wc$.
	As $\word_1 \strictPrefOrd \word_1\word_2$, there is $\word'\in\colors^\omega$ such that $\word_1\word'\notin\wc$ and $\word_1\word_2\word'\in\wc$.
	In case $\wc$ is regular, then $\inverse{\word_1}\wc$ and $\inverse{(\word_1\word_2)}\wc$ are $\omega$-regular, so we may assume additionally that there exist $x\in\colors^*$ and $y\in\colors^+$ such that $\word' = xy^\omega$ is an ultimately periodic word~\cite{McN66}.

	We build a one-player arena $\arena$ in which $\memSkel$ does not suffice to play optimally for $\Pone$: arena $\arena$ is finitely branching in general, and can even be made finite when $\wc$ is regular.
	In $\arena$, there is a single vertex $\s$ in which a choice between two edges has to be made.
	This vertex can be reached after seeing $\word_1$, and the choice has to be made between looping on $\s$ with word $\word_2$, or continuing with word $\word'$.
	We depict this arena in Figure~\ref{fig:progCons}.

	An optimal strategy of $\Pone$ wins after seeing $\word_1$ by continuing with $\word_2\word'$, which produces the winning word $\word_1\word_2\word'$.
	However, a strategy based on $\memSkel$ must always make the same choice in $\s$ after seeing $\word_1$ since $\memUpdHat(\memInit, \word_1) = \memUpdHat(\memInit, \word_1(\word_2)^n) = \memState$ for all $n \ge 0$.
	Hence, a strategy based on $\memSkel$ can only produce losing words $\word_1\word'$ and $\word_1(\word_2)^\omega$.
\end{proof}
\begin{figure}[tbh]
	\centering
	\begin{tikzpicture}[every node/.style={font=\small,inner sep=1pt}]
		\draw (0,0) node[rond] (v1) {};
		\draw ($(v1)+(2,0)$) node[rond] (v2) {$\s$};
		\draw ($(v2)+(2,0)$) node[] (v3) {$\,\ldots$};
		\draw (v1) edge[-latex',decorate] node[above=4pt] {$\word_1$} (v2);
		\draw (v2) edge[-latex',decorate,out=60,in=120,distance=0.8cm] node[above=4pt] {$\word_2$} (v2);
		\draw (v2) edge[-latex',decorate] node[above=4pt] {$\word'$} (v3);

		\draw ($(v1)+(6,0)$) node[rond] (s1) {};
		\draw ($(s1)+(2,0)$) node[rond] (s2) {$\s$};
		\draw ($(s2)+(2,0)$) node[rond] (s3) {};
		\draw (s1) edge[-latex',decorate] node[above=4pt] {$\word_1$} (s2);
		\draw (s2) edge[-latex',decorate,out=60,in=120,distance=0.8cm] node[above=4pt] {$\word_2$} (s2);
		\draw (s2) edge[-latex',decorate] node[above=4pt] {$x$} (s3);
		\draw (s3) edge[-latex',decorate,out=-30,in=30,distance=0.8cm] node[right=4pt] {$y$} (s3);
	\end{tikzpicture}
	\caption{Arena in which $\Pone$ cannot play optimally with a strategy based on $\memSkel$ obtained from the proof of Lemma~\ref{lem:necessaryReach}.
	The arena on the left is used in the general case, and the one on the right is used in the regular case.}
	\label{fig:progCons}
\end{figure}

In order to prove the characterization, we start with extra preliminaries on the notion of trees induced by a strategy, and a classical way to define a notion of height for these trees.

Let $\dfa$ be an automaton and $\wc = \reach{\atmtnLang{\dfa}}$ be the induced reachability objective.
Let $\arena = \arenaFull$ be a (possibly infinite) arena.
For $\s\in\states$ and $\strat$ a strategy of $\Pone$ on $\arena$, we define $\treeStrat$ to be the \emph{tree induced by $\strat$ from $\s$}, which contains all the histories from $\s$ consistent with $\strat$.
It can be built by induction:
\begin{itemize}
	\item it contains as a root the empty history $\emptyPth_{\s}$ from $\s$;
	\item if $\hist$ is a history in $\treeStrat$, then
	\begin{itemize}
		\item if $\edgeOut(\hist) \in \states_1$, $\hist$ has only one child which is $\hist\strat(\hist)$;
		\item if $\edgeOut(\hist)\in\states_2$, $\hist$ has one child $\hist \edge$ for each edge $\edge = (\edgeOut(\hist), \clr, \s')\in\edges$.
	\end{itemize}
\end{itemize}
We denote $\treeStratRestr$ for the subtree of $\treeStrat$ in which nodes $\hist$ whose projection to colors is a word in $\atmtnLang{\dfa}$ are defined as leaves (with no child).
A tree is called \emph{well-founded} if it has no infinite branch.
Notice that $\strat$ is winning from $\s$ if and only if $\treeStratRestr$ is well-founded.
In a well-founded tree, we can associate an ordinal \emph{rank} with each node (a generalization of the \emph{height} for finite trees).
By induction, for a leaf $\hist$ of the tree, we define $\rank(\hist) = 0$, and for an internal node $\hist$, we define $\rank(\hist) = \sup \{\rank(\hist') + 1 \mid \hist'\ \text{a child of}\ \hist\}$.
The rank of a tree is the rank of its root.
More details on this notion of rank for well-founded relations can be found in~\cite[Appendix~B]{Kec95}.

The rank of a well-founded tree with finite branching is necessarily $< \omega$; we use greater ordinals only when the trees have infinite branching.
The upcoming proof works on arenas with arbitrary branching, but for (infinite) arenas with finite branching, only finite trees with finite ranks are needed.

We can now prove Theorem~\ref{thm:reach}.

\thmReach*
\begin{proof}
	The necessity of the two conditions was proved respectively in Lemma~\ref{lem:necessaryConditionSafety} and Lemma~\ref{lem:necessaryReach}.

	We now prove the sufficiency of the two conditions.
	Let $\dfa_\wc = \dfaFull$ be the minimal automaton of $\wc$ (which is finite as $\wc$ is regular), and $\memSkel = \memSkelFull$.
	We write w.l.o.g.\ $\finalStates = \{\finalState\}$.
	We assume that $\wc$ is $\memSkel$-strongly-monotone and $\memSkel$-progress-consistent.
	Let $\arena = \arenaFull$ be a (possibly infinite) arena.
	We construct an optimal strategy based on memory $\memSkel$, using the same idea as in the proof for safety objectives (Theorem~\ref{thm:safety}): we once again consider a strategy based on $\memSkel$ making choices that are ``locally optimal''.
	We then show, thanks to our hypotheses ($\memSkel$-strong-monotony and $\memSkel$-progress-consistency), that this strategy must be optimal.

	For $\s\in\states$, $\memState\in\memStates$, we define
	\[
	\atmtnState_{\s, \memState} = \min_\prefOrd
	\{\atmtnState \in \coReachable{\memState} \mid
	\text{$\Pone$ has a winning strategy for objective $\inverse{\atmtnState}\wc$ from $\s$}\},
	\]
	or we fix $\atmtnState_{\s, \memState} = \finalState$ if the set is empty (this is consistent as $\finalState$ is the greatest state for $\prefOrd$, and all strategies are winning for objective $\inverse{\finalState}\wc = \colors^\omega$).
	Notice that we rely on $\memSkel$-strong-monotony and on regularity of $\wc$ in this definition, as we are guaranteed that the $\min$ exists because $\coReachable{\memState}$ is a chain and because $\atmtnStates$ is finite.
	For $\s\in\states_1$, $\memState\in\memStates$, we also fix a strategy $\strat_{\s, \memState}$ of $\Pone$ that is winning for $\inverse{\atmtnState_{\s, \memState}}\wc$ from $\s$.
	We make one additional requirement on $\strat_{\s, \memState}$: we assume that it is a strategy guaranteeing the \emph{quickest win} from $\s$ for objective $\inverse{\atmtnState_{\s, \memState}}\wc$.
	In other words, we take $\strat_{\s, \memState}$ such that the tree $\treeStratRestrArg{\strat_{\s, \memState}, \s}{\atmtnLang{\dfa_\wc}}$ has the least ordinal rank $\ord_{\s, \memState}$ among all winning strategies.

	We define a strategy $\strat\colon \states_1\times\memStates \to \edges$ of $\Pone$ based on memory $\memSkel$: for $\s\in\states_1$, $\memState\in\memStates$, we set $\strat(\s, \memState) = \strat_{\s, \memState}(\emptyPth_\s)$.

	Let $\s_0\in\states$ be a vertex from which $\Pone$ has a winning strategy for objective $\wc$.
	We show that $\strat$ wins from $\s_0$.
	Let $\play = (\s_0, \clr_1, \s_1)(\s_1, \clr_2, \s_2)\ldots\in\edges^\omega$ be a play consistent with $\strat$ from $\s_0$, and $\word = \clr_1\clr_2\ldots \in \colors^\omega$.
	For $i \ge 0$, we fix $\memState_i = \memUpdHat(\memInit, \clr_1\ldots\clr_i)$ and $\atmtnState_i = \atmtnUpdWord(\atmtnInit, \clr_1\ldots\clr_i)$.
	We show that $\word\in\wc$, i.e., that there exists $i\ge 0$ such that $\atmtnState_i = \finalState$.
	For brevity, we also write $\atmtnState_i' = \atmtnState_{\s_i, \memState_i}$ and $\ord_i = \ord_{\s_{i}, \memState_{i}}$.

	As there are finitely many memory states and finitely many automaton states, we can find $\memState\in\memStates$, $\atmtnState, \atmtnState'\in\atmtnStates$, and an infinite increasing sequence of indices $(i_j)_{j\ge 0}$ such that for all $j \ge 0$, $\memState_{i_j} = \memState$, $\atmtnState_{i_j} = \atmtnState$, and $\atmtnState_{i_j}' = \atmtnState'$.
	We decompose $\word$ into infinitely many finite words cut at every index $i_j$: for $j\ge 0$, let $\word_j = \clr_{i_j+1}\ldots\clr_{i_{j+1}}$.
	If $\atmtnState = \finalState$, we are done, as $\word$ indeed reaches the final state of $\dfa$.
	We now assume by contradiction that $\atmtnState \neq \finalState$.
	As $\atmtnState$ is reached infinitely many times and $\finalState$ is absorbing, this implies that $\atmtnState_i \neq \finalState$ for all $i\ge 0$.
	We prove a few properties about the various sequences that we have defined.

	\begin{enumerate}[(a)]
	\item We first show that
	\begin{align} \label{eq:notWorse}
		\forall i\ge 0,\, \forall j\ge i,\, \atmtnState_{j}' \prefOrd \atmtnUpd(\atmtnState_i', \clr_{i+1}\ldots\clr_j).
	\end{align}
	To do so, we show that for all $i\ge 0$, $\atmtnState_{i+1}' \prefOrd \atmtnUpd(\atmtnState_i', \clr_{i+1})$, and Equation~\eqref{eq:notWorse} then follows by induction.
	Let $i\ge 0$.
	As $\Pone$ has a winning strategy for $\inverse{(\atmtnState_i')}\wc$ from $\s_i$, and playing $(\s_i, \clr_{i+1}, \s_{i+1})$ is an action consistent with winning strategy $\strat_{\s_i, \memState_i}$, $\Pone$ also has a winning strategy for $\inverse{\atmtnUpd(\atmtnState_i', \clr_{i+1})}\wc$ from $\s_{i+1}$.
	Moreover, as $\atmtnState_i' \in \coReachable{\memState_i}$, we have that $\atmtnUpd(\atmtnState_i', \clr_{i+1}) \in \coReachable{\memUpd(\memState_i, \clr_{i+1})} = \coReachable{\memState_{i+1}}$.
	Hence, $\atmtnState_{i+1}' \prefOrd \atmtnUpd(\atmtnState_i', \clr_{i+1})$ as $\atmtnState_{i+1}'$ is defined as the minimum of a set in which $\atmtnUpd(\atmtnState_i', \clr_{i+1})$ lies.

	\item We use this to show that the sequence $(\atmtnState_i')_{i\ge 0}$, which only depends on the arena vertices and the memory states visited, underapproximates the sequence $(\atmtnState_i)_{i\ge 0}$, which corresponds to the actual automaton states visited by word $\word$.
	Formally,
	\begin{align} \label{eq:underApproxRegular}
		\forall i \ge 0,\, \atmtnState_{i}' \prefOrd \atmtnState_i.
	\end{align}
	We prove it by induction.
	For $i = 0$, we have $\atmtnState_0 = \atmtnInit$, and by hypothesis, $\Pone$ has a winning strategy from $\s_0$ for objective $\wc = \inverse{\atmtnInit}\wc$.
	Moreover, $\memState_0 = \memInit$ and $\atmtnInit\in\coReachable{\memInit}$, so by the definition of minimum, $\atmtnState'_0 \prefOrd \atmtnState_0$.
	We now assume that $\atmtnState_i' \prefOrd \atmtnState_i$ for some $i\ge 0$.
	By Equation~\eqref{eq:notWorse}, we know that $\atmtnState_{i+1}'\prefOrd\atmtnUpd(\atmtnState_i', \clr_{i+1})$.
	By Lemma~\ref{lem:increasing}, we have $\atmtnUpd(\atmtnState_i', \clr_{i+1}) \prefOrd \atmtnUpd(\atmtnState_i, \clr_{i+1}) = \atmtnState_{i+1}$.
	We conclude that $\atmtnState_{i+1}'\prefOrd \atmtnState_{i+1}$, which proves the claim.
	For all $i\ge 0$, as $\atmtnState_i \neq \finalState$, we deduce moreover that $\atmtnState_i' \neq \finalState$.

	\item We now prove that
	\begin{align} \label{eq:ordinalDecreasing}
		\forall i \ge 0,\, \atmtnState_{i+1}' = \atmtnUpd(\atmtnState'_i, \clr_{i+1}) \Rightarrow \ord_{i+1} < \ord_i.
	\end{align}
	Let $i\ge 0$ such that $\atmtnState_{i+1}' = \atmtnUpd(\atmtnState'_i, \clr_{i+1})$.
	We know that the tree $\treeStratRestrArg{\strat_{\s_i, \memState_i}, \s_i}{\atmtnLang{\dfa_\wc}}$ has rank $\ord_i$.
	As $\atmtnState_{i}' \neq \finalState$, $\ord_i \neq 0$.
	Hence, since playing $(\s_i, \clr_{i+1}, \s_{i+1})$ is consistent with strategy $\strat_{\s_i, \memState_i}$, it is possible to find a strategy that induces a tree from $\s_{i+1}$ for objective $\inverse{\atmtnUpd(\atmtnState_{i}', \clr_{i+1})}\wc$ of height strictly smaller than~$\ord_i$: we simply consider the strategy of the subtree of $\treeStratRestrArg{\strat_{\s_i, \memState_i}, \s_i}{\atmtnLang{\dfa_\wc}}$ with root $(\s_i, \clr_{i+1}, \s_{i+1})$.
	As $\atmtnUpd(\atmtnState_{i}', \clr_{i+1}) = \atmtnState_{i+1}'$ by hypothesis, we deduce that there is a strategy that wins for objective $\inverse{(\atmtnState_{i+1}')}\wc$ from $\s_{i+1}$ and whose tree has height $< \ord_i$.
	We conclude that $\ord_{i+1} < \ord_i$.

	\item We show a final property:
	\begin{align} \label{eq:faithfulToColors}
		\forall i \ge i_0,\, \atmtnState_{i+1}' = \atmtnUpd(\atmtnState'_i, \clr_{i+1}).
	\end{align}
	By Equation~\eqref{eq:notWorse}, the only other option, which we assume by contradiction, is that there is $k \ge i_0$ such that $\atmtnState_{k+1}' \strictPrefOrd \atmtnUpd(\atmtnState'_k, \clr_{k+1})$.
	Let $j\ge 0$ such that $i_j \le k < i_{j+1}$.
	We split $\word_j$ into two parts: $\word_{j}^{(1)} = \clr_{i_j+1}\ldots\clr_{k+1}$ and $\word_{j}^{(2)} = \clr_{k+2}\ldots\clr_{i_{j+1}}$.
	First, notice that $\atmtnState'_{k+1} \strictPrefOrd \atmtnUpdWord(\atmtnState_{i_j}', \word_{j}^{(1)})$.
	Indeed, $\atmtnState'_{k} \prefOrd \atmtnUpdWord(\atmtnState_{i_j}', \clr_{i_j+1}\ldots\clr_{k})$ by Equation~\eqref{eq:notWorse} and $\atmtnState_{k+1}' \strictPrefOrd \atmtnUpd(\atmtnState'_k, \clr_{k+1})$ by hypothesis.
	Second, we have that $\atmtnState_{i_{j+1}}' \prefOrd\atmtnUpdWord(\atmtnState_{k+1}', \word_j^{(2)})$ by Equation~\eqref{eq:notWorse}.
	We recall that $\atmtnState_{i_{j}}' = \atmtnState_{i_{j+1}}' = \atmtnState'$.
	We deduce that
	\[
	\atmtnState'_{k+1}
	\strictPrefOrd \atmtnUpdWord(\atmtnState_{i_j}', \word_j^{(1)})
	= \atmtnUpdWord(\atmtnState_{i_{j+1}}', \word_j^{(1)})
	\prefOrd \atmtnUpdWord(\atmtnUpdWord(\atmtnState_{k+1}', \word_j^{(2)}), \word_j^{(1)})
	= \atmtnUpdWord(\atmtnState_{k+1}', \word_j^{(2)}\word_j^{(1)}).
	\]
	We therefore have that $\word_j^{(2)}\word_j^{(1)}$ makes progress from $\atmtnState_{k+1}'$.
	As $\word_j = \word_j^{(1)}\word_j^{(2)}$ is a cycle on memory state $\memState$, we have that $\word_j^{(2)}\word_j^{(1)}$ must be a cycle on memory state $\memState_k = \memUpdHat(\memState, \word_j^{(1)})$.
	By $\memSkel$-progress-consistency, this means that $(\word_j^{(2)}\word_j^{(1)})^\omega \in \inverse{(\atmtnState_{k+1}')}\wc$, so $ (\word_j^{(1)}\word_j^{(2)})^\omega = (\word_j)^\omega \in \inverse{(\atmtnState_{i_j}')}\wc$.
	By Equation~\eqref{eq:underApproxRegular}, this implies that $(\word_j)^\omega \in \inverse{\atmtnState_{i_j}}\wc$.
	However, $\atmtnUpdWord(\atmtnState_{i_j}, \word_j) = \atmtnState_{i_j} \neq \finalState$, so repeating $\word_j$ from $\atmtnState_{i_j}$ cannot be winning.
	This is a contradiction, which means that Equation~\eqref{eq:faithfulToColors} holds.
	\end{enumerate}

	We now use Equations~\eqref{eq:ordinalDecreasing} and~\eqref{eq:faithfulToColors} to deduce a contradiction with our initial hypothesis that $\atmtnState \neq \finalState$.
	For every index $i \ge i_0$ onward, we have that $\atmtnState_{i+1}' = \atmtnUpd(\atmtnState'_i, \clr_{i+1})$ (Equation~\eqref{eq:faithfulToColors}).
	By Equation~\eqref{eq:ordinalDecreasing}, this means that the infinite ordinal sequence $(\ord_i)_{i\ge i_0}$ is decreasing, which is impossible.
\end{proof}

Our characterization applies to \emph{regular} reachability objectives.
We do not know whether a generalization to general reachability objectives (with well-founded preorder or in finitely branching arenas, as argued in Remark~\ref{rmk:noHypothesis}) holds.
We provide an example showing that our proof technique fails for some general reachability objective with well-founded preorder.
\begin{example} \label{ex:genReachTough}
	Let $\colors = \IN$.
	We define a general reachability objective
	\[
		\wc = \{\clr_1\clr_2\ldots\in\colors^\omega
				\mid \exists i < j,\, \clr_i \ge \clr_j\}
	\]
	consisting of all the infinite sequences that are not increasing.
	We represent its (infinite) minimal automaton $\dfa_\wc$ in Figure~\ref{fig:genReachTough}.
	For preorder $\prefOrd$, we have that $\atmtnInit \strictPrefOrd \atmtnState_i \strictPrefOrd \finalState$ for all $i \ge 0$, and $\atmtnState_i \prefOrd \atmtnState_j$ if and only if $i \le j$.
	We observe that
	\begin{itemize}
		\item $\wc$ is $\memSkelTriv$-strongly-monotone as preorder $\prefOrd$ is total;
		\item $\wc$ is $\memSkelTriv$-progress-consistent as repeating any color is immediately winning.
	\end{itemize}
	Moreover, $\prefOrd$ is well-founded as every set of states of $\dfa_\wc$ has a minimum, so Remark~\ref{rmk:noHypothesis} does not apply.
	If Theorem~\ref{thm:reach} indeed extends to general reachability objectives with well-founded prefix preorder, then $\memSkelTriv$ should suffice here (we leave the question open).
	Unfortunately, our proof technique for Theorem~\ref{thm:reach} does not work here.
	Let $\arena$ be the finitely branching arena in Figure~\ref{fig:genReachTough}.
	There is a winning strategy from every state.
	Referencing the vocabulary of the proof of Theorem~\ref{thm:reach}, the strategy guaranteeing the quickest win from a vertex $\s_i$ is the strategy starting with $(\s_i, i, \s_{i+1})(\s_{i+1}, i, \s_{i+2})$, which wins in two moves.
	This means that strategy $\strat$ built in the proof of Theorem~\ref{thm:reach} plays $(\s_i, i, \s_{i+1})$ in $\s_i$.
	But the infinite play generated by $\strat$ from $\s_0$ then sees colors $0,1,2,3,\ldots$, which is not a winning word.
	\qedEx
\end{example}
\begin{figure}[tbh]
	\centering
	\begin{tikzpicture}[every node/.style={font=\small,inner sep=1pt}]
		\draw (0,0) node[diamant] (qinit) {$\atmtnInit$};
		\draw ($(qinit.west)-(0.45,0)$) edge[-latex'] (qinit);
		\draw ($(qinit)+(1.7,0)$) node[diamant] (q0) {$\atmtnState_0$};
		\draw ($(qinit)+(3.4,0)$) node[diamant] (q1) {$\atmtnState_1$};
		\draw ($(qinit)+(5.1,0)$) node[] () {$\cdots$};
		\draw ($(qinit)+(6.8,0)$) node[diamant] (qn) {$\atmtnState_n$};
		\draw ($(qinit)+(8.5,0)$) node[rond,draw=none] (inf) {$\cdots$};
		\draw ($(q1)+(0,1.2)$) node[diamant,double] (qfin) {$\finalState$};
		\draw (qinit) edge[-latex'] node[above=4pt] {$0$} (q0);
		\draw (qinit) edge[-latex',bend right] node[below=4pt] {$1$} (q1);
		\draw (q0) edge[-latex'] node[above=4pt] {$1$} (q1);
		\draw (q0) edge[-latex'] node[above=4pt] {$0$} (qfin);
		\draw (q1) edge[-latex',bend right] node[below=4pt] {$n$} (qn);
		\draw (q1) edge[-latex'] node[right=4pt] {$0, 1$} (qfin);
		\draw (qn) edge[-latex',bend right] node[below=4pt] {$n+1$} (inf);
		\draw (qn) edge[-latex'] node[above=5pt] {$0,\ldots, n$} (qfin);
		\draw (qfin) edge[-latex',out=60,in=120,distance=0.8cm] node[above=4pt] {$\colors$} (qfin);

		\draw ($(qinit)+(0,-2)$) node[rond] (v0) {$\s_0$};
		\draw ($(v0)+(1.7,0)$) node[rond] (v1) {$\s_1$};
		\draw ($(v0)+(3.4,0)$) node[rond] (v2) {$\s_2$};
		\draw ($(v0)+(5.1,0)$) node[rond,draw=none] (v3) {$\cdots$};
		\draw ($(v0)+(6.8,0)$) node[rond] (vn) {$\s_n$};
		\draw ($(v0)+(8.5,0)$) node[rond,draw=none] (inf2) {$\cdots$};
		\draw (v0) edge[-latex',bend right] node[below=4pt] {$0$} (v1);
		\draw (v1) edge[-latex',bend left] node[above=4pt] {$0$} (v2);
		\draw (v1) edge[-latex',bend right] node[below=4pt] {$1$} (v2);
		\draw (v2) edge[-latex',bend left] node[above=4pt] {$1$} (v3);
		\draw (v2) edge[-latex',bend right] node[below=4pt] {$2$} (v3);
		\draw (v3) edge[-latex',bend left] node[above=4pt] {$n-1$} (vn);
		\draw (v3) edge[-latex',bend right] node[below=4pt] {$n$} (vn);
		\draw (vn) edge[-latex',bend left] node[above=4pt] {$n$} (inf2);
		\draw (vn) edge[-latex',bend right] node[below=4pt] {$n+1$} (inf2);
	\end{tikzpicture}
	\caption{Top: automaton $\dfa_\wc$ where $\wc$ is the \emph{general} reachability objective from Example~\ref{ex:genReachTough}.
	In particular, transitions from $\atmtnInit$ and $\atmtnState_0$ to $\atmtnState_n$ with color $n$ are not represented.
	Bottom: finitely branching infinite arena in which our proof technique from Theorem~\ref{thm:reach} does not build an optimal strategy.}
	\label{fig:genReachTough}
\end{figure}

\section{Computational complexity of the chromatic memory requirements for regular objectives}
\label{sec:complexity}

The goal of this section is to prove Theorem~\ref{thm:NPcomplete}, claiming that both \textsc{Memory-Safe} and \textsc{Memory-Reach} are \NP-complete.

\subparagraph*{Reformulation of \texorpdfstring{$\memSkel$}{M}-strong-monotony.}
We start by proving Lemma~\ref{lem:twoViewsOnMonotony}, which stated a reformulation of the $\memSkel$-strong-monotony property as a \emph{monotone decomposition} of automata (defined in Definition~\ref{def:monDec}).

\twoViewsOnMonotony*
\begin{proof}
	Let $\dfa = \dfaFull$.
	From a monotone decomposition $\coReachable{1}, \ldots, \coReachable{k}$ of $\dfa$, we show how to build a memory structure $\memSkel = \memSkelFull$ of size $k$ such that $\wc$ is $\memSkel$-strongly-monotone.
	We take
	\begin{itemize}
		\item $\memStates = \{\coReachable{1}, \ldots, \coReachable{k}\}$,
		\item $\memInit$ is any set $\coReachable{i}$ that contains $\atmtnInit$ (which exists as $\atmtnStates = \bigcup_{i=1}^k \coReachable{i}$),
		\item for $\coReachable{i} \in \memStates$, $\clr\in\colors$, we define $\memUpd(\coReachable{i}, \clr) = \coReachable{j}$ for some $j$ such that $\atmtnUpd(\coReachable{i}, \clr) \subseteq \coReachable{j}$ (there may be multiple such $j$'s; any choice works).
	\end{itemize}

	We first show the following property about $\memSkel$: for all $\word\in\colors^*$, $\atmtnUpdWord(\atmtnInit, \word) \in \memUpdHat(\memInit, \word)$.
	We proceed by induction on the length of $\word$.
	If $\word = \emptyWord$ is the empty word, then $\atmtnUpdWord(\atmtnInit, \word) = \atmtnInit \in \memInit = \memUpdHat(\memInit, \word)$ by definition of $\memInit$.
	We now assume that $\word = \word'\clr$, with $\clr\in\colors$ and $\atmtnUpdWord(\atmtnInit, \word') \in \memUpdHat(\memInit, \word')$.
	Let $\coReachable{i} = \memUpdHat(\memInit, \word')$ and $\coReachable{j} = \memUpdHat(\memInit, \word'\clr)$.
	Then, $\atmtnUpdWord(\atmtnInit, \word'\clr) \in \atmtnUpd(\coReachable{i}, \clr)$.
	As $\atmtnUpd(\coReachable{i}, \clr) \subseteq \coReachable{j} = \memUpdHat(\memInit, \word'\clr)$, we are done.

	We now show that $\wc$ is $\memSkel$-strongly-monotone.
	Let $\word_1, \word_2\in\colors^*$ be two finite words such that $\memUpdHat(\memInit, \word_1) = \memUpdHat(\memInit, \word_2)$.
	We set $\coReachable{i} = \memUpdHat(\memInit, \word_1)$.
	We need to show that $\word_1$ and $\word_2$ are comparable for $\prefOrd$.
	Let $\atmtnState_1 = \atmtnUpdWord(\atmtnInit, \word_1)$ and $\atmtnState_2 = \atmtnUpdWord(\atmtnInit, \word_2)$.
	By the above property, we have that $\atmtnState_1$ and $\atmtnState_2$ are in $\coReachable{i}$.
	As $\coReachable{i}$ is a chain, we have that $\atmtnState_1$ and $\atmtnState_2$ are comparable for $\prefOrd$.
	Hence, $\word_1$ and $\word_2$ are too, which shows the desired implication.

	Reciprocally, let $\memSkel = \memSkelFull$ be a memory structure such that $\wc$ is $\memSkel$-strongly-monotone.
	We show that sets $(\coReachable{\memState})_{\memState\in\memStates}$ form a monotone decomposition of $\dfa$.
	\begin{itemize}
		\item As $\memSkel$ is a complete automaton, every (reachable) state $\atmtnState$ of $\dfa$ has to be in a set $\coReachable{\memState}$ for some $\memState \in \memStates$.
		Indeed, as there exists $\word\in\colors^*$ such that $\atmtnUpdWord(\atmtnInit, \word) = \atmtnState$, we can simply take $\memState = \memUpdHat(\memInit, \word)$.
		\item Let $\clr\in\colors$ and $\memState\in\memStates$.
		Let $\memState' = \memUpd(\memState, \clr)$.
		We show that $\atmtnUpd(\coReachable{\memState}, \clr) \subseteq \coReachable{\memState'}$.
		Let $\atmtnState\in\coReachable{\memState}$; we show that $\atmtnUpd(\atmtnState, \clr) \in \coReachable{\memState'}$.
		As $\atmtnState\in\coReachable{\memState}$, there is $\word\in\colors^*$ such that $\atmtnUpdWord(\atmtnInit, \word) = \atmtnState$ and $\memUpdHat(\memInit, \word) = \memState$.
		Then, $\atmtnUpdWord(\atmtnInit, \word\clr) = \atmtnUpd(\atmtnState, \clr)$ and $\memUpdHat(\memInit, \word\clr) = \memState'$, so $\atmtnUpd(\atmtnState, \clr) \in \coReachable{\memState'}$.
		\item For some $\memState\in\memStates$, let $\atmtnState_1, \atmtnState_2\in\coReachable{\memState}$.
		We show that $\atmtnState_1$ and $\atmtnState_2$ are comparable for $\prefOrd$.
		There are words $\word_1, \word_2\in\colors$ such that $\atmtnUpdWord(\atmtnInit, \word_1) = \atmtnState_1$, $\atmtnUpdWord(\atmtnInit, \word_2) = \atmtnState_2$, and $\memUpdHat(\memInit, \word_1) = \memUpdHat(\memInit, \word_2) = \memState$.
		As $\wc$ is $\memSkel$-strongly-monotone, $\word_1$ and $\word_2$ are comparable for $\prefOrd$, so that is also the case for $\atmtnState_1$ and $\atmtnState_2$.
		This shows that all sets $\chain_\memState$ are chains.
		\qedhere
	\end{itemize}
\end{proof}

Building on the previous lemma, we consider the following decision problem.

\begin{decisionProblem}
	\problemtitle{\textsc{Monotony}}
	\probleminput{A finite automaton $\dfa$ and an integer $k\in\IN$.}
	\problemquestion{Is there a monotone decomposition of $\dfa$ with at most $k$ sets?}
\end{decisionProblem}

As we have just seen, this problem is equivalent to asking whether there is a memory structure $\memSkel$ of size at most $k$ such that a regular objective $\wc$ derived from $\dfa$ is $\memSkel$-strongly-monotone (Lemma~\ref{lem:twoViewsOnMonotony}), or whether there is a chromatic memory structure with $\le k$ states that suffices to play optimally for $\safe{\atmtnLang{\dfa}}$ (Theorem~\ref{thm:safety}).
It is also related, though not equivalent, to the chromatic memory requirements of $\reach{\atmtnLang{\dfa}}$ (Theorem~\ref{thm:reach}).
We will show that the \textsc{Monotony} problem is \NP-complete.

\subparagraph*{Membership in \NP.}
We discuss here that the decision problems related to the properties used in our characterizations of chromatic memory requirements, $\memSkel$-strong-monotony and $\memSkel$-progress-consistency, are in \NP.
The idea is simply that, given a finite automaton $\dfa$ and a memory structure $\memSkel$, we can decide in polynomial time whether the objectives derived from $\dfa$ are $\memSkel$-strongly-monotone, and whether they are $\memSkel$-progress-consistent.

\begin{lemma} \label{lem:safeInNP}
	\textsc{Memory-Safe} is in \NP.
\end{lemma}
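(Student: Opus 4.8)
The plan is to reduce \textsc{Memory-Safe} to the \textsc{Monotony} problem and to place the latter in \NP. Since $\wc = \safe{\atmtnLang{\dfa}}$ is regular, its prefix preorder $\prefOrd_\wc$ is well-founded, so Theorem~\ref{thm:safety} applies: a memory structure $\memSkel$ suffices to play optimally for $\wc$ if and only if $\wc$ is $\memSkel$-strongly-monotone. Combined with Lemma~\ref{lem:twoViewsOnMonotony}, which equates the existence of such an $\memSkel$ of size $j$ with the existence of a monotone decomposition of $\dfa$ with $j$ sets, this shows that $(\dfa, k)$ is a positive instance of \textsc{Memory-Safe} if and only if $\dfa$ admits a monotone decomposition with at most $k$ sets. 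It therefore suffices to show that this last question is in \NP.

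We may assume $k \le \card{\atmtnStates}$: the family of all singletons $\{\atmtnState\}$ with $\atmtnState\in\atmtnStates$ is always a monotone decomposition (a singleton is trivially a chain for $\prefOrd_\wc$, and $\atmtnUpd(\{\atmtnState\}, \clr)$ is again a singleton, so condition~\eqref{cond:2} of Definition~\ref{def:monDec} holds), so the algorithm accepts outright when $k \ge \card{\atmtnStates}$. Hence a candidate decomposition $\coReachable{1}, \ldots, \coReachable{k} \subseteq \atmtnStates$ has description size polynomial in $\card{\dfa}$ and can serve as the guessed certificate.

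The one step requiring a genuine argument is to compute the relation $\prefOrd_\wc$ on the states of $\dfa$ in polynomial time. Since $\finalState$ is absorbing, for a state $\atmtnState$ the set $\inverse{\atmtnState}\wc$ consists exactly of the infinite words no finite prefix of which drives $\atmtnState$ into $\finalState$; consequently $\atmtnState_1 \prefOrd_\wc \atmtnState_2$ fails precisely when some finite word $\word$ satisfies $\atmtnUpdWord(\atmtnState_1, \word) \neq \finalState$ and $\atmtnUpdWord(\atmtnState_2, \word) = \finalState$. This is a reachability query in the product automaton $\dfa \times \dfa$ started from $(\atmtnState_1, \atmtnState_2)$ and targeting the set $\{(\atmtnState, \finalState) \mid \atmtnState \neq \finalState\}$, which is solvable in polynomial time; running it for all $\card{\atmtnStates}^2$ ordered pairs produces $\prefOrd_\wc$.

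Finally, the \NP procedure guesses a candidate decomposition $\coReachable{1}, \ldots, \coReachable{k}$ and verifies the three conditions of Definition~\ref{def:monDec}: \eqref{cond:1} is a union check; for \eqref{cond:2}, for each $\clr\in\colors$ and each $i$ it computes $\atmtnUpd(\coReachable{i}, \clr)$ and tests whether this set is contained in some $\coReachable{j}$; and for \eqref{cond:3} it uses the precomputed $\prefOrd_\wc$ to test that the states inside each $\coReachable{i}$ are pairwise comparable. All of this runs in polynomial time, and correctness is immediate from Theorem~\ref{thm:safety} and Lemma~\ref{lem:twoViewsOnMonotony}. The main obstacle is the polynomial-time computation of $\prefOrd_\wc$; the rest is a routine guess-and-check.
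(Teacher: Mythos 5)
Your overall route is the paper's: reduce to the \textsc{Monotony} problem via Theorem~\ref{thm:safety} and Lemma~\ref{lem:twoViewsOnMonotony}, dispose of $k \ge \card{\atmtnStates}$ with the singleton decomposition, and verify a polynomial-size certificate after precomputing $\prefOrd_\wc$ on states. The gap is in the one step you yourself single out as needing a genuine argument: your criterion for when $\atmtnState_1 \prefOrd_\wc \atmtnState_2$ fails is incorrect. The existence of a finite word $\word$ with $\atmtnUpdWord(\atmtnState_1, \word) \neq \finalState$ and $\atmtnUpdWord(\atmtnState_2, \word) = \finalState$ is necessary for $\atmtnState_1 \not\prefOrd_\wc \atmtnState_2$, but not sufficient: to turn $\word$ into a witness in $\inverse{\atmtnState_1}\wc \setminus \inverse{\atmtnState_2}\wc$ you must extend it to an infinite word avoiding $\finalState$ \emph{forever} from $\atmtnState_1$, which requires $\inverse{\atmtnUpdWord(\atmtnState_1,\word)}\wc \neq \emptyset$. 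The input automaton need not be the minimal automaton $\dfa_\wc$, so it may contain non-final states whose set of winning continuations is empty, and then your product-reachability query computes a strictly finer relation than $\prefOrd_\wc$. Concretely, take $\atmtnStates = \{\atmtnState_0, \atmtnState_1, \atmtnState_2, p, r, \finalState\}$ over $\{a,b\}$ with $\atmtnState_0 \xrightarrow{a} \atmtnState_1$, $\atmtnState_0 \xrightarrow{b} \atmtnState_2$, $\atmtnState_1 \xrightarrow{a} p$, $\atmtnState_1 \xrightarrow{b} \finalState$, $\atmtnState_2 \xrightarrow{a} \finalState$, $\atmtnState_2 \xrightarrow{b} r$, $p \xrightarrow{a,b} \finalState$, $r \xrightarrow{a,b} r$, and $\finalState$ absorbing. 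Here $\inverse{\atmtnState_1}\wc = \emptyset$, so $\prefOrd_\wc$ is total ($\finalState \prefEq p \prefEq \atmtnState_1 \prefOrd \atmtnState_0 \prefOrd \atmtnState_2 \prefOrd r$) and the single set $\atmtnStates$ is a monotone decomposition, so the answer for $k=1$ is \textsc{Yes}; but your test declares $\atmtnState_1$ and $\atmtnState_2$ incomparable (witnesses $\word = a$ and $\word = b$ respectively), so your verifier rejects every one-set certificate and your machine wrongly answers \textsc{No}.

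The repair is small: declare $\atmtnState_1 \not\prefOrd_\wc \atmtnState_2$ iff the product automaton reaches from $(\atmtnState_1, \atmtnState_2)$ some pair $(\atmtnState, \finalState)$ with $\inverse{\atmtnState}\wc \neq \emptyset$, where non-emptiness of $\inverse{\atmtnState}\wc$ is precomputed by checking whether $\atmtnState$ can reach a cycle lying entirely within the non-final states; equivalently, first merge every state with empty left quotient into $\finalState$ and then run your test. With that correction the argument is sound and coincides with the paper's proof, which at this point simply invokes polynomial-time language-containment checks without spelling them out.
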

\begin{proof}
	We show that the \textsc{Monotony} problem belongs to \NP, which is equivalent to our statement thanks to Lemma~\ref{lem:twoViewsOnMonotony}.
	Let $\dfa = \dfaFull$ be a finite automaton and $k\in\IN$.
	Notice that if $k \ge \card{\atmtnStates}$, the answer to the problem is always \textsc{Yes}, as $(\{\atmtnState\})_{\atmtnState\in\atmtnStates}$ is always a monotone decomposition.
	It is left to consider the case $k < \card{\atmtnStates}$.
	A monotone decomposition with $k$ sets of states of $\dfa$ therefore has size polynomial in the inputs.
	We can verify that such sets indeed form a monotone decomposition in polynomial time, by checking each of the three requirements from the definition of monotone decomposition.
	This is clear for the first two requirements.
	For the third requirement, we comment on how to check in polynomial time that each set is a chain.
	One way to do it is to precompute, for every pair $\atmtnState_1, \atmtnState_2\in\atmtnStates$, whether $\atmtnState_1 \prefOrd \atmtnState_2$, $\atmtnState_2 \prefOrd \atmtnState_1$, or none of these.
	This amounts to solving language containment queries, which can be done in polynomial time for regular languages recognized by finite automata given as an input.
	Once all these relations have been precomputed, checking whether each set of the decomposition is a chain can be done quickly, as a chain is simply a set in which all pairs of elements are comparable.
\end{proof}

For regular reachability objectives, we express $\memSkel$-progress-consistency in a way that makes decidability in polynomial time clear.
This generalizes~\cite[Lemma~25]{BCRV22} to handle memory structures and non-total prefix preorders.

\begin{lemma} \label{lem:PCreformulation}
	Let $\dfa = \dfaFull$ be a finite automaton, $\wc = \reach{\atmtnLang{\dfa}}$ be the derived regular reachability objective, and $\memSkel = \memSkelFull$ be a memory structure.
	We assume w.l.o.g.\ that $\dfa$ has a single final state $\finalState$ which is absorbing.

	Objective $\wc$ is $\memSkel$-progress-consistent if and only if
	\begin{align*}
		\forall \memState\in\memStates,\, \forall \atmtnState_1\in\atmtnStates,\,
		&[(\memPathsOn{\memInit}{\memState} \cap \memPathsOn{\atmtnInit}{\atmtnState_1} \neq \emptyset) \Rightarrow\\
		&(\forall \atmtnState_2\in\atmtnStates\ \text{s.t.}\ \atmtnState_2 \neq \finalState\ \text{and}\ \atmtnState_1 \strictPrefOrd \atmtnState_2,\, \memPathsOn{\memState}{\memState}\cap\memPathsOn{\atmtnState_1}{\atmtnState_2} \cap \memPathsOn{\atmtnState_2}{\atmtnState_2} = \emptyset)].
	\end{align*}
\end{lemma}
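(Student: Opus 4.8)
The plan is to prove the equivalence by establishing each implication separately, each by contraposition. Throughout I would use the standing assumption that $\finalState$ is the unique final state of $\dfa$ and is absorbing: this guarantees that for every $\word\in\colors^*$ the quotient $\inverse{\word}\wc$ depends only on $\atmtnUpdWord(\atmtnInit,\word)$, that $\word$ has a prefix in $\atmtnLang{\dfa}$ if and only if $\atmtnUpdWord(\atmtnInit,\word)=\finalState$, and hence that $\inverse{\finalState}\wc=\colors^\omega$ is the greatest element of $\prefOrd$, so that $\atmtnState_1\strictPrefOrd\atmtnState_2$ forces $\atmtnState_1\neq\finalState$. I would also use the automaton-state version of Lemma~\ref{lem:increasing}: if $\atmtnState\prefOrd\atmtnState'$ then $\atmtnUpd(\atmtnState,\clr)\prefOrd\atmtnUpd(\atmtnState',\clr)$, which is immediate from the word-level statement by picking witnessing words for $\atmtnState$ and $\atmtnState'$.

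For the ``only if'' direction, I would assume the displayed right-hand condition fails and derive a violation of $\memSkel$-progress-consistency. Fix $\memState$ and $\atmtnState_1$ with $\memPathsOn{\memInit}{\memState}\cap\memPathsOn{\atmtnInit}{\atmtnState_1}\neq\emptyset$, and $\atmtnState_2\neq\finalState$ with $\atmtnState_1\strictPrefOrd\atmtnState_2$ and $\memPathsOn{\memState}{\memState}\cap\memPathsOn{\atmtnState_1}{\atmtnState_2}\cap\memPathsOn{\atmtnState_2}{\atmtnState_2}\neq\emptyset$; pick $\word_1$ in the first intersection and $\word_2$ in the second. Then $\word_1\in\memPathsOn{\memInit}{\memState}$ and $\word_2\in\memPathsOn{\memState}{\memState}$, while $\atmtnUpdWord(\atmtnInit,\word_1)=\atmtnState_1$ and $\atmtnUpdWord(\atmtnInit,\word_1\word_2)=\atmtnState_2$, so $\inverse{\word_1}\wc=\inverse{\atmtnState_1}\wc\subsetneq\inverse{\atmtnState_2}\wc=\inverse{(\word_1\word_2)}\wc$, i.e.\ $\word_1\strictPrefOrd\word_1\word_2$. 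On the other hand, since $\word_2$ reads a self-loop on the non-final state $\atmtnState_2$ in $\dfa$ and $\finalState$ is absorbing, no prefix of $\word_1(\word_2)^\omega$ reaches $\finalState$, so $\word_1(\word_2)^\omega\notin\wc$; this is exactly a failure of $\memSkel$-progress-consistency.

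For the ``if'' direction, I would assume $\wc$ is not $\memSkel$-progress-consistent, i.e.\ there are $\memState$, $\word_1\in\memPathsOn{\memInit}{\memState}$, $\word_2\in\memPathsOn{\memState}{\memState}$ with $\word_1\strictPrefOrd\word_1\word_2$ but $\word_1(\word_2)^\omega\notin\wc$, and produce a witness falsifying the displayed condition. Set $\atmtnState_1=\atmtnUpdWord(\atmtnInit,\word_1)$, so $\word_1$ witnesses $\memPathsOn{\memInit}{\memState}\cap\memPathsOn{\atmtnInit}{\atmtnState_1}\neq\emptyset$. The key step is to replace $\word_2$ by a suitable power: since $\atmtnStates$ is finite, iterating the map $\atmtnState\mapsto\atmtnUpdWord(\atmtnState,\word_2)$ from $\atmtnState_1$ is eventually periodic, so there is $k\geq 1$ with $\atmtnState_2:=\atmtnUpdWord(\atmtnState_1,(\word_2)^k)$ satisfying $\atmtnUpdWord(\atmtnState_2,(\word_2)^k)=\atmtnState_2$; then $(\word_2)^k\in\memPathsOn{\memState}{\memState}\cap\memPathsOn{\atmtnState_1}{\atmtnState_2}\cap\memPathsOn{\atmtnState_2}{\atmtnState_2}$. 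Since $\word_1(\word_2)^k$ is a prefix of $\word_1(\word_2)^\omega\notin\wc$ and reaches $\atmtnState_2$ in $\dfa$, we get $\atmtnState_2\neq\finalState$. Finally $\atmtnState_1\strictPrefOrd\atmtnState_2$: from $\word_1\strictPrefOrd\word_1\word_2$ we obtain $\atmtnState_1\strictPrefOrd\atmtnUpdWord(\atmtnState_1,\word_2)$, and iterating the state-level Lemma~\ref{lem:increasing} yields an increasing chain $\atmtnState_1\prefOrd\atmtnUpdWord(\atmtnState_1,\word_2)\prefOrd\ldots\prefOrd\atmtnUpdWord(\atmtnState_1,(\word_2)^k)=\atmtnState_2$ whose first step is strict, hence $\atmtnState_1\strictPrefOrd\atmtnState_2$. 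Thus $\memState$, $\atmtnState_1$, $\atmtnState_2$ witness the negation of the displayed condition.

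I expect the main obstacle to be this ``if'' direction, precisely the idempotent-power argument: one must turn the qualitative statement ``it is possible to loop back to $\memState$ while making automaton progress, forever, without winning'' into a single finite self-loop in $\dfa$ on one state $\atmtnState_2\neq\finalState$ lying strictly above $\atmtnState_1$ for $\prefOrd$, and verify that the three language constraints $\memPathsOn{\memState}{\memState}$, $\memPathsOn{\atmtnState_1}{\atmtnState_2}$, $\memPathsOn{\atmtnState_2}{\atmtnState_2}$ are satisfied simultaneously by the single word $(\word_2)^k$. The bookkeeping needed because $\dfa$ may differ from the canonical automaton $\dfa_\wc$ is routine given the absorbing-final-state assumption, and the ``only if'' direction is a short direct computation.
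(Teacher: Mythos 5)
Your proof is correct and follows essentially the same route as the paper's: both directions by contraposition, with the forward direction a direct computation and the converse relying on the finiteness of $\atmtnStates$ together with Lemma~\ref{lem:increasing} to extract a pair $\atmtnState_1 \strictPrefOrd \atmtnState_2$ carrying a self-loop. The only difference is minor: the paper keeps $\word_2$ as the looping witness and shifts the prefix to $\word_1(\word_2)^{j-1}$ by taking the fixpoint of the non-decreasing state sequence $\atmtnUpdWord(\atmtnInit,\word_1(\word_2)^i)$, whereas you keep $\word_1$ and pump $\word_2$ to an idempotent power $(\word_2)^k$ --- your variant is, if anything, slightly more careful, since it does not assume that a $\prefOrd$-non-decreasing orbit in a possibly non-minimal $\dfa$ stabilizes to a single state.
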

This result reduces the search for words that witness ``non-$\memSkel$-progress-consistency'' to a more constrained situation.
In general, if a word $\word\in\memPathsOn{\memState}{\memState}$ witnesses that $\wc$ is not $\memSkel$-progress-consistent because it makes progress from a state $\atmtnState$ but does not win when repeated from $\atmtnState$, then we may have to read $\word$ multiple times on the automaton before noticing that repeating it does not reach $\finalState$.
However, in such a situation, we can actually find two states $\atmtnState_1 \strictPrefOrd \atmtnState_2$ such that $\word$ is read from $\atmtnState_1$ to $\atmtnState_2$ and $\word$ is a cycle on $\atmtnState_2$ --- in other words, just by reading $\word$ twice on the right state, we can notice that $\word$ contradicts $\memSkel$-progress-consistency.
\begin{proof}
	The left-to-right implication can be shown by contrapositive.
	Negating the implication gives a $\word_1\in\memPathsOn{\memInit}{\memState} \cap \memPathsOn{\atmtnInit}{\atmtnState_1}$ and a $\word_2\in \memPathsOn{\memState}{\memState}\cap\memPathsOn{\atmtnState_1}{\atmtnState_2} \cap \memPathsOn{\atmtnState_2}{\atmtnState_2}$ such that $\word_1 \strictPrefOrd \word_1\word_2$ and $\word_1(\word_2)^\omega$ does not go through $\finalState$, so $\word_1(\word_2)^\omega\notin\wc$.
	This shows that $\wc$ is not $\memSkel$-progress-consistent.

	For the right-to-left implication, we assume by contrapositive that $\wc$ is not $\memSkel$-progress-consistent: there exist $\memState\in\memStates$, $\word_1\in\memPathsOn{\memInit}{\memState}$, $\word_2\in\memPathsOn{\memState}{\memState}$ such that $\word_1 \strictPrefOrd \word_1\word_2$ and $\word_1(\word_2)^\omega \notin \wc$.
	For $i \ge 0$, let $\atmtnState_i' = \atmtnUpdWord(\atmtnInit, \word_1(\word_2)^i)$.
	We have $\atmtnState_0' \strictPrefOrd \atmtnState_1'$ since $\word_1 \strictPrefOrd \word_1\word_2$.
	By induction and by Lemma~\ref{lem:increasing}, the sequence $(\atmtnState_i')_{i\ge 0}$ is non-decreasing.
	As there are finitely many states, it therefore reaches a fixpoint, which cannot be $\finalState$ as $\word_1(\word_2)^\omega \notin \wc$.
	We denote $\atmtnState_2 = \atmtnState_j'$ its fixpoint and $\atmtnState_1 = \atmtnState_{j-1}'$ the last state before reaching the fixpoint (in particular, $\atmtnState_1 \strictPrefOrd \atmtnState_2$, $\atmtnUpdWord(\atmtnState_1, \word_2) = \atmtnState_2$, and $\atmtnUpdWord(\atmtnState_2, \word) = \atmtnState_2$).

	We have that $\word_1(\word_2)^{j-1} \in \memPathsOn{\memInit}{\memState} \cap \memPathsOn{\atmtnInit}{\atmtnState_1}$, $\atmtnState_1 \strictPrefOrd \atmtnState_2$, $\atmtnState_2 \neq \finalState$, and $\word_2 \in \memPathsOn{\memState}{\memState}\cap\memPathsOn{\atmtnState_1}{\atmtnState_2} \cap \memPathsOn{\atmtnState_2}{\atmtnState_2}$, which shows that we do not have the implication from the statement.
\end{proof}

This condition is easy to check algorithmically, as it consists of checking emptiness and non-emptiness of intersections of regular languages for all memory states $\memState$ and all pairs $\atmtnState_1, \atmtnState_2$ of comparable states of $\dfa$.

\begin{corollary} \label{cor:reachInNP}
	\textsc{Memory-Reach} is in \NP.
\end{corollary}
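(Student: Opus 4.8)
The plan is to exhibit a polynomial-time verifiable certificate, namely a suitable memory structure. By Theorem~\ref{thm:reach}, \textsc{Memory-Reach} on input $(\dfa, k)$ is a \textsc{Yes}-instance if and only if there is a memory structure $\memSkel$ of size at most $k$ such that $\wc = \reach{\atmtnLang{\dfa}}$ is both $\memSkel$-strongly-monotone and $\memSkel$-progress-consistent. I would first bound the size of a witnessing $\memSkel$. After a polynomial-time preprocessing I may assume, as in Lemma~\ref{lem:PCreformulation}, that $\dfa$ has a single absorbing final state $\finalState$. Viewed as a memory structure by forgetting $\finalState$, $\dfa$ is itself $\memSkel$-strongly-monotone (since $\memUpdHat(\memInit, \word_1) = \memUpdHat(\memInit, \word_2)$ forces $\word_1 \prefEq \word_2$, hence comparability) and $\memSkel$-progress-consistent vacuously (if $\word_1 \in \memPathsOn{\atmtnInit}{\atmtnState}$ and $\word_2 \in \memPathsOn{\atmtnState}{\atmtnState}$, then $\atmtnUpdWord(\atmtnInit, \word_1) = \atmtnUpdWord(\atmtnInit, \word_1\word_2)$, so $\word_1 \prefEq \word_1\word_2$ and the premise $\word_1 \strictPrefOrd \word_1\word_2$ never holds). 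Hence a \textsc{Yes}-instance always admits a witness with at most $\card{\atmtnStates}$ states, so the nondeterministic machine only needs to guess a $\memSkel$ with $\min(k, \card{\atmtnStates})$ states, whose description is polynomial in the input.

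It then remains to verify in polynomial time that a guessed $\memSkel$ makes $\wc$ $\memSkel$-strongly-monotone and $\memSkel$-progress-consistent. For the first property, I would precompute the prefix preorder $\prefOrd$ restricted to the states of $\dfa$ --- polynomial-time computable via language-inclusion queries on finite automata, exactly as in the proof of Lemma~\ref{lem:safeInNP} --- then compute each set $\coReachable{\memState}$ by a reachability computation in the synchronous product of $\dfa$ and $\memSkel$, and check via Lemma~\ref{lem:chains} that every $\coReachable{\memState}$ is a chain for $\prefOrd$. For $\memSkel$-progress-consistency, I would invoke the reformulation of Lemma~\ref{lem:PCreformulation}: for every $\memState \in \memStates$ and every pair $\atmtnState_1 \strictPrefOrd \atmtnState_2$ with $\atmtnState_2 \neq \finalState$ (polynomially many triples, read off the precomputed $\prefOrd$), test whether $\memPathsOn{\memInit}{\memState} \cap \memPathsOn{\atmtnInit}{\atmtnState_1}$ is empty and, if not, whether $\memPathsOn{\memState}{\memState} \cap \memPathsOn{\atmtnState_1}{\atmtnState_2} \cap \memPathsOn{\atmtnState_2}{\atmtnState_2}$ is empty. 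Each of these languages is recognized by a deterministic automaton of size polynomial in $\card{\dfa}\cdot\card{\memSkel}$ (a product of copies of $\dfa$ and $\memSkel$), so each emptiness test is a polynomial-time reachability check. The machine accepts iff all checks succeed; correctness is immediate from Theorem~\ref{thm:reach}, Lemma~\ref{lem:chains}, and Lemma~\ref{lem:PCreformulation}.

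Essentially all of the technical content is already available: the reduction of $\memSkel$-progress-consistency to a bounded number of regular-language emptiness queries is Lemma~\ref{lem:PCreformulation}, and the polynomial-time computation of $\prefOrd$ on the states of $\dfa$ is used in Lemma~\ref{lem:safeInNP}. The proof is therefore mostly an assembly of these pieces, and the only point requiring care is the a priori bound on the size of a witnessing memory structure, needed so that the guessed certificate is polynomial in the input; this is precisely what the observation that $\dfa$ itself works (with $\card{\atmtnStates}$ states) provides. I do not expect a genuine obstacle beyond this bookkeeping.
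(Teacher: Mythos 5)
Your proposal is correct and follows essentially the same route as the paper: handle $k \ge \card{\atmtnStates}$ by observing that $\dfa$ itself (with the final state made absorbing and then forgotten) is a sufficient memory structure, then guess a polynomial-size $\memSkel$ and verify $\memSkel$-strong-monotony via the chain/decomposition check and $\memSkel$-progress-consistency via Lemma~\ref{lem:PCreformulation}. The only cosmetic difference is that you phrase the large-$k$ case as a bound on the witness size rather than as an immediate \textsc{Yes} answer.
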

\begin{proof}
	Let $\dfa = \dfaFull$ be a finite automaton and $k\in\IN$.
	If $k \ge \card{\atmtnStates}$, then the answer to \textsc{Memory-Reach}$(\dfa, k)$ is always \textsc{Yes}, as using $\dfa$ as a memory structure (by omitting the final states of $\dfa$) always suffices to play optimally.
	Indeed, $\dfa$-strong-monotony and $\dfa$-progress-consistency of an objective induced by $\dfa$ can be quickly checked (for $\dfa$-strong-monotony, by using $\dfa$ as a memory structure, we always know precisely the current class of prefixes, which is even stronger than knowing a chain; for $\dfa$-progress-consistency, any progress necessarily changes the state as two words with distinct winning continuations cannot reach the same state).
	It is left to consider the case $k < \card{\atmtnStates}$.
	A sufficient memory structure $\memSkel$ of size $k$ then has size polynomial in the inputs.
	To check that it suffices to play optimally, we need to verify that $\wc$ is $\memSkel$-strongly-monotone and $\memSkel$-progress-consistent.
	From memory $\memSkel$, we can check that the sets $\coReachable{\memState}$ for each state $\memState$ of $\memSkel$ form a monotone decomposition in polynomial time (Lemma~\ref{lem:safeInNP}).
	This means that $\memSkel$-strong-monotony can be checked in polynomial time.
	The $\memSkel$-progress-consistency property can also be checked in polynomial time by Lemma~\ref{lem:PCreformulation}.
\end{proof}

\subparagraph*{\NP-hardness.}
We show that the \textsc{Monotony} problem is \NP-hard, using a reduction from the (directed) \textsc{HamiltonianCycle} problem, which is \NP-complete~\cite{Kar72}.
In the following, a \emph{(directed) graph} is a tuple $\graph = \graphFull$ with $\edges \subseteq \vertices \times \vertices$.
A \emph{Hamiltonian cycle of $\graph$} is a sequence $(\uertex_1, \ldots, \uertex_n)$ in which each vertex of $\vertices$ appears exactly once, $(\uertex_i, \uertex_{i+1}) \in \edges$ for all $i$, $1 \le i < n$, and $(\uertex_n, \uertex_1) \in \edges$.

\begin{decisionProblem}
	\problemtitle{\textsc{HamiltonianCycle}}
	\probleminput{A directed graph $\graph = \graphFull$.}
	\problemquestion{Is there a \emph{Hamiltonian cycle} in $\graph$?}
\end{decisionProblem}

\begin{proposition} \label{prop:safeNPHard}
	\textsc{Monotony} is \NP-hard.
	More precisely, for every graph $\graph = \graphFull$, there is a polynomial-size automaton $\dfa_\graph$ such that $\graph$ has a Hamiltonian cycle if and only if $\dfa_\graph$ has a monotone decomposition of size $\card{\vertices} + \card{\edges} + 1$.
	Objective $\reach{\atmtnLang{\dfa_\graph}}$ is moreover $\memSkelTriv$-progress-consistent.
\end{proposition}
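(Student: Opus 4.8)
The plan is to construct, from a directed graph $\graph = \graphFull$, a finite automaton $\dfa_\graph$ over a suitable alphabet so that monotone decompositions of $\dfa_\graph$ of the target size $\card{\vertices} + \card{\edges} + 1$ are exactly the Hamiltonian cycles of $\graph$. The natural alphabet is $\colors = \edges$ (one letter per edge of $\graph$), and the states of $\dfa_\graph$ should include one ``vertex state'' $\atmtnState_\uertex$ per vertex $\uertex\in\vertices$, one ``edge state'' per edge, a single absorbing final state $\finalState$ (making the objective both a reachability and a safety objective usable in Lemma~\ref{lem:twoViewsOnMonotony}), and possibly a bottom sink. The transition function should be designed so that reading the letter $(\uertex,\uertex')$ from $\atmtnState_\uertex$ leads towards $\atmtnState_{\uertex'}$ (via the corresponding edge state), while reading a ``wrong'' letter collapses to the sink or to $\finalState$. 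The prefix preorder $\prefOrd$ should be arranged — by a careful choice of which states are ``more winning'' — so that the vertex states $\{\atmtnState_\uertex\}$ form an antichain (forcing each $\coReachable{i}$ to contain at most one vertex state), while the edge states and the sink/final states are comparable to everything as needed.

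First I would fix the construction precisely and verify the ``easy'' half: if $\graph$ has a Hamiltonian cycle $(\uertex_1,\dots,\uertex_n)$, then the cycle gives a natural covering of $\atmtnStates$ by chains. The idea is that each chain $\coReachable{i}$ bundles one vertex state together with the edge state(s) of the cycle edge leaving it, plus whatever ``absorbing'' states ($\finalState$, sink) are needed to satisfy closure condition~\eqref{cond:2} of Definition~\ref{def:monDec}; the Hamiltonicity guarantees that each vertex is used exactly once so that $\card{\vertices}$ chains suffice for the vertex states, $\card{\edges}$ more handle the edge states not on the cycle, and one extra handles the remaining sink/final states — totalling $\card{\vertices} + \card{\edges} + 1$. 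I would check conditions~\eqref{cond:1},~\eqref{cond:2},~\eqref{cond:3} hold for this decomposition, the crucial point being that the transitions ``follow the cycle'', so $\atmtnUpd(\coReachable{i},\clr)$ always lands inside a single chain.

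The harder direction — and the main obstacle — is the converse: showing that any monotone decomposition with only $\card{\vertices}+\card{\edges}+1$ sets must ``encode'' a Hamiltonian cycle. Here the antichain structure of the vertex states forces, by a counting argument (à la Dilworth, cf.\ the discussion after Lemma~\ref{lem:twoViewsOnMonotony}), that the $\card{\vertices}$ vertex states lie in $\card{\vertices}$ distinct chains, leaving exactly $\card{\edges}+1$ chains for the remaining states; the closure condition~\eqref{cond:2} then forces these chains to be ``threaded'' by the transition function in a way that, when one tracks which vertex state a chain's successor lands in, produces a permutation of $\vertices$ that is a single $n$-cycle following edges of $\graph$. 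Making this rigidity argument airtight — ruling out decompositions that split the budget differently, e.g.\ putting two vertex states together at the cost of wasting a chain elsewhere — is the delicate part, and I expect it will require the automaton's transition structure to be engineered so that any ``wasted'' chain immediately violates either the size bound or condition~\eqref{cond:2}. Finally, I would separately verify the last sentence of the statement: $\reach{\atmtnLang{\dfa_\graph}}$ is $\memSkelTriv$-progress-consistent, which should follow because in $\dfa_\graph$ any letter that makes progress (strictly increases winning continuations) from a state already leads into $\finalState$ or directly forces the win when repeated — so repeating such a letter is immediately winning, trivially satisfying the definition with $\memSkel = \memSkelTriv$. This last check ensures the same reduction witnesses \NP-hardness of \textsc{Memory-Reach} as well, since by Theorem~\ref{thm:reach} the memory requirement of $\reach{\atmtnLang{\dfa_\graph}}$ then coincides with its $\memSkelTriv$-needs dictated solely by $\memSkel$-strong-monotony, i.e.\ by \textsc{Monotony}.
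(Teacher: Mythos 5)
There is a genuine gap in the hard direction, and it is located exactly where you flag it yourself. Your construction keeps only one copy of $\graph$ (vertex states, edge states, a sink and a final state) and hopes that a monotone decomposition of size $\card{\vertices}+\card{\edges}+1$ will ``thread'' the chains into a single $n$-cycle. But a chain cover that pairs each vertex state with one outgoing edge state, closed under the transition function, would equally well encode a \emph{disjoint union of cycles} covering all vertices (a cycle cover), not necessarily a single Hamiltonian cycle; distinguishing these two is precisely the source of \NP-hardness, and nothing in your sketch rules out the union-of-cycles decompositions. Your accounting also does not line up: with states $\vertices\cup\edges\cup\{\finalState,\bot\}$ and the sink/final states comparable to everything, the largest antichain has size at most $\card{\vertices}+\card{\edges}$, so a budget of $\card{\vertices}+\card{\edges}+1$ leaves one spare chain of slack, which typically destroys the rigidity you need. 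Finally, with alphabet $\colors=\edges$ and ``wrong letters collapse to the sink,'' it is not clear the prefix preorder comes out as the antichain structure you assume.

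The paper's proof supplies the missing device: it builds $\dfa_\graph$ as the disjoint union $\graphAut{\cycleGr_n}\disjUnion\graphAut{\graph}$ of the automaton of the \emph{cycle graph} $\cycleGr_n$ and that of $\graph$ (over the two-letter alphabet $\{\inn,\out\}$, where an edge state goes to its source on $\inn$ and its target on $\out$), together with states $\atmtnInit,\bot,\top$ and one fresh letter $\letter_\vORe$ per state whose sole purpose is to engineer the preorder: $\bot$ minimal, $\top$ maximal, each $\cycleGr_n$-vertex below every $\graph$-vertex, each $\cycleGr_n$-edge below every $\graph$-edge, and everything else incomparable. Then $\vertices\cup\edges\cup\{\atmtnInit\}$ is an antichain of size $n+m+1$, so that many chains are forced with no slack; a decomposition of exactly that size must pair each $\cycleGr_n$-vertex (resp.\ edge) bijectively with a $\graph$-vertex (resp.\ edge), and closure under $\inn$ and $\out$ forces this pairing to be a graph homomorphism $\cycleGr_n\to\graph$ injective on vertices --- which is exactly a Hamiltonian cycle. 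Your easy direction and your final observation about $\memSkelTriv$-progress-consistency (any strict progress in $\dfa_\graph$ lands in $\top$) do match the paper once the construction is fixed, but without the cycle-graph pairing (or an equivalent gadget) the reduction as you describe it does not establish hardness.
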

\begin{proof}
	We start by defining an operator $\graphAut{\cdot}$ turning a directed graph into an automaton.
	Let $\graph = \graphFull$ be a directed graph.
	We define $\graphAut{\graph}$ as the automaton $(\atmtnStates, \alp, \atmtnUpd, \atmtnInit, \finalStates)$ with $\atmtnStates = \vertices \disjUnion \edges$, $\alp = \{\inn, \out\}$, and transitions such that
	\begin{itemize}
		\item for $\vertex\in\vertices$, $\atmtnUpd(\vertex, \inn) = \atmtnUpd(\vertex, \out) = \vertex$;
		\item for $\edge = (\vertex_1, \vertex_2)\in\edges$, $\atmtnUpd(\edge, \inn) = \vertex_1$ and $\atmtnUpd(\edge, \out) = \vertex_2$.
	\end{itemize}
	We ignore $\atmtnInit$ and $\finalStates$ at the moment.
	This definition is inspired from a reduction in~\cite{Boo78} (although the rest of the proof is different).

	Let us consider a graph $\graph = \graphFull$ as an input to the \textsc{HamiltonianCycle} problem.
	We show how to transform it in a polynomial-size automaton for which the answer to the \textsc{Monotony} problem (along with a well-chosen $k\in\IN$) corresponds.
	Let $n = \card{\vertices}$ and $m = \card{\edges}$.
	We assume that $m \ge n$ (otherwise, $\graph$ cannot have a Hamiltonian cycle).
	We also consider the \emph{cycle graph with $n$ vertices} $\cycleGr_n = (\vertices_\cycleGr, \edges_\cycleGr)$, with $\vertices_\cycleGr = \{\vertex_1^\cycleGr, \ldots, \vertex_n^\cycleGr\}$ and $\edges_\cycleGr = \{\edge_1^\cycleGr, \ldots, \edge_n^\cycleGr\}$ such that $\edge_i^\cycleGr = (\vertex_i^\cycleGr, \vertex_{i+1}^\cycleGr)$ for $1\le i < n$ and $\edge_n = (\vertex_n^\cycleGr, \vertex_1^\cycleGr)$.
	We now consider an automaton $\dfa_\graph = (\atmtnStates, \alp, \atmtnUpd, \atmtnInit, \finalStates)$ based on the disjoint union $\graphAut{\cycleGr_n} \disjUnion \graphAut{\graph}$ along with one new letter for each state and three extra states:
	\begin{itemize}
		\item $\atmtnStates = (\vertices_\cycleGr \disjUnion \edges_\cycleGr) \disjUnion (\vertices \disjUnion \edges) \disjUnion \{\atmtnInit, \bot, \top\}$,
		\item $\alp = \{\inn, \out\} \disjUnion \{\letter_\vORe \mid \vORe \in (\vertices_\cycleGr \cup \edges_\cycleGr) \cup (\vertices \cup \edges)\}$,
		\item $\finalStates = \{\top\}$.
	\end{itemize}
	The transitions with $\inn$ and $\out$ are defined as above for states of $(\vertices_\cycleGr \cup \edges_\cycleGr) \cup (\vertices \cup \edges)$, and are self-loops on $\atmtnInit$, $\bot$ and $\top$.
	We illustrate this construction in Figure~\ref{fig:NPhard}.

	\begin{figure}[htb]
		\begin{minipage}{0.3\textwidth}
			\centering
			\begin{tikzpicture}[every node/.style={font=\small,inner sep=1pt}]
				\draw (0,0) node[rond] (v1) {$\vertex_1$};
				\draw ($(v1)+(0.6,1.2)$) node[rond] (v2) {$\vertex_2$};
				\draw ($(v1)+(1.2,0)$) node[rond] (v4) {$\vertex_4$};
				\draw ($(v1)+(1.8,1.2)$) node[rond] (v3) {$\vertex_3$};
				\draw ($(v1)+(-0.1,1)$) node[] () {$\graph$};

				\draw (v1) edge[-latex'] (v2);
				\draw (v2) edge[-latex'] (v3);
				\draw (v3) edge[-latex'] (v4);
				\draw (v4) edge[-latex'] (v1);

				\draw (v2) edge[-latex'] (v4);
			\end{tikzpicture}
		\end{minipage}
		\begin{flushright}
			\begin{minipage}{0.9\textwidth}
				\centering
				\begin{tikzpicture}[every node/.style={font=\small,inner sep=1pt}]
					\draw (0,0) node[diamant] (v1c) {$\vertex_1^\cycleGr$};
					\draw ($(v1c)+(0,2.2)$) node[diamant] (v2c) {$\vertex_2^\cycleGr$};
					\draw ($(v1c)+(2.2,0)$) node[diamant] (v4c) {$\vertex_4^\cycleGr$};
					\draw ($(v1c)+(2.2,2.2)$) node[diamant] (v3c) {$\vertex_3^\cycleGr$};
					\draw ($(v1c)!0.5!(v2c)$) node[petitdiamant] (e1c) {};
					\draw ($(v2c)!0.5!(v3c)$) node[petitdiamant] (e2c) {};
					\draw ($(v3c)!0.5!(v4c)$) node[petitdiamant] (e3c) {};
					\draw ($(v4c)!0.5!(v1c)$) node[petitdiamant] (e4c) {};
					\draw ($(v2c)+(-1.2,0.8)$) node[] () {$\dfa_\graph$};

					\draw [decoration={calligraphic brace,mirror}, decorate, thick] ($(v1c.west)-(0,0.6)$) -- ($(v4c.east)-(-0.06,0.6)$)
					node [midway, below=3, font=\footnotesize] {$\graphAut{\cycleGr_n}$};

					\draw (e1c) edge[-latex'] node[left=4pt] {$\inn$} (v1c);
					\draw (e1c) edge[-latex'] node[left=4pt] {$\out$} (v2c);
					\draw (e2c) edge[-latex'] node[above=4pt] {$\inn$} (v2c);
					\draw (e2c) edge[-latex'] node[above=4pt] {$\out$} (v3c);
					\draw (e3c) edge[-latex'] node[right=4pt] {$\inn$} (v3c);
					\draw (e3c) edge[-latex'] node[right=4pt] {$\out$} (v4c);
					\draw (e4c) edge[-latex'] node[below=4pt] {$\inn$} (v4c);
					\draw (e4c) edge[-latex'] node[below=4pt] {$\out$} (v1c);
					\draw (v2c) edge[-latex',out=60,in=120,distance=0.8cm] node[above=4pt] {$\inn, \out$} (v2c);
					\draw (v3c) edge[-latex',out=60,in=120,distance=0.8cm] node[above=4pt] {} (v3c);

					\draw ($(v1c)+(4,0)$) node[diamant] (v1) {$\vertex_1$};
					\draw ($(v1)+(1.1,2.2)$) node[diamant] (v2) {$\vertex_2$};
					\draw ($(v1)+(2.2,0)$) node[diamant] (v4) {$\vertex_4$};
					\draw ($(v1)+(3.3,2.2)$) node[diamant] (v3) {$\vertex_3$};
					\draw ($(v1)!0.5!(v2)$) node[petitdiamant] (e1) {};
					\draw ($(v2)!0.5!(v3)$) node[petitdiamant] (e2) {};
					\draw ($(v3)!0.5!(v4)$) node[petitdiamant] (e3) {};
					\draw ($(v4)!0.5!(v1)$) node[petitdiamant] (e4) {};

					\draw ($(v2)!0.5!(v4)$) node[petitdiamant] (e5) {};

					\draw [decoration={calligraphic brace,mirror}, decorate, thick] ($(v1.west)-(0,0.6)$) -- ($(v3.east)-(-0.06,2.8)$)
					node [midway, below=3, font=\footnotesize] {$\graphAut{\graph}$};

					\draw (e1) edge[-latex'] node[left=3pt,yshift=2pt] {$\inn$} (v1);
					\draw (e1) edge[-latex'] node[left=3pt,yshift=2pt] {$\out$} (v2);
					\draw (e2) edge[-latex'] node[above=4pt] {$\inn$} (v2);
					\draw (e2) edge[-latex'] node[above=4pt] {$\out$} (v3);
					\draw (e3) edge[-latex'] node[left=3pt,yshift=2pt] {$\inn$} (v3);
					\draw (e3) edge[-latex'] node[left,yshift=4pt] {$\out$} (v4);
					\draw (e4) edge[-latex'] node[below=4pt] {$\inn$} (v4);
					\draw (e4) edge[-latex'] node[below=4pt] {$\out$} (v1);

					\draw (e5) edge[-latex'] node[right=3pt,yshift=2pt] {$\inn$} (v2);
					\draw (e5) edge[-latex'] node[left=3pt,yshift=-2pt] {$\out$} (v4);
					\draw (v2) edge[-latex',out=60,in=120,distance=0.8cm] node[above=4pt] {} (v2);
					\draw (v3) edge[-latex',out=60,in=120,distance=0.8cm] node[above=4pt] {$\inn, \out$} (v3);

					\draw ($(v3c)!0.5!(v2)+(0,1)$) node[diamant] (qinit) {$\atmtnInit$};
					\draw ($(qinit.north)+(0,0.45)$) edge[-latex'] (qinit);
					\draw (qinit) edge[-latex'] node[below right] {$\letter_{\vertex_3^\cycleGr}$} (v3c);
					\draw (qinit) edge[-latex'] node[below left] {$\letter_{\vertex_2}$} (v2);
					\draw (qinit) edge[-latex',out=180,in=90] node[] {} (e2c);
					\draw (qinit) edge[-latex',out=0,in=90] node[] {} (e2);

					\draw ($(v1c)!0.5!(v2c)-(1.5,0)$) node[diamant] (bot) {$\bot$};
					\draw ($(v3)!0.5!(v4)+(1.5,0)$) node[diamant,double] (top) {$\top$};

					\draw (top) edge[-latex',out=-30,in=30,distance=0.8cm] node[right=4pt] {$\alp$} (top);
					\draw (bot) edge[-latex',out=150,in=210,distance=0.8cm] node[left=4pt] {$\alp$} (bot);
					\draw (v3) edge[-latex'] node[above right=1pt] {$\letter_{\vertex_3}, \letter_{\vertex_1^\cycleGr},\ldots$} (top);
					\draw (v4) edge[-latex'] node[below right=1pt] {$\letter_{\vertex_4}, \letter_{\vertex_1^\cycleGr},\ldots$} (top);

					\draw (v1c) edge[-latex'] node[below left] {$\letter_{\vertex_2^\cycleGr},\letter_{\vertex_3^\cycleGr},\ldots$} (bot);
					\draw (v2c) edge[-latex'] node[above left] {$\letter_{\vertex_1^\cycleGr},\letter_{\vertex_3^\cycleGr},\ldots$} (bot);
				\end{tikzpicture}
			\end{minipage}
		\end{flushright}
		\caption{Illustration of automaton $\dfa_\graph$ starting from a graph $\graph$ with four vertices.
			Only a few transitions of each kind are shown.
			Kinds of transitions that are not completely represented include: transitions from $\atmtnInit$ to all states $\vORe \in (\vertices_\cycleGr \cup \edges_\cycleGr) \cup (\vertices \cup \edges)$ with letter $\letter_{\vORe}$, self-loops on all states in $\vertices_\cycleGr\cup\vertices$ with letters $\inn$ and $\out$, transitions from all states $\vORe \in (\vertices_\cycleGr \cup \edges_\cycleGr) \cup (\vertices \cup \edges)$ to $\top$ with letter $\letter_{\vORe}$, transitions from all states in $\vertices$ to $\top$ with letters $\letter_{\vertex^\cycleGr}$ with $\vertex^\cycleGr\in\vertices_\cycleGr$, transitions from all states in $\edges$ to $\top$ with letters $\letter_{\edge^\cycleGr}$ with $\edge^\cycleGr\in\edges_\cycleGr$, and transitions from all states in $(\vertices_\cycleGr \cup \edges_\cycleGr) \cup (\vertices \cup \edges)$ to $\bot$ for letters $\letter_\vORe$ that do not go to $\top$.}
		\label{fig:NPhard}
	\end{figure}

	The sole purpose of the new letters $\letter_\vORe$ is to induce a relevant ordering $\prefOrd$ --- intuitively, we want $\bot$ to be the smallest state, $\top$ to be the largest, and all automaton states corresponding to vertices (resp.\ edges) of $\graphAut{\cycleGr_n}$ to be smaller than all automaton states corresponding to vertices (resp.\ edges) of $\graphAut{\graph}$, while making all other pairs of states non-comparable.
	Formally, for $\vORe, \vORe' \in (\vertices_\cycleGr \cup \edges_\cycleGr) \cup (\vertices \cup \edges)$ we define
	\begin{align*}
		\atmtnUpd(\vORe, \letter_{\vORe'}) =
		\begin{cases}
			\top &\text{if $\vORe = \vORe'$},\\
			\top &\text{if $\vORe \in \vertices$ and $\vORe' \in \vertices_\cycleGr$},\\
			\top &\text{if $\vORe \in \edges$ and $\vORe' \in \edges_\cycleGr$},\\
			\bot &\text{otherwise.}
		\end{cases}
	\end{align*}
	We moreover define, for all $\vORe\in(\vertices_\cycleGr \cup \edges_\cycleGr) \cup (\vertices \cup \edges)$, $\atmtnUpd(\atmtnInit, \letter_\vORe) = \vORe$, $\atmtnUpd(\bot, \letter_\vORe) = \bot$, and $\atmtnUpd(\top, \letter_\vORe) = \top$.

	We sum up the relations between the elements that follow from this construction:
	\begin{itemize}
		\item for all $\atmtnState\in\atmtnStates\setminus\{\bot\}$, $\bot \strictPrefOrd \atmtnState$,
		\item for all $\atmtnState\in\atmtnStates\setminus\{\top\}$, $\atmtnState \strictPrefOrd \top$,
		\item for all $\vertex^\cycleGr\in\vertices_\cycleGr$, for all $\vertex\in\vertices$, $\vertex^\cycleGr \strictPrefOrd \vertex$,
		\item for all $\edge^\cycleGr\in\edges_\cycleGr$, for all $\edge\in\edges$, $\edge^\cycleGr \strictPrefOrd \edge$,
		\item all other pairs of distinct states are non-comparable for $\prefOrd$.
	\end{itemize}

	The largest antichain in $\dfa_\graph$ for $\prefOrd$ is attained by $\vertices \cup \edges \cup \{\atmtnInit\}$: all these states are non-comparable, and all other states are comparable to some of them.
	This antichain has size $n + m + 1$.
	Therefore, a monotone decomposition of $\dfa_\graph$ has size at least $n + m + 1$.
	We claim that it can have a size of exactly $n + m + 1$ if and only if $\graph$ has a Hamiltonian cycle.
	This suffices to end the proof, since the size of $\dfa_\graph$ is polynomial in the size of $\graph$, and \textsc{HamiltonianCycle}$(\graph)$ then returns \textsc{Yes} if and only if \textsc{Monotone}$(\dfa_\graph, n + m + 1)$ returns~\textsc{Yes}.

	\begin{claim}
		Graph $\graph$ has a Hamiltonian cycle if and only if $\dfa_\graph$ has a monotone decomposition with $n + m + 1$ sets.
	\end{claim}

	We first prove the left-to-right implication.
	We assume that $\graph$ has a Hamiltonian cycle $(\uertex_1, \ldots, \uertex_n)$.
	Let $\edge_i = (\uertex_i, \uertex_{i+1})$ for $1 \le i < n$, and $\edge_n = (\uertex_n, \uertex_1)$.
	Let $\edges \setminus \{\edge_1,\ldots,\edge_n\} = \{\edge_{n+1}, \ldots, \edge_{m}\}$.
	The fact that there is a Hamiltonian cycle in $\graph$ allows for a natural pairing of vertices (resp.\ edges) of $\cycleGr_n$ with vertices (resp.\ edges) of $\graph$ in sets of a monotone decomposition, which will in particular be closed by reading $\inn$ and $\out$.
	We define sets $(\parti_i)_{1\le i\le n+m+1}$ such that:
	\begin{itemize}
		\item for $1\le i\le n$, $\parti_i = \{\bot, \vertex_i^\cycleGr, \uertex_i, \top\}$;
		\item for $1\le i\le n$, $\parti_{n+i} = \{\bot, \edge_i^\cycleGr, \edge_i, \top\}$;
		\item for $1 \le i\le m - n$, $\parti_{2n+i} = \{\bot, \edge_{n + i}, \top\}$;
		\item $\parti_{n+m+1} = \{\bot, \atmtnInit, \top\}$.
	\end{itemize}
	We check that these sets form a monotone decomposition of $\dfa_\graph$.
	These sets cover the states of $\dfa_\graph$, and they are chains by construction.
	It is left to check the second requirement~\eqref{cond:2} of a monotone decomposition.
	Let $i\in\{1, \ldots, n+m+1\}$.
	If $i\ge 2n + 1$, then $\parti_i$ has three elements.
	For $\letter\in\alp$, the set $\atmtnUpd(\parti_i, \letter)$ is a set with at most three elements that includes $\bot$ and $\top$, so it is clearly a subset of some $\parti_j$.
	If $i\le 2n$, $\parti_i$ is a four-element set.
	Then,
	\begin{itemize}
		\item for $1 \le i \le n$, $\atmtnUpd(\parti_i, \inn) = \atmtnUpd(\parti_i, \out) = \parti_i$ (as $\inn$ and $\out$ are self-loops on states of $\vertices_\cycleGr \cup \vertices \cup \{\bot, \top\}$);
		\item for $1 \le i \le n$, $\atmtnUpd(\parti_{n + i}, \inn) = \{\bot, \vertex_i^\cycleGr, \uertex_i, \top\} = \parti_i$ and $\atmtnUpd(\parti_{n + i}, \out) = \{\bot, \vertex_{i+1}^\cycleGr, \uertex_{i+1}, \top\} = \parti_{i+1}$;
		\item for $\vORe \in (\vertices_\cycleGr \cup \edges_\cycleGr) \cup (\vertices \cup \edges)$, $\atmtnUpd(\parti_i, \letter_\vORe) = \{\bot, \top\}$, which is a subset of any $\parti_j$.
	\end{itemize}
	We have shown that sets $(\parti_i)_{1\le i\le n+m+1}$ form a monotone decomposition of $\dfa_\graph$ of size $n + m + 1$.

	We now prove the right-to-left implication.
	Let $(\parti_i)_{1\le i\le n+m+1}$ be a monotone decomposition of $\dfa_\graph$ with $n + m + 1$ sets.
	Every set $\parti_i$ contains at most two states besides $\bot$ and $\top$ (due to the chain requirement and the structure of chains in $\dfa_\graph$).
	As $\vertices \cup \edges \cup \{\atmtnInit\}$ is an antichain of size $n + m + 1$, every state of this set is in exactly one set $\parti_i$.
	Due to the limited number of sets and the chain structure, states of $\vertices_\cycleGr$ (resp.\ $\edges_\cycleGr$) need to be in a $\parti_i$ along with an element of $\vertices$ (resp.\ $\edges$).
	As $\vertices_\cycleGr$ and $\vertices$ have the same cardinality, this implies that for every $i\in\{1,\ldots,n\}$, there is a unique $\uertex_i\in\vertices$ such that $\vertex^\cycleGr_i$ and $\uertex_i$ are in the same $\parti_j$.
	We show that the sequence $(\uertex_1, \ldots, \uertex_n)$ is a Hamiltonian cycle of $\graph$.
	We write $\uertex_{n+1} = \uertex_1$ for brevity.
	Let $i\in\{1, \ldots, n\}$.
	The edge $\edge_i^\cycleGr$ of $\cycleGr_n$ is in some set $\parti_j$ along with some edge $\edge_i = (\vertex_i, \vertex_{i+1})\in\edges$.
	We have that
	\begin{itemize}
		\item $\atmtnUpd(\parti_j, \inn)$ contains $\vertex_i^\cycleGr$ and $\vertex_i$.
		As $\atmtnUpd(\parti_j, \inn)$ is a subset of some $\parti_l$, and that $\vertex_i^\cycleGr$ is in a single set along with $\uertex_i$, we deduce that $\vertex_i = \uertex_i$.
		\item similarly, from observing $\atmtnUpd(\parti_j, \out)$, we deduce that $\vertex_{i+1} = \uertex_{i+1}$.
	\end{itemize}
	Therefore, $\edge_i = (\vertex_i, \vertex_{i+1}) = (\uertex_i, \uertex_{i+1}) \in \edges$.
	We have shown that $(\uertex_1, \ldots, \uertex_n)$ is a Hamiltonian cycle of $\graph$, which proves the claim.

	We additionally observe that objective $\reach{\atmtnLang{\dfa_\graph}}$ is $\memSkelTriv$-progress-consistent.
	Indeed, notice that if there are $\atmtnState_1, \atmtnState_2\in\atmtnStates$, $\word\in\colors^*$ such that $\atmtnUpdWord(\atmtnState_1, \word) = \atmtnState_2$ and $\atmtnState_1 \strictPrefOrd \atmtnState_2$, then $\atmtnState_2 = \top$ (any progress is immediately winning).
\end{proof}

We now have all the ingredients to prove Theorem~\ref{thm:NPcomplete}.
\thmComplexity*
\begin{proof}
	The \textsc{Memory-Safe} problem is in \NP (Lemma~\ref{lem:safeInNP}), and was shown to be equivalent to the \textsc{Monotone} problem, itself \NP-hard (Proposition~\ref{prop:safeNPHard}).
	This shows that \textsc{Memory-Safe} is \NP-complete.

	The \textsc{Memory-Reach} problem is in \NP (Corollary~\ref{cor:reachInNP}).
	Moreover, in Proposition~\ref{prop:safeNPHard}, the finite automata considered (the $\dfa_\graph$ for $\graph$ a directed graph) induce $\memSkelTriv$-progress-consistent regular reachability objectives.
	By Theorem~\ref{thm:reach}, this means that a memory structure $\memSkel$ suffices for such an objective if and only if it is $\memSkel$-strongly-monotone.
	In other words, a memory structure $\memSkel$ suffices for $\reach{\atmtnLang{\dfa_\graph}}$ if and only if it suffices for $\safe{\atmtnLang{\dfa_\graph}}$.
	As the problem is \NP-hard for the family $\safe{\atmtnLang{\dfa_\graph}}$, it is also \NP-hard for the family $\reach{\atmtnLang{\dfa_\graph}}$.
\end{proof}

We remark that our proof of \NP-hardness of \textsc{Memory-Reach} relies solely on the $\memSkel$-strong-monotony notion.
We leave as an open problem whether finding a small $\memSkel$ such that a regular reachability objective is $\memSkel$-progress-consistent is also \NP-hard.
This would be especially interesting if it held for the class of $\memSkelTriv$-strongly-monotone objectives, as it would suggest that there is a class of automata for which finding a smallest memory structure for their induced reachability objective is harder than for their induced safety objective.

\section{Conclusion}
We have characterized the minimal memory structures sufficient to play optimally for regular reachability and safety objectives.
In doing so, we were able to prove that related decision problems about regular objectives were \NP-complete.
Our characterizations were encoded into a SAT solver that automatically generates a minimal memory structure given a finite automaton as an input (link in Section~\ref{sec:intro}).

This article can be seen as one step toward understanding more generally the (chromatic or chaotic) memory requirements of all $\omega$-regular objectives, as well as synthesizing minimal memory structures for them.
The chaotic memory requirements of regular reachability objectives are still unknown, as well as the chromatic memory requirements of larger classes of $\omega$-regular objectives (such as, e.g., the objectives recognized by deterministic B\"uchi automata).

\bibliography{betterArticlesFM}

\end{document}